\documentclass[11pt]{amsart}
\usepackage{amsfonts,amssymb,amsmath}
\usepackage{amsthm}
\usepackage{thmtools}
\usepackage[top=0.9in, bottom=0.9in, left=0.9in, right=0.9in]{geometry}
\usepackage{natbib} 
\usepackage{setspace} 
\usepackage{enumerate}
\usepackage{fancyhdr}
\usepackage{subcaption}
\usepackage[utf8]{inputenc}
\usepackage{multirow,array}
\usepackage{graphicx}
\graphicspath{{figures/}} 
\usepackage{cancel}

\usepackage[T1]{fontenc}
\usepackage[utf8]{inputenc}
\usepackage{times}

\usepackage{import}
\usepackage{xifthen}
\usepackage{pdfpages}
\usepackage{transparent}

\usepackage{sgame,tikz}
\usetikzlibrary{cd}

\usepackage{enumitem}
\usepackage{url}
\usepackage{commath}
\usepackage{mathtools}
\usepackage[toc,title,titletoc,header]{appendix}

\usepackage{titletoc}
\newcommand\DoToC{%
\vskip1cm
\startcontents[sections]
\printcontents[sections]{l}{1}{\textbf{Table of Contents}\vskip3pt\hrule\vskip5pt\setcounter{tocdepth}{2}}
\vskip5pt\hrule\vskip5pt
}

\usepackage[colorlinks,backref=page]{hyperref} 
\usepackage{cleveref}

\usepackage{xcolor}
\usepackage{mathrsfs}
\usepackage{stmaryrd}
\definecolor{denim}{rgb}{0.08, 0.38, 0.74}
\hypersetup{
colorlinks,
linkcolor=denim,
citecolor=magenta,
urlcolor={blue!50!black}
}
\usepackage{float}
\usepackage{threeparttable}
\usepackage{adjustbox}
\usepackage{booktabs}
\usepackage{etoolbox} 
\usepackage{mathtools}
\usepackage[normalem]{ulem}

\allowdisplaybreaks

\renewcommand*{\backref}[1]{}
\renewcommand*{\backrefalt}[4]{%
\ifcase #1 %
\or        (Cited on page~#2.)%
\else      (Cited on pages~#2.)%
\fi}

\allowdisplaybreaks

\usepackage{tikz} \usetikzlibrary{cd}
\usepackage{pgfplots}
\usepgfplotslibrary{patchplots}
\usetikzlibrary{patterns, positioning, arrows}
\pgfplotsset{compat=1.15}

\usepackage{listings,xfp}
\usepackage{anyfontsize}

\makeatletter
{\small 
	\xdef\f@size@small{\f@size}
	\xdef\f@baselineskip@small{\f@baselineskip}
	\normalsize 
	\xdef\f@size@normalsize{\f@size}
	\xdef\f@baselineskip@normalsize{\f@baselineskip}
}
\newcommand{\smalltonormalsize}{%
	\fontsize
	{\fpeval{(\f@size@small+\f@size@normalsize)/2}}
	{\fpeval{(\f@baselineskip@small+\f@baselineskip@normalsize)/2}}%
	\selectfont
}
\makeatother
\numberwithin{equation}{section}

\usepackage[ruled,linesnumbered]{algorithm2e}
\usepackage[bottom]{footmisc}
\usepackage{bbold}

\newcommand{\bigOp}{\mathrm{O}_{\mathbb P}}
\newcommand{\smallOp}{\mathrm{o}_{\mathbb P}}

\newcommand{\one}{\mathbb{1}}

\newcommand{\Var}{\mathrm{Var}}

\setcitestyle{citesep={,},notesep={, }}
\makeatletter
\newcommand\MidSep{, }
\newcommand\LastSep{ and }
\newcommand\CiteList[1]{%
\let\last@elem\relax
\let\last@sep\relax
\@for\@list \equiv #1\do{%
	\ifx\last@elem\relax\else
	\ifx\last@sep\relax
	\def\last@sep{\LastSep}
	\else\MidSep  
	\fi
	\cite{\last@elem}%
	\fi
	\let\last@elem\@list
}
\ifx\last@elem\relax\else
\last@sep\cite{\last@elem}%
\fi
}
\newcommand{\customlabel}[2]{%
\protected@write \@auxout {}{\string \newlabel {#1}{{#2}{\thepage}{#2}{#1}{}} }%
\hypertarget{#1}{#2}
}
\newcommand{\newsectionstyle}{%
\renewcommand{\@secnumfont}{\bfseries}
\renewcommand\section{\@startsection{section}{2}%
	\z@{.5\linespacing\@plus.7\linespacing}{-.5em}%
	{\normalfont\scshape\bfseries}}%
	}
	\let\oldsection\section
	\let\old@secnumfont\@secnumfont
	\newcommand{\originalsectionstyle}{%
\let\@secnumfont\old@secnumfont
\let\section\oldsection
}
\makeatother

\newtheoremstyle{rhtheorem}
{5.5pt plus 2.0pt minus 4.0pt}
{10.0pt plus 2.0pt minus 4.0pt}
{\slshape}
{0pt}
{\bfseries}
{.}
{ }
{\thmname{#1}\thmnumber{ #2}\textnormal{\thmnote{ (#3)}}}

\usepackage{pdflscape}    
\usepackage{bbm}          

\numberwithin{equation}{section}
\theoremstyle{rhtheorem}

\newtheorem{theorem}{Theorem}[section]
\newtheorem{prop}[theorem]{Proposition}
\newtheorem{lemma}{Lemma}[section]

\newtheorem{assumption}{Assumption}[section]
\theoremstyle{definition}
\newtheorem{ex}{Example}[section]
\newtheorem{defn}{Definition}[section]
\newtheorem{remark}{Remark}[section]

\AtEndEnvironment{ex}{~\,$\diamondsuit$}
\AtEndEnvironment{remark}{~\,$\diamondsuit$}

\crefname{theorem}{Theorem}{Theorems}
\Crefname{theorem}{Theorem}{Theorems}
\crefname{defn}{Definition}{Definitions}
\Crefname{defn}{Definition}{Definitions}
\crefname{ex}{Example}{Examples}
\Crefname{ex}{Example}{Examples}
\crefname{cor}{Corollary}{Corollaries}
\Crefname{cor}{Corollary}{Corollaries}
\crefname{assumption}{Assumption}{Assumptions}
\Crefname{assumption}{Assumption}{Assumptions}
\crefname{lemma}{Lemma}{Lemmas}
\Crefname{lemma}{Lemma}{Lemmas}
\crefname{remark}{Remark}{Remarks}
\Crefname{remark}{Remark}{Remarks}
\crefname{prop}{Proposition}{Propositions}
\Crefname{prop}{Proposition}{Propositions}
\crefname{fact}{Fact}{Facts}
\Crefname{fact}{Fact}{Facts}

\newcommand{\E}{\mathbb{E}}
\newcommand{\R}{\mathbb{R}}
\newcommand{\pto}{\overset{\mathbb P}{\to}}
\newcommand{\dto}{\overset{d}{\longrightarrow}}

\newcommand{\iid}{\overset{\mathrm{i.i.d.}}{\sim}}

\newcommand{\Yit}{Y_{i,t}}
\newcommand{\Yitpo}[1]{Y_{i,t}(#1)}

\newcommand{\calG}{\mathcal{G}}
\newcommand{\calGtrg}{\calG_{\mathrm{trg}}}
\newcommand{\calS}{\mathcal{S}}
\newcommand{\calE}{\mathcal{E}}

\newcommand{\calX}{\mathcal{X}}

\newcommand{\Dit}{D_{i,t}}
\newcommand{\ATT}{\mathrm{ATT}}
\newcommand{\CATT}{\mathrm{CATT}}
\newcommand{\ES}{\mathrm{ES}}
\newcommand{\Stack}{\mathbb{S}}
\newcommand{\gps}[2]{p^{#1}(#2)}

\newcommand{\DeltaY}{\Delta Y}
\newcommand{\Prob}{\mathbb{P}}
\newcommand{\ESstack}{\widehat{\ES}_{\mathrm{stack}}}
\newcommand{\Vstack}{V_{\mathrm{stack}}}

\newcommand{\eg}{\textit{e.g.}}

\graphicspath{{figures/}{_Figures/}}

\title{Stacked Triple Differences}
\address{701 Tappan Avenue, Ross School of Business, University of Michigan, Ann Arbor, MI 48109}
\email{rexhsieh@umich.edu}
\author{Meng Hsuan Hsieh}
\date{\today}
\thanks{All errors are mine. I thank Andreas Hagemann, Jeongwon Jang, and Kaspar W\"uthrich for their extensive feedback. I thank Laura Kawano, Joel Slemrod, and all participants at the University of Michigan, Ross School of Business Brown Bag seminar for all their comments that greatly improved this paper.}
\date{\today}

\begin{document}

\maketitle

\begin{abstract}
	Triple differences (DDD) is a workhorse quasi-experimental design in applied economics. But, under staggered adoption, its conventional three-way fixed-effects (3WFE) implementation inherits the interpretation issues now well understood in the difference-in-differences literature.
	I introduce \textit{stacked DDD}. I extend the stacked difference-in-differences approach to the DDD setting by creating self-contained stacks, each consisting of four cells over an event window: treated and clean comparison cohorts, each with treatment-eligible and treatment-ineligible units. Appending these stacks yields a unified dataset for estimating treatment effects.
	I prove that, at each post-treatment event-time, a linear regression with fully saturated fixed-effects applied to the stacked dataset identifies a strictly positive, cell-size-weighted average of stack-level conditional average treatment effects, with stack weights proportional to stack-level cell sizes.
	Building on this characterization, I outline alternative weighting schemes that recover causal estimands with clear interpretations. Stacked DDD complements recent GMM and imputation-based frameworks by trading efficiency for regression-based transparency, pairwise (rather than global) parallel changes-in-trends, and direct control over both the comparison group for each treated unit and the aggregation weights. I provide two empirical illustrations where stacked DDD yields substantially different quantitative conclusions compared to existing procedures.

    \medskip

    \noindent JEL: C18, C21, C23.
        
    \noindent Keywords: Triple Differences, Stacked Triple Differences, Staggered Adoption, Stacking.
\end{abstract}

\section{Motivation and Introduction} \label{sec:intro}

Triple differences (DDD) is among the most widely used quasi-experimental research designs in applied economics. In settings where treatment requires satisfying two criteria---belonging to a group whose members become exposed to a policy, and being eligible within that group---DDD permits both group-specific and eligibility-specific departures from parallel changes-in-trends, and is therefore a more defensible identification strategy than difference-in-differences (DiD) whenever either margin alone is unlikely to follow parallel paths. This flexibility explains its prominence in public, labor, health, and environmental economics; see \citet{olden_triple_2022} for a systematic survey of applications. The corresponding formal econometric theory, however, has not received as much attention compared to DiD, which has been the subject of sustained reexamination \citep[including, but not limited to,][]{goodman-bacon_difference--differences_2021, callaway_difference--differences_2021-1, sant2020doubly, sun_estimating_2021, de_chaisemartin_two-way_2020, borusyak_revisiting_2023, dube_local_2025}. Only recently have \citet{strezhnev_decomposing_2023}, \citet{leventer_triple_2025}, \citet{caron_triple_2025}, and \citet{ortiz-santanna_triple_2025} examined the conventional practice of using DDD. They document that the three-way fixed effects (3WFE) specifications commonly used to implement DDD under staggered adoption target estimands contaminated by forbidden comparisons, mirroring the observation of \citet{goodman-bacon_difference--differences_2021} for DiD, and additional bias arises in the presence of covariates. Furthermore, they develop identification arguments under staggered adoption and heterogeneous treatment effects.

This paper develops the \emph{stacked triple-differences (stacked DDD)} framework, which is complementary to existing approaches in DDD. This is based on the stacking approach now well known in the DiD literature, used by \citet{cengiz_effect_2019, deshpande_who_2019, butters_how_2022, matsuzawa_minimum_2025} and others, and discussed formally by \citet{wing_stacked_2024}. The stacking approach works in the following manner. First, for each treatment-enabling group, design a clean comparison group that consists of units that are not treated within a defined event window. Second, create a dataset, called a stack, for each treated cohort and its clean comparison group. Finally, concatenate all stacks into a unified dataset, which I call the stacked dataset. By construction, in the DDD setting, there are both treatment-eligible and treatment-ineligible units within both treatment and control groups in each stack.

Estimating treatment effects on the stacked dataset then exploits three sources of identifying variation simultaneously. First, within each treated group, eligible units are subject to the policy while ineligible units are not, so differencing them removes any confounding trend common to the group regardless of eligibility status. Second, across the treatment and clean comparison groups, differencing outcomes removes eligibility-specific shocks that would be shared by eligible units in both cohorts. Third, difference post- and pre-treatment outcomes then isolates the treatment effects within each group-by-eligibility cell. By design, the stacking construction delivers a $2 \times 2 \times 2$ sub-experiment for every treatment cohort.\footnote{I will use ``stack'' and ``cohort'' interchangeably throughout, which are the nomenclature of the stacked DiD and staggered DiD literature, respectively.}
As such, units apparently lacking direct identifying power nonetheless play an essential role for identification in stacked DDD: ineligible units in the comparison cohort---neither treated nor eligible---anchor the baseline of the clean comparison group, and it is this anchor that permits the third difference to separate the treatment effect from an eligibility-specific trend. Without them, one cannot distinguish a genuine treatment effect from a group-specific shock to eligible units in the treated cohort.

From a practical perspective, stacked DDD have two advantages. One, each stack is constructed independently for each treatment-enabling group, so researchers are not forced to rely on a single control group. Instead, we can construct entirely different control groups for different treatment-enabling group. This is a useful feature, especially in settings where the identifying assumptions are more credible when applied to a specific set of comparison units (e.g. matched specifically on a vector of baseline covariates and outcomes).
Two, the implementation of stacked DDD is entirely regression-based, exactly like stacked DiD. In particular, the implementation of stacked DDD does not require imputation \citep{borusyak_revisiting_2023}, or optimal weighting or regression adjustment \citep{ortiz-santanna_triple_2025}.

A central question in the stacked DDD framework is what causal estimands does this approach identify. It is of interest to characterize the estimand that an event-study OLS regression targets when applied on the stacked dataset.
As a starting point, I first derive the estimand targeted by the standard DDD event-study OLS regression with 3WFE (unit, group-by-time, and eligibility-by-time fixed effects), and where the treatment timings are staggered. I show that its event-study coefficient at event-time $e$ is a linear combination of cohort- and event-time-specific conditional average treatment effects on the treated (ATTs): the own-event-time cohort weights sum to one but are not guaranteed to be non-negative for individual treatment cohorts, and the weights on {other} event-times $e' \neq e$ are generically nonzero. Only under homogeneous treatment effects across cohorts does this estimand collapse to the conventional ATT; aside from this edge case, the coefficient admits negative-weight contamination of the kind first documented for DiD by \citet{goodman-bacon_difference--differences_2021}.

In contrast, I show that the fully saturated OLS regression estimated on the stacked dataset targets a strictly positive, cell-size-weighted average of cohort-specific ATTs. Fully saturated, in the context of stacked DDD, means including stack-by-treatment status-by-time, and stack-by-eligibility-by-time fixed effects, in addition to event-time treatment indicators. The implied weights from fully saturated OLS regression are a function of cell sizes within each stack, therefore this estimator assigns large weights to stacks whose stack-level treatment effect estimates are likely more precise.

However, it is unclear if the weights implied by fully saturated OLS regression estimated on the stacked dataset lead to a causally interesting aggregation of cohort-level treatment effects. As such, building on the aforementioned characerization, I describe alternative aggregation schemes that can be applied to stacks to target estimands with more explicit population causal interpretations: cohort-size weights, which deliver a per-unit average treatment effect on the treated; and equal weights, which deliver a simple average of cohort-level ATTs. The stacked DDD framework, by design, addresses a concern raised by \citet{de_chaisemartin_two-way_2020} and \citet{de_chaisemartin_difference--differences_2024} that pooled estimators deliver estimands whose implicit weights are opaque and not necessarily non-negative.

The stacked design also weakens the identifying assumption: parallel changes-in-trends need hold only pairwise within each stack, between the treated group and its clean comparison group, and not globally across all pairs of treated and control groups. This is a weaker restriction than the global DDD parallel changes-in-trends assumed by pooled estimators. 
The stacked estimator is exactly recovered by a fully saturated OLS regression on the stacked dataset. I further show that the cluster-robust standard errors, {clustered at the level of treatment}, delivers valid inference in this setting. Specifically, I show that the cluster-robust standard errors automatically account for the cross-stack dependence induced by the same comparison units appearing in multiple sub-experiments. Therefore, I sidestep the need for deriving analytical standard error corrections \citep[e.g.][]{abadie_when_2022, mackinnon_cluster-robust_2023} when fully saturated OLS regression is used.
The main trade-off of stacked DDD is that while fully saturated OLS regression corresponds to a perfectly just-identified GMM estimator within each stack, this estimator is generally not efficient because, generally, some control units may be shared across multiple stacks.
Ultimately, I view stacked DDD as a complementary approach to the GMM procedure of \citet{ortiz-santanna_triple_2025} and the imputation-based approach of \citet{borusyak_revisiting_2023}: stacked DDD trades efficiency for regression-based transparency, pairwise rather than global parallel changes-in-trends, and explicit control over both the comparison units for each treatment-enabling group (i.e. within each stack) and the aggregation weights.

Two empirical applications illustrate the practical consequences from applying stacked DDD. First, applied to \citet{hansen_national_2023}'s study of genetically modified crop adoption, the stacked DDD estimator confirms the qualitative finding of positive yield effects but produces point estimates substantially smaller than the pooled 3WFE specification. Second, applied to \citet{shastry_vaccine_2025}'s evaluation of Gavi's vaccine program, the stacked DDD estimator reproduces the conclusion that Gavi-funded vaccines reduced child mortality from related causes, but attenuates the magnitude of the original study's estimates by roughly a third in the full sample and aligns with the authors' preferred vital-registry subset at a reduction of approximately a quarter of a death per 1,000 live births. In both applications, the stacked decomposition discloses which stack-level conditional average treatment effect on the treated drive the aggregated effect.

The remainder of the paper proceeds as follows. Section \ref{sec:framework} develops the notation and identifying assumptions. 
Section \ref{sec:hw_application} examines the current practice of 3WFE OLS regressions.
Section \ref{sec:regression} defines the stacked sub-experiment, the target estimands, and the within-stack and pooled regression specifications that recover them.
Section \ref{sec:identification} establishes identification under the DDD PCT restriction. Section \ref{sec:asymptotics} develops the asymptotic theory and inference. Section \ref{sec:empirical} reports the two illustrations of stacked DDD in practice. All proofs and extended discussions are collected in the appendices.

\section{Setup and Assumptions}\label{sec:framework}

\subsection{Notation}\label{sub:setup}

I observe panel data on $n$ units indexed by $i$ over a common set of $T$ time indexed by $t$. Each unit is observed in a subset $\mathcal{T}_i \subseteq \{1, \ldots, T\}$; the panel need not be balanced. Throughout, $T$ is fixed and all asymptotics are with respect to $n \to \infty$.

Each unit belongs to a treatment-enabling group $S_i \in \calS$, where
\begin{equation}\label{eq:calS}
	\calS \subseteq \{2, \ldots, T\} \cup \{\infty\}.
\end{equation}
The variable $S_i$ records the period at which unit $i$'s group first becomes exposed to the policy; a unit with $S_i = g$ for $g \in \{2, \ldots, T\}$ belongs to a group whose treatment becomes enabled in period $g$, and a unit with $S_i = \infty$ belongs to a group whose treatment is never enabled within the sample. Let $\calGtrg = \calS \setminus \{\infty\}$ denote the set of treated groups. Within each group, an eligibility indicator $Q_i \in \{0,1\}$ records whether the unit is itself eligible for treatment; I take $Q_i$ to be time-invariant throughout the panel, though the framework extends straightforwardly to time-varying eligibility. Treatment is received only by units in an activated enabling group who are themselves eligible, so the observed treatment status is
\begin{equation}\label{eq:treatment}
	\Dit = \one\{t \geq S_i\} Q_i ~,
\end{equation}
implying that units with $Q_i = 0$ are never treated irrespective of their group, and units with $S_i = \infty$ are never treated irrespective of their eligibility. 
Let $X_i \in \calX \subseteq \R^d$ denote a vector of pre-treatment, time-invariant covariates available for adjustment across groups and eligibility categories. Finally, I write $Y_{i,t}$ to denote the observed outcome of interest, and I will later use the notation $\DeltaY_{i,t} \equiv Y_{i,t} - Y_{i,g-1}$ to denote the long differenced outcome relative to the baseline period (to be defined clearly later).

Following \citet{robins_new_1986}, I index potential outcomes by treatment timing. Write $\Yitpo{g}$ for the potential outcome at time $t$ if unit $i$ is first treated at $g \in \calGtrg$, and $\Yitpo{\infty}$ for the potential outcome under no treatment.
By convention, I set $\Yitpo{g} \equiv \Yitpo{\infty}$ for all $t < g$, so that treatment-timing indexing is active only at and after first exposure.
The observed outcome satisfies
\begin{equation}\label{eq:observed_outcome}
	\Yit = \sum_{g \in \calGtrg} \one\{S_i = g\} Q_i \Yitpo{g} + \left(1 - \sum_{g \in \calGtrg} \one\{S_i = g\} Q_i\right) \Yitpo{\infty},
\end{equation}
so that units that actually got treated ($S_i = g$, $Q_i = 1$) realize $\Yitpo{g}$ while all remaining units---including ineligible units in treated groups and every unit in the never-treated group---realize $\Yitpo{\infty}$. This formulation embeds two maintained restrictions: treatment effects operate through the timing of onset, and ineligible units within treated groups are unaffected by the policy. The observed data for unit $i$ are therefore
\begin{equation}\label{eq:observed_data}
	W_i = \left(\{Y_{i,t}\}_{t \in \mathcal{T}_i},  S_i,  Q_i,  X_i\right),
\end{equation}
and $\{W_i\}_{i=1}^n$ is taken to be an i.i.d.\ sample from the population distribution of $W$.

DDD is ubiquitous in practice; I provide a few examples below.

\begin{ex}[ACA Medicaid expansion]\label{rmk:medicaid_setup}
	The Affordable Care Act (ACA) expanded Medicaid eligibility to adults with incomes below 138\% of the Federal Poverty Level, with states adopting the expansion at different times beginning in 2014 \citep{courtemanche_early_2017, kaestner_effects_2017}. In this application, $S_i$ records the year in which state $i$'s Medicaid expansion took effect, with $S_i = \infty$ for states that had not expanded by the end of the sample period. Eligibility is defined at the individual level: $Q_i = 1$ for adults with incomes below 138\% FPL (the newly eligible population) and $Q_i = 0$ for higher-income adults in the same state who are ineligible for Medicaid. Outcomes include health insurance coverage rates, emergency department visits, and health expenditures. The three differences are: (i) pre- versus post-expansion, (ii) expansion versus non-expansion states, and (iii) income-eligible versus income-ineligible populations.
\end{ex}

\begin{ex}[WTO accession and trade]\label{rmk:wto_setup}
	Countries have acceded to the WTO/GATT at different times since 1948, creating a staggered adoption setting for trade liberalization \citep{strezhnev_decomposing_2023}. In this application the unit $i$ is an \emph{ordered exporter--importer directed pair} $i = (e, m)$, with $e$ the exporting country and $m$ the importing country, so that the pair identity is fixed over the sample. $S_i$ records the year of country $i$'s WTO accession, with $S_i = \infty$ for non-member countries. Eligibility $Q_i \in \{0,1\}$ is an attribute of the directed pair: $Q_i = 1$ if the importer $m$ is also a WTO member throughout the sample window (so the exchange qualifies for Most Favored Nation treatment), and $Q_i = 0$ otherwise. Because both the exporter's WTO accession year $S_i$ and the importer's membership status are \emph{time-invariant} characteristics of the directed pair, $Q_i$ is time-invariant as required by \Cref{sub:setup}. The outcome $Y_{i,t}$ is the log of bilateral trade flows. The three differences are: (i) pre- versus post-accession, (ii) new-member versus non-member countries, and (iii) eligible (both-member) versus ineligible (one-member) trading pairs.
\end{ex}

\begin{ex}[EPA nonattainment designations]\label{rmk:epa_setup}
	Under the Clean Air Act, the U.S.\ Environmental Protection Agency (EPA) designates counties as ``nonattainment'' when they fail to meet national ambient air quality standards, with designations occurring at different times for different pollutants and counties \citep{greenstone_impacts_2002, walker_transitioning_2013}. In this application, $S_i$ records the year of county $i$'s nonattainment designation, with $S_i = \infty$ for counties that remained in attainment throughout the sample period. Eligibility is defined at the industry level: $Q_i = 1$ for polluting manufacturing industries subject to heightened regulatory scrutiny under nonattainment, and $Q_i = 0$ for non-polluting industries in the same county. Outcomes include manufacturing employment, plant openings and closings, and total factor productivity. The three differences are: (i) pre- versus post-designation, (ii) nonattainment versus attainment counties, and (iii) regulated versus unregulated industries.
\end{ex}

\subsection{Identification Assumptions}\label{subsec:assumptions}

I impose four assumptions in the ensuing identification analyses. The first three assumptions are standard in the difference-in-differences literature. The fourth is specific to the DDD setting.

\begin{assumption}[Random Sampling]\label{as:sampling}
	The observed data $\{W_i\}_{i=1}^n$ are independent and identically distributed:
	\begin{equation}\label{eq:sampling}
		\{W_i\}_{i=1}^n \iid F_W ~,
	\end{equation}
	where $W_i = (\{Y_{i,t}\}_{t=1}^T, S_i, Q_i, X_i) \in \R^T \times \calS \times \{0,1\} \times \calX$ and $F_W$ is the population distribution with $\E\|W_i\|^2 < \infty$.
\end{assumption}

\Cref{as:sampling} requires that the observed data consist of an i.i.d.\ draw from the joint distribution of outcomes across all time periods, the treatment-enabling group indicator $S_i$, the eligibility indicator $Q_i$, and pre-treatment covariates $X_i$. This assumption is standard in the panel data literature and is automatically satisfied when units are sampled randomly from the population of interest. It rules out spatial or network dependence across units but can be relaxed to allow for cluster-level dependence when inference is conducted at the cluster level.

\begin{assumption}[Overlap]\label{as:overlap}
	Every cell in the group-by-eligibility partition has positive probability:
	\begin{equation}\label{eq:overlap}
		\min_{\substack{s \in \calS,\\ q \in \{0,1\}}} \Prob(S_i = s,  Q_i = q) > 0~.
	\end{equation}
\end{assumption}

\Cref{as:overlap} ensures that every cell in the $S \times Q$ partition of units is populated. This is the DDD analogue of the standard overlap (or positivity) condition in the treatment effects literature, stated unconditionally. It requires that no enabling group or eligibility status has zero probability in the population. When covariates are incorporated (\Cref{app:covariates}), the overlap condition is strengthened to a conditional version requiring that the generalized propensity scores $\Prob(S_i = s, Q_i = q \mid X_i = x)$ are bounded away from zero uniformly over the covariate support.

\begin{assumption}[No Anticipation]\label{as:noanticipation}
	For all $g \in \calGtrg$ and $t < g$:
	\begin{equation}\label{eq:noanticipation}
		\Yitpo{g} = \Yitpo{\infty} \quad \text{almost surely.}
	\end{equation}
\end{assumption}

\Cref{as:noanticipation} requires that in periods before treatment is enabled for cohort $g$, the potential outcomes under treatment and under no treatment coincide. That is, units do not alter their behavior in anticipation of future treatment. This assumption is standard in the difference-in-differences literature \citep{callaway_difference--differences_2021-1} and is plausible in settings where the exact timing of treatment enablement is not known in advance or where institutional constraints prevent anticipatory behavior.

\begin{assumption}[DDD Parallel Changes-in-Trends (DDD-PCT)]\label{as:dddpct}
	For all $g \in \calGtrg$, all valid comparison groups $g_c > g$, and all $t \in \{2, \ldots, T\}$ with $t \leq g_c$:
	\begin{align}
		&\E[\Yitpo{\infty} - Y_{i,t-1}(\infty) \mid S_i = g, Q_i = 1] - \E[\Yitpo{\infty} - Y_{i,t-1}(\infty) \mid S_i = g, Q_i = 0] \notag \\
		&= \E[\Yitpo{\infty} - Y_{i,t-1}(\infty) \mid S_i = g_c, Q_i = 1] - \E[\Yitpo{\infty} - Y_{i,t-1}(\infty) \mid S_i = g_c, Q_i = 0]~. \label{eq:dddpct}
	\end{align}
\end{assumption}

\begin{assumption}[No Spillover]\label{as:nospillover}
	For every $g \in \calGtrg$ and every unit $i$ with $S_i = g$ and $Q_i = 0$, and for all $t \in \{1,\ldots,T\}$,
	\begin{equation}\label{eq:nospillover}
		Y_{i,t} = \Yitpo{\infty}~, \quad \text{almost surely.}
	\end{equation}
\end{assumption}

\begin{assumption}[Admissibility of Comparison Cohorts]\label{as:admissibility}
	For every treated cohort $g \in \calGtrg$, the set of admissible comparison cohorts
	\begin{equation}\label{eq:admissibility}
        \mathcal{C}(g) \equiv \bigl\{ g_c \in \calS : g_c > g + K ~\text{ and }~ \Prob(S_i = s, Q_i = q)>0 ~\text{ for all }~ s \in \{g, g_c\} \text{ and } q \in \{0,1\} \bigr\}
	\end{equation}
	is nonempty. 
\end{assumption}

\Cref{as:dddpct} is the core identifying assumption of the stacked DDD framework. It requires that the {difference} in untreated outcome trends between eligible ($Q_i = 1$) and ineligible ($Q_i = 0$) units is the same across the treated cohort $g$ and its comparison cohort $g_c$. This is a weaker condition than the parallel changes-in-trends assumptions required in standard DiD settings.
In particular, \Cref{as:dddpct} permits {group-specific trends}  $\E[\Yitpo{\infty} - Y_{i,t-1}(\infty) \mid S_i = g]$ to differ freely across enabling groups $g$, which is precisely the type of heterogeneity that invalidates parallel changes-in-trends in DiD settings. It also permits {eligibility-specific trends} to differ between eligible and ineligible units within the same treatment-enabling group. What is required is that this within-group differential trend be the {same} across groups.

\Cref{as:dddpct} as stated is already pairwise: it is indexed by a single treated cohort $g$ and a single comparison cohort $g_c$, and the identifying restriction is imposed one pair at a time. A pooled DDD regression compares eligible units against a single fixed comparison pool common to every treated cohort, which amounts to requiring \eqref{eq:dddpct} to hold simultaneously across every $(g, g_c)$ pair the pooled estimator mixes; that is the joint restriction the stacked design relaxes. The stacked estimator, by contrast, invokes \eqref{eq:dddpct} only for the $(g, g_c(g))$ pairs actually used to build each stack, and different stacks may use different $g_c$. 

\Cref{as:nospillover} is the SUTVA-type restriction that makes the within-group comparison of eligible and ineligible units a clean third difference. It fails under within-group spillovers (e.g. general equilibrium effects) that would alter the outcomes of ineligibles once eligibles are treated. When this assumption is suspect, the stacked DDD design must either be reframed around a comparison group that is unambiguously outside the spillover's reach, or supplemented with a spillover-robust correction such as partial-population designs.

Finally, \Cref{as:admissibility} states that for every treated cohort $g$, there is at least one cohort $g_c$ that (i) remains untreated throughout the event window $[g-L, g+K]$, and (ii) has positive mass in every $(S,Q)$-cell, and the combination of the two conditions guarantees that DDD within stack $g$ is well-defined. When there are never-treated groups, condition (i) holds automatically.

\begin{remark}[Conditional version]\label{rem:conditional_cpt}
	When pre-treatment covariates $X_i$ are available, \Cref{as:dddpct} can be strengthened to hold conditional on $X_i = x$ for almost all $x \in \calX$.
	This extension is discussed in \Cref{app:covariates}.
\end{remark}

\section{The Practice of Using Triple-Difference Event-Study Specification}\label{sec:hw_application}

\citet[][hereafter HW]{hansen_national_2023} study the impact of genetically modified (GM) crop adoption on agricultural yields using a triple-differences design with staggered adoption across countries and crops. Their event-study specification is a leading example of a pooled DDD regression.

\subsection{Mapping to the DDD Framework}\label{sub:hw_mapping}

In HW, the observational unit is a country--crop pair $(i,c)$, observed over years $t = 1, \ldots, T$. $Y_{ict}$ denotes an outcome of interest---for example, log agricultural yield. The event-study specification they use is
\begin{equation}\label{eq:hw_original}
	Y_{ict} = \delta_{it} + \gamma_{ci} + \lambda_{ct} + \sum_{\substack{j = -L \\ j \neq -1}}^{K} \alpha_j \one\{t - E_{ic} = j\} + \varepsilon_{ict}~,
\end{equation}
where $\delta_{it}$ are country-by-year fixed effects, $\gamma_{ci}$ are crop-by-country fixed effects, $\lambda_{ct}$ are crop-by-year fixed effects, $E_{ic}$ is the year when the first GM varieties of crop $c$ are harvested in country $i$, and the $\alpha_j$ are the event-study parameters of interest. The normalization is $\alpha_{-1} = 0$, anchoring the event study at the last pre-treatment period.

This specification maps into the framework of Section \ref{sec:framework} as follows. Relabel the country--crop pair as the ``unit,'' writing $i$ for a generic unit. Define the treatment-enabling group as the GM approval year, $S_i = E_{ic}$, and the eligibility indicator as $Q_i = \one\{\text{crop } c \text{ is a GM-eligible variety}\}$. The treatment indicator is $\Dit = \one\{t \geq S_i\} Q_i$, which turns on when the country has adopted GM for that crop {and} the crop is GM-eligible.

Under this relabeling, two features are worth noting.
First, the country-by-year fixed effects $\delta_{it}$ are finer than country-level fixed effects but are \emph{non-nested} with the group-by-time fixed effects $\delta_{S_i,t}$ along the crop dimension, because the group index $S_i = E_{ic}$ varies \emph{within} country (i.e. different crops in the same country may have different GM approval years). Country-by-year effects therefore absorb only the country-common piece of $\delta_{S_i,t}$; the residual group-by-time variation that differs across crops within a country is left unabsorbed.
Second, the crop-by-year fixed effects $\lambda_{ct}$ are strictly finer than (and therefore nest the variation exploited by) the binary eligibility-by-time effects $\eta_{Q_i,t}$, since crop identity is more granular than the binary eligible/ineligible classification.
Therefore, the HW specification in \eqref{eq:hw_original} therefore includes {all three} sets of two-way interactions that the correct DDD specification requires. However, despite this correct structure, the specification pools the event-time indicators $\one\{t - E_{ic} = j\}$ across all treatment cohorts, thereby imposing a common event-study path $\alpha_j$ across cohorts potentially with heterogeneous treatment effects.

In the notation of this paper, the HW specification \eqref{eq:hw_original} is the event-study version of the correct DDD regression \eqref{eq:correct_ddd}
\begin{equation}\label{eq:hw_paper_notation}
	Y_{i,t} = \alpha_i + \delta_{S_i, t} + \eta_{Q_i, t} + \sum_{\substack{j = -L \\ j \neq -1}}^{K} \alpha_j R_j(i,t) + \varepsilon_{i,t}~,
\end{equation}
where $R_j(i,t) = \one\{t - S_i = j\} Q_i$ is the DDD event-time indicator and I have absorbed the finer country-by-year and crop-by-year effects into their coarser counterparts for notational economy. The critical difference from the 3WFE event-study specification \eqref{eq:3wfe_eventstudy} is the inclusion of $\eta_{Q_i,t}$. I now characterize the estimand $\alpha_j$.

\subsection{Estimand of the Event-Study Specification \eqref{eq:hw_original}}\label{sub:hw_plim}

For any variable $Z_{i,t}$, let $\ddot{Z}_{i,t}$ denote its residual after projecting it onto the orthogonal complement of the space spanned by unit effects $\alpha_i$, group-by-time effects $\delta_{S_i,t}$, and eligibility-by-time effects $\eta_{Q_i,t}$. This is given by
\begin{equation}\label{eq:hw_demean}
	\ddot{Z}_{i,t} = Z_{i,t} - \overline{Z}_{i,\cdot} - \overline{Z}_{S_i,t} - \overline{Z}_{Q_i,t} + \overline{Z}_{S_i,\cdot} + \overline{Z}_{Q_i,\cdot} + \overline{Z}_{\cdot,t} - \overline{Z}_{\cdot\cdot}~,
\end{equation}
where the barred quantities are group means as defined in Section \ref{sub:missing_fe}. This is the inclusion-exclusion formula for three-way demeaning, to be stated later in \eqref{eq:fwl_correct}.

For each cohort $g \in \calGtrg$ and event-time $\ell$, define the cohort-specific event-time indicator $R_{g,\ell}(i,t) = \one\{S_i = g, Q_i = 1, t = g + \ell\}$ and the auxiliary regression
\begin{equation}\label{eq:hw_aux_regression}
	R_{g,\ell}(i,t) = \alpha_i + \delta_{S_i,t} + \eta_{Q_i,t} + \sum_{\substack{e = -L \\ e \neq -1}}^{K} \omega_{g,\ell}^{e,\star} R_e(i,t) + \upsilon_{i,t}~.
\end{equation}
The coefficient $\omega_{g,\ell}^{e,\star}$ measures how much of the variation in the cohort-specific indicator $R_{g,\ell}$ is captured by the aggregate event-time-$e$ indicator $R_e$ after partialling out all three sets of fixed effects.
These weights are estimable from the data without any assumptions on the outcome model. The following results are described in the following object.

Following \citet{sun_estimating_2021}, I define the \textit{cohort-average treatment on the treated} at event-time $e$ for cohort $g$ as
\begin{equation}\label{eq:catt}
	\CATT(g, e) \equiv \E\left[Y_{i,g+e}(g) - Y_{i,g+e}(\infty) \mid S_i = g, Q_i = 1\right]~.
\end{equation}
The CATT parameterization indexes treatment effects by (cohort, exposure duration) rather than (cohort, time), which is natural for studying how effects evolve with time since treatment onset, and the HW event-study specification \eqref{eq:hw_original} is parameterized in event-time.

\begin{prop}[Weights from auxiliary regression]\label{prop:hw_es_decomp}
	The coefficients $\omega_{g,\ell}^{e,\star}$ from the auxiliary regression \eqref{eq:hw_aux_regression} have the explicit form
	\begin{equation}\label{eq:hw_weights_explicit}
		\omega_{g,\ell}^{j,\star} = \mathbf{e}_j^{\top} \left(\sum_{t=1}^{T} \E \left[\ddot{\mathbf{R}}_{i,t} \ddot{\mathbf{R}}_{i,t}^{\top}\right]\right)^{-1} \E\left[\ddot{\mathbf{R}}_{i,g+\ell}  R_{g,\ell}(i, g+\ell)\right]~,
	\end{equation}
	where $\ddot{\mathbf{R}}_{i,t} = (\ddot{R}_e(i,t))_{e \neq -1}$ is the vector of three-way-demeaned event-time indicators, $\mathbf{e}_j$ is the unit vector selecting event-time $j$, and $\ddot{}$ denotes the three-way demeaning operator \eqref{eq:hw_demean}. These weights satisfy the following properties:
	\begin{enumerate}
		\item[{(i)}] own-period weights sum to one, $\sum_{g \in \calGtrg} \omega_{g,j}^{j,\star} = 1$~;
		\item[{(ii)}] other included periods sum to zero, $\sum_{g \in \calGtrg} \omega_{g,\ell}^{j,\star} = 0$ for each $\ell \neq j$, $\ell \neq -1$~;
		\item[{(iii)}] excluded period sums to negative one, $\sum_{g \in \calGtrg} \omega_{g,-1}^{j,\star} = -1$~;
		\item[{(iv)}] never-treated units receive zero weight, $\omega_{\infty,\ell}^{j,\star} = 0$ for all $j, \ell$~.
	\end{enumerate}
\end{prop}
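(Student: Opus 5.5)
The explicit form \eqref{eq:hw_weights_explicit} follows from population Frisch--Waugh--Lovell applied to the auxiliary regression \eqref{eq:hw_aux_regression}. By FWL, the coefficient vector $\boldsymbol\omega_{g,\ell}^{\star}$ equals the population OLS coefficient from regressing $\ddot R_{g,\ell}(i,t)$ on $\ddot{\mathbf R}_{i,t}$, pooled over $t$. This gives
\[
\boldsymbol\omega_{g,\ell}^{\star}=\Bigl(\sum_t \E[\ddot{\mathbf R}_{i,t}\ddot{\mathbf R}_{i,t}^\top]\Bigr)^{-1}\sum_t \E[\ddot{\mathbf R}_{i,t} R_{g,\ell}(i,t)],
\]
where I use idempotence of the annihilator to replace $\ddot R_{g,\ell}$ by $R_{g,\ell}$ in the cross moment. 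Since $R_{g,\ell}(i,t)$ vanishes unless $t=g+\ell$, the sum over $t$ collapses to the single term shown in \eqref{eq:hw_weights_explicit}. Invertibility of the Gram matrix needs to be noted. It holds under \Cref{as:overlap}, provided the event-time indicators are not collinear with the fixed effects once $-1$ is omitted; I would state this as a maintained rank condition.

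Properties (i)--(iii) come from linearity of OLS in the dependent variable. The key identity is $\sum_{g\in\calGtrg} R_{g,\ell}(i,t)=R_\ell(i,t)$ for every $\ell$, since $R_\ell = \one\{t-S_i=\ell\}Q_i$ and never-treated units cannot have $t-S_i$ finite. Summing the auxiliary regressions over $g$ therefore gives the regression of $R_\ell$ on the fixed effects and $\{R_e\}_{e\neq-1}$.
\begin{itemize}
\item If $\ell\neq-1$, then $R_\ell$ is itself a regressor, so its projection is exact. The coefficient vector is $\mathbf e_\ell$, which gives weight one at $j=\ell$ (property (i)) and zero at $j\neq\ell$ (property (ii)).
\item If $\ell=-1$, I would use the collinearity $\sum_{e} R_e(i,t)=Q_i\one\{S_i<\infty\}$, summing over all event-times in the window. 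For this I assume the window $[-L,K]$ covers all observed relative times, or else that binned endpoints are used. This quantity is time-invariant and is absorbed by $\alpha_i$. Hence $R_{-1}=Q_i\one\{S_i<\infty\}-\sum_{e\neq-1}R_e$, and the coefficients are $-1$ on every $R_e$, which is property (iii).
\end{itemize}
This step is the main obstacle. The claim in (iii) relies on the event window exhausting the support of $t-S_i$. Otherwise the leftover indicators (those outside the window) are not absorbed, and (iii) fails. I will make this condition explicit and check that it is implicit in the specification.

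Property (iv) is nearly definitional. For $g=\infty$, $R_{\infty,\ell}(i,t)=\one\{S_i=\infty,Q_i=1,t=\infty+\ell\}\equiv0$, because $\infty+\ell\notin\{1,\dots,T\}$. The dependent variable is identically zero, so its projection coefficients are zero. If instead one reads (iv) as saying that never-treated units contribute nothing, I would also note that $\ddot{\mathbf R}$ entries for these units do not enter the numerator, since $R_{g,\ell}$ vanishes there.

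Finally, I would assemble these pieces and remark that nothing requires sign restrictions. This is why individual $\omega^{j,\star}_{g,j}$ may be negative, consistent with the discussion that follows the statement.
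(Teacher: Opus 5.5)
Your proposal follows the paper's proof essentially step for step: FWL (with idempotence of the annihilator) gives the explicit form, linearity of OLS in the dependent variable together with $\sum_{g\in\calGtrg}R_{g,\ell}=R_\ell$ gives (i)--(ii), absorption of $\sum_\ell R_\ell(i,t)=Q_i\one\{S_i\in\calGtrg\}$ by the unit effects gives (iii), and $R_{\infty,\ell}\equiv 0$ gives (iv). You also make explicit two conditions the paper leaves implicit: a rank condition for invertibility of the Gram matrix, and, for (iii), an event window that exhausts the support of $t-S_i$ (or binned endpoints). The second is genuinely needed, since the paper's own toy example with $L=1$, $K=0$ has excluded-period weights $\omega^{0,\star}_{g_1,-1}=0$ and $\omega^{0,\star}_{g_2,-1}=-1/2$, which sum to $-1/2$ rather than $-1$.
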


\eqref{eq:hw_weights_explicit} follows from a direct application of the Frisch--Waugh--Lovell (FWL) theorem to \eqref{eq:hw_aux_regression}: $\omega_{g,\ell}^{j,\star}$ is the coefficient on the aggregate event-time-$j$ indicator $R_j$ when $R_{g,\ell}$ is regressed on the full vector of three-way-demeaned event-time indicators $\ddot{\mathbf{R}}_{i,t}$, so each weight is a linear functional of second moments of the $\ddot{R}_e$. There is no assumption on the outcome model, and property~{(iv)} is immediate: a never-treated unit has $R_\ell(i,t) \equiv 0$ for all $\ell$, so $\ddot{R}_\ell(i,t)$ aggregates to zero contribution in the numerator of \eqref{eq:hw_weights_explicit} at that cell. Additional remarks on the weights are discussed in \Cref{app:addl-remarks}.

\begin{remark}[{Variance-ratio representation of the weights}]  \label{rem:variance-ratio-rep}
	The matrix-form weight \eqref{eq:hw_weights_explicit} can be expressed as a ratio of variances using a second application of the FWL theorem. By FWL applied within the auxiliary regression \eqref{eq:hw_aux_regression}, the coefficient $\omega_{g,\ell}^{j,\star}$ on $R_j$ equals the bivariate regression coefficient obtained by first partialling out all other regressors (the three sets of fixed effects and all event-time indicators $R_{e'}$ for $e' \neq j$, $e' \neq -1$) from both the dependent variable $R_{g,\ell}$ and the regressor of interest $R_j$. Let $\widetilde{R}_j(i,t)$ denote the partial residual of $R_j(i,t)$ after projecting out $(\alpha_i, \delta_{S_i,t}, \eta_{Q_i,t})$ and all other event-time indicators $\{R_{e'}\}_{e' \neq j,   e' \neq -1}$. Then,
	\begin{equation}\label{eq:hw_variance_ratio}
		\omega_{g,\ell}^{j,\star} = \frac{\displaystyle\sum_{t=1}^{T} \E\left[\widetilde{R}_j(i,t) R_{g,\ell}(i,t)\right]}{\displaystyle\sum_{t=1}^{T} \E\left[\widetilde{R}_j(i,t)^2\right]}~.
	\end{equation}
	Since $R_{g,\ell}(i,t) = \one\{S_i = g, Q_i = 1, t = g + \ell\}$ is nonzero only at time $t = g + \ell$, the numerator collapses to a single time period
	\begin{equation}\label{eq:hw_numerator}
		\sigma_{j; g,\ell}^{\star} \equiv \sum_{t=1}^{T} \E\left[\widetilde{R}_j(i,t) R_{g,\ell}(i,t)\right] = \E\left[\widetilde{R}_j(i,  g+\ell) \one\{S_i = g, Q_i = 1\}\right] = p_{g,1} \widetilde{r}_{j; g,1}^{(g+\ell)}~,
	\end{equation}
	where $p_{g,1} = \Prob(S_i = g, Q_i = 1)$ is the population share of treated-eligible units in cohort $g$ and $\widetilde{r}_{j; g,1}^{(g+\ell)}$ is the value of the partial residual $\widetilde{R}_j$ evaluated at cell $(S_i = g, Q_i = 1)$ at time $t = g + \ell$. (The partial residual is constant within cells at a given time, since both $R_j$ and all projecting-out variables are cell-level indicators.) The denominator is the {partial residual variance}
	\begin{equation}\label{eq:hw_denominator}
		\sigma_j^{2,\star} \equiv \sum_{t=1}^{T} \E\left[\widetilde{R}_j(i,t)^2\right] = \sum_{t=1}^{T} \sum_{s \in \calS} \sum_{q \in \{0,1\}} p_{s,q} \left(\widetilde{r}_{j; s,q}^{(t)}\right)^2~,
	\end{equation}
	the total variation in the event-time-$j$ indicator that is {not} explained by the three sets of fixed effects or by any other event-time indicator. Therefore,
	\begin{equation}\label{eq:hw_weight_ratio}
		{\omega_{g,\ell}^{j,\star} = \frac{p_{g,1}\widetilde{r}_{j; g,1}^{(g+\ell)}}{\sigma_j^{2,\star}} }
	\end{equation}
	is the explicit form of the weight.
\end{remark}

\subsection{Estimand Under Additional Assumptions}\label{sub:hw_progressive}

I now add identifying assumptions progressively and analyze the estimands.

\begin{prop}[Under DDD-PCT only]\label{prop:hw_es_pt}
	Under {\Cref{as:dddpct}}, the population regression coefficient $\alpha_j$ is a linear combination of $\CATT(g,\ell)$ with the weights from Proposition~{\ref{prop:hw_es_decomp}}
	\begin{equation}\label{eq:hw_contamination}
		\alpha_j = \sum_{g \in \calGtrg}\sum_{\ell \neq -1} \omega_{g,\ell}^{j,\star} \CATT(g,\ell)~.
	\end{equation}
	Cross-period contamination {(}nonzero weights $\omega_{g,\ell}^{j,\star}$ for $\ell \neq j${)} and cross-cohort contamination {(}possibly negative own-period weights $\omega_{g,j}^{j,\star}$ for individual $g${)} are both present.
\end{prop}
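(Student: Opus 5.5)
Let $\alpha_j$ be the population coefficient on $R_j$ in \eqref{eq:hw_paper_notation}. By the Frisch--Waugh--Lovell theorem it equals
\[
\alpha_j=\mathbf e_j^\top\Big(\sum_t \E[\ddot{\mathbf R}_{i,t}\ddot{\mathbf R}_{i,t}^\top]\Big)^{-1}\sum_t \E[\ddot{\mathbf R}_{i,t}Y_{i,t}].
\]
This is the same linear operator as in \eqref{eq:hw_weights_explicit}, applied to $Y_{i,t}$ instead of $R_{g,\ell}$. The plan is to decompose $Y_{i,t}$, using \eqref{eq:observed_outcome} and \Cref{as:nospillover}, as
\[
Y_{i,t}=Y_{i,t}(\infty)+\sum_{g\in\calGtrg}\sum_{\ell} R_{g,\ell}(i,t)\,\tau_{i,g,\ell},
\qquad \tau_{i,g,\ell}=Y_{i,g+\ell}(g)-Y_{i,g+\ell}(\infty).
\]
By \Cref{as:noanticipation} only $\ell\ge 0$ contributes, so the terms with $\ell<0$ (including $\ell=-1$) vanish. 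The operator is linear, so $\alpha_j$ splits into two pieces: one from $Y(\infty)$ and one from the treatment-effect term.

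\textbf{Treatment-effect piece.} The demeaned regressors $\ddot{\mathbf R}_{i,t}$ are constant within each $(S,Q,t)$ cell. Hence
\[
\E[\ddot{\mathbf R}_{i,g+\ell}R_{g,\ell}(i,g+\ell)\tau_{i,g,\ell}]
=\E[\ddot{\mathbf R}_{i,g+\ell}R_{g,\ell}(i,g+\ell)]\,\CATT(g,\ell),
\]
after conditioning on $S_i=g$, $Q_i=1$. Applying the operator then returns exactly $\omega^{j,\star}_{g,\ell}\CATT(g,\ell)$ by \Cref{prop:hw_es_decomp}. Summing over $(g,\ell)$ gives \eqref{eq:hw_contamination}. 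Terms with $\ell=-1$ drop out because $\CATT(g,-1)=0$ under no anticipation; this is why the sum can be written over $\ell\neq -1$.

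\textbf{Untreated-outcome piece (the main obstacle).} Here I must show that the projection of $Y_{i,t}(\infty)$ onto $\ddot{\mathbf R}$ is zero. Three steps:
\begin{enumerate}[label=(\alph*)]
\item Write $\mu_{s,q,t}=\E[Y_{i,t}(\infty)\mid S_i=s,Q_i=q]$. Since the regressors are cell-level, only these cell means matter; unit-level deviations are absorbed by $\alpha_i$ up to cell-time means.
\item I plan to show that \Cref{as:dddpct} implies the additive form
\[
\mu_{s,q,t}=a_{s,q}+b_{s,t}+c_{q,t}.
\]
To see this, telescope the pairwise equality across $t$ to get that $(\mu_{s,1,t}-\mu_{s,0,t})-(\mu_{s,1,1}-\mu_{s,0,1})$ does not depend on $s$; call it $h_t$. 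Then set $c_{1,t}=h_t$, $c_{0,t}=0$, $b_{s,t}=\mu_{s,0,t}-\mu_{s,0,1}$, and $a_{s,q}=\mu_{s,q,1}$.
\item This additive form lies in the span of the three fixed effects. It is therefore annihilated by the demeaning \eqref{eq:hw_demean}, so $\sum_t\E[\ddot{\mathbf R}_{i,t}Y_{i,t}(\infty)]=0$.
\end{enumerate}

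The delicate point is that \Cref{as:dddpct} is stated only for pairs $(g,g_c)$ with $g_c>g$ and for $t\le g_c$. The pooled regression instead needs the additive structure globally across all cohorts and periods. I would check whether chaining pairs through the never-treated or last cohort covers every $(s,t)$ used. If it does not, I would state that the assumption is read as holding across all cohorts in the sample window; this is exactly the global restriction the text says pooling imposes.

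\textbf{Contamination claim.} Cross-period contamination follows from \Cref{prop:hw_es_decomp}(ii)--(iii): the weights on $\ell\neq j$ sum to zero but are not individually zero. A small staggered example makes this concrete. Take two cohorts and a never-treated group; computing \eqref{eq:hw_weight_ratio} gives some $\widetilde r^{(g+\ell)}_{j;g,1}\neq0$ for $\ell\neq j$. For cross-cohort contamination, the same example with unequal cell sizes $p_{g,1}$ should produce a negative own-period residual $\widetilde r^{(g+j)}_{j;g,1}$ for an early cohort, because its post-period cells act as controls for later cohorts. I would exhibit this numerically to show that negative weights are possible.
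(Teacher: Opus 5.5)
Your proposal is correct and follows the paper's proof (FWL, the split of $Y_{i,t}$ into $Y_{i,t}(\infty)$ plus $R_{g,\ell}$-selected effects, cell-measurability of $\ddot{\mathbf{R}}_{i,t}$ for the effect term, and DDD-PCT to eliminate the $Y(\infty)$ term); your additive form $\mu_{s,q,t}=a_{s,q}+b_{s,t}+c_{q,t}$ makes the paper's one-line assertion rigorous, and your chaining concern is resolved whenever a never-treated cohort exists---which \Cref{as:admissibility} actually forces, since the last treated cohort can only be paired with $g_c=\infty$, so every $g$ is linked to $\infty$ at every $t$. You are also right that dropping $\ell=-1$ needs $\CATT(g,-1)=0$ (supplied by the convention $Y_{i,t}(g)\equiv Y_{i,t}(\infty)$ for $t<g$), a step the paper's proof skips; note, however, that neither you nor the paper actually exhibits a negative own-period weight, since the paper's toy example gives only a cross-period one, $\omega^{0,\star}_{g_1,1}=-1/2$, with own-period weights equal to $1/2$.
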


\Cref{prop:hw_es_pt} says that when one estimates the HW event-study specification \eqref{eq:hw_paper_notation} on staggered-adoption data, the coefficient $\alpha_j$ in a linear combination of cohort-level CATTs running over {all} included event-times, not only over time $j$. This is property (i) from \Cref{prop:hw_es_decomp}. This complicates interpretation of the estimand absent additional assumptions.

\begin{prop}[Under DDD-PCT and no anticipation]\label{prop:hw_es_noanticip}
	Under {\Cref{as:noanticipation}} and {\Cref{as:dddpct}}, the population coefficient $\alpha_j$ satisfies
	\begin{equation}\label{eq:hw_noanticip}
		\alpha_j = \sum_{g \in \calGtrg}\sum_{\ell \geq 0} \omega_{g,\ell}^{j,\star} \CATT(g,\ell)~.
	\end{equation}
	$\CATT(g,\ell) = 0$ for $\ell < 0$, but for $j < 0$, $\alpha_j$ is generically nonzero because it depends on post-treatment $\CATT(g,\ell)$ for $\ell \geq 0$ through the weights $\omega_{g,\ell}^{j,\star}$.
\end{prop}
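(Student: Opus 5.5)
The plan is to start from the decomposition in \Cref{prop:hw_es_pt}, which already writes $\alpha_j = \sum_{g\in\calGtrg}\sum_{\ell\neq -1}\omega_{g,\ell}^{j,\star}\CATT(g,\ell)$ under \Cref{as:dddpct}, and then use \Cref{as:noanticipation} to remove the pre-treatment terms. Two things must then be shown: that the terms with $\ell<0$ vanish, and that the surviving post-treatment weights give a generically nonzero $\alpha_j$ for $j<0$.

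\textbf{Step 1: pre-treatment terms vanish.} For $\ell<0$ we have $g+\ell<g$. By \Cref{as:noanticipation}, $Y_{i,g+\ell}(g)=Y_{i,g+\ell}(\infty)$ almost surely. Taking expectations conditional on $S_i=g,Q_i=1$ in \eqref{eq:catt} gives $\CATT(g,\ell)=0$. This holds equally for the omitted period $\ell=-1$, so that term cannot re-enter through property (iii) of \Cref{prop:hw_es_decomp}. Dropping these terms from \eqref{eq:hw_contamination} yields \eqref{eq:hw_noanticip}. One point needs care. The derivation of \Cref{prop:hw_es_pt} writes $Y_{i,t}=Y_{i,t}(\infty)+\sum_{g,\ell}R_{g,\ell}(i,t)\,\tau_i(g,\ell)$ and projects, and under \Cref{as:dddpct} the untreated part is absorbed by $\alpha_i+\delta_{S_i,t}+\eta_{Q_i,t}$. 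No anticipation is then used only to set the pre-period effect terms to zero. The argument is therefore a one-line substitution into the existing decomposition, and no new projection calculation is needed.

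\textbf{Step 2: generic nonzeroness for $j<0$.} Here I use the weight properties of \Cref{prop:hw_es_decomp} with $j<0$, $j\neq-1$. By (i), the own-period weights $\omega_{g,j}^{j,\star}$ multiply $\CATT(g,j)=0$, so they contribute nothing. By (ii), the weights on each post-period $\ell\ge0$ sum to zero across cohorts, so $\alpha_j=\sum_g\sum_{\ell\ge0}\omega^{j,\star}_{g,\ell}\CATT(g,\ell)$ is a contrast across cohorts. It equals zero when $\CATT(g,\ell)$ is constant in $g$ for each $\ell$, but not otherwise.

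To establish ``generically nonzero'' I would show that at least one $\omega^{j,\star}_{g,\ell}$ with $\ell\ge0$ is nonzero. Given that, $\alpha_j$ is a nontrivial linear functional of the vector $(\CATT(g,\ell))_{g,\ell\ge0}$, so it vanishes only on a proper hyperplane of heterogeneous effect configurations. For the nonzero weight I would use the variance-ratio form \eqref{eq:hw_weight_ratio}. It suffices to exhibit a cell $(g,1)$ and a time $g+\ell$, $\ell\ge0$, at which the partial residual $\widetilde r^{(g+\ell)}_{j;g,1}\neq0$. A simple case is a design with two treated cohorts $g<g'$ plus a never-treated group, where the calendar-time indicator $R_j$ for the later cohort coincides with post-periods of the earlier cohort.

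\textbf{Main obstacle.} I expect the nonzero-weight computation to be the hardest step, since the partial residuals come from a three-way demeaning plus all other event-time indicators. I would not attempt a general closed form. Instead I would argue by example: compute the residual in the minimal design above, and note that the weights are continuous (indeed rational) in the cell probabilities $p_{s,q}$. A weight that is nonzero at one configuration is then nonzero on an open dense set, which is what ``generically'' requires.
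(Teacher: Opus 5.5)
Your Step~1 is the paper's proof. You split the sum in \eqref{eq:hw_contamination} by the sign of $\ell$ and remove every $\ell<0$ term, including $\ell=-1$, by no anticipation.

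The two routes part ways on the ``generically nonzero'' claim. The paper asserts that the post-treatment weights $\omega^{j,\star}_{g,\ell}$, $\ell\ge 0$, are \emph{not} constrained to sum to zero across $g$. It then appeals informally to the changing relative-time composition of the sample. You instead use property~(ii) of \Cref{prop:hw_es_decomp} to show that for $j<0$ they \emph{do} sum to zero at each $\ell\ge 0$. So $\alpha_j$ is a pure cross-cohort contrast of post-treatment effects.

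Yours is the version consistent with the rest of the paper. If those sums were nonzero, $\alpha_j$ would be nonzero for $j<0$ even when $\CATT(g,\ell)=\ATT_\ell$, which contradicts \Cref{prop:hw_es_homo}. Your route also makes ``generically'' precise: $\alpha_j$ vanishes only on a hyperplane of effect profiles that contains every cohort-homogeneous profile. Your rational-function argument then extends one explicit design to an open dense set of designs. The price is that you must actually exhibit a nonzero weight, which the paper never does.

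Two adjustments are needed.
\begin{enumerate}
\item Drop ``but not otherwise.'' A nontrivial contrast also vanishes at heterogeneous profiles lying on that hyperplane, as you acknowledge later.
\item The nonzero weight must sit at an event-time $\ell$ observed for at least two cohorts. By (ii), a cohort that is alone at $\ell$ gets weight zero, so the calendar coincidence you describe does not by itself deliver a nonzero weight at the coincident cell. In the example below there is no such coincidence at all.
\end{enumerate}

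The computation is short if you concentrate out $\delta_{S_i,t}$. This reduces the regression to a two-way event study on eligible-minus-ineligible cell means, with cohort weights $h_s=p_{s,1}p_{s,0}/(p_{s,1}+p_{s,0})$. Take $T=3$, treated cohorts $2$ and $3$, and a never-treated group whose weight $h_\infty$ dominates. In the limit one finds $\alpha_{-2}\to \tfrac{h_2}{2(h_2+h_3)}\bigl(\CATT(2,0)-\CATT(3,0)\bigr)$. This is nonzero whenever the two cohorts' on-impact effects differ.
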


\Cref{prop:hw_es_noanticip} has a striking implication for applied work. A researcher using the HW specification who finds $\widehat{\alpha}_j \neq 0$ for $j < 0$ may incorrectly conclude that DDD PCT is violated, when in fact the pre-period coefficient reflects heterogeneous post-treatment effects contaminating the pre-treatment window through the implicit weights. Conversely, $\widehat{\alpha}_j \approx 0$ for $j < 0$ does not validate DDD-PCT.

\begin{prop}[Under DDD-PCT and treatment effect homogeneity]\label{prop:hw_es_homo}
	Under \Cref{as:noanticipation}, {\Cref{as:dddpct}} and the restriction that $\CATT(g,\ell) = \ATT_\ell$ for all $g$ (treatment effects depend on exposure duration but not on the cohort), the population coefficient $\alpha_j$ simplifies to
	\begin{equation}\label{eq:hw_homo}
		\alpha_j = \ATT_j ~.
	\end{equation}
\end{prop}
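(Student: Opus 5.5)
Plan: start from \Cref{prop:hw_es_noanticip}, which under \Cref{as:noanticipation} and \Cref{as:dddpct} gives
\begin{equation*}
\alpha_j = \sum_{g \in \calGtrg}\sum_{\ell \geq 0} \omega_{g,\ell}^{j,\star} \CATT(g,\ell).
\end{equation*}
Substitute the homogeneity restriction $\CATT(g,\ell) = \ATT_\ell$ and exchange the order of summation:
\begin{equation*}
\alpha_j = \sum_{\ell \geq 0} \ATT_\ell \sum_{g \in \calGtrg} \omega_{g,\ell}^{j,\star}.
\end{equation*}
The inner cohort sums are then evaluated with the adding-up properties of \Cref{prop:hw_es_decomp}. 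By property (i) the coefficient on $\ATT_j$ is one. By property (ii) the coefficient on every $\ATT_\ell$ with $\ell \neq j$, $\ell \neq -1$ is zero. The case $\ell = -1$ lies in the pre-period and does not appear in the sum over $\ell \geq 0$. Property (iii) would otherwise bring in $-\ATT_{-1}$, but $\ATT_{-1}=0$ under no anticipation, so this term causes no trouble either way.

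For $j \geq 0$ this gives $\alpha_j = \ATT_j$ directly. For $j<0$ ($j\neq -1$), every $\ATT_\ell$ with $\ell\ge0$ has $\ell \neq j$, so property (ii) kills all terms and $\alpha_j = 0$. This equals $\ATT_j$ because $\CATT(g,j)=0$ for $j<0$ by \Cref{as:noanticipation}, with the convention $\Yitpo{g}\equiv\Yitpo{\infty}$ for $t<g$. So the identity holds for every included $j$.

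The main obstacle is bookkeeping about the range of event times. Property (ii) is stated for included periods $\ell \in \{-L,\dots,K\}\setminus\{-1\}$. Event times outside the window, or cohorts lacking some $\ell$, must either not enter the decomposition or already be absorbed so that \eqref{eq:hw_noanticip} sums only over included $\ell$. I would make this explicit by noting that the decomposition in \Cref{prop:hw_es_pt} is over the regressors' event-time support, so the sums in (i)--(ii) run over the same index set. Under that reading the argument is a direct two-line consequence of the earlier propositions.
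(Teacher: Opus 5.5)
Your argument is correct and is essentially the paper's: you factor $\ATT_\ell$ out of the cohort sum and apply the adding-up properties of \Cref{prop:hw_es_decomp}. The paper starts instead from the PCT-only decomposition \eqref{eq:hw_contamination}, obtains $\alpha_j = \ATT_j - \ATT_{-1}$ via property (iii), and then sets $\ATT_{-1}=0$ by no anticipation, whereas you start from \eqref{eq:hw_noanticip} and handle $j<0$ separately; your caveat that (i)--(iii) need $-1$ to be the only excluded event time is an assumption the paper's proof also makes implicitly.
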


\Cref{prop:hw_es_homo} identifies the knife-edge conditions under which the HW event-study specification recovers an interpretable causal parameter: $\alpha_j$ from \eqref{eq:hw_paper_notation} requires DDD-PCT, no anticipation, and treatment effect homogeneity across treatment cohorts to admit a valid causal interpretation. Relating this to \Cref{prop:hw_es_decomp}, note that property~{(i)} is reassuring {only} under treatment effect homogeneity in the event-time dimension, i.e., $\CATT(g,j) = \ATT_j$ for all $g$; in that case property~{(ii)} guarantees that cross-period CATTs cancel and $\alpha_j = \ATT_j$ exactly. Under heterogeneous treatment effects, property~{(ii)} of \Cref{prop:hw_es_decomp} says that individual weights $\omega_{g,\ell}^{j,\star}$ for $\ell \neq j$ can be negative (and, as the variance-ratio representation in Remark \ref{rem:variance-ratio-rep} makes explicit, own-period weights $\omega_{g,j}^{j,\star}$ can also be negative for particular cohorts), and $\alpha_j$ becomes a {linear}---not convex---combination of CATTs running over both the cohort and event-time axes. The population regression coefficient therefore does not admit the interpretation of any weighted {average} treatment effect at event-time $j$.

\subsection{The Aggregated ATT}\label{sub:hw_att_agg}

Applied researchers often summarize the event study by computing an aggregated ATT, averaging the post-treatment coefficients with researcher-chosen weights. Define
\begin{equation}\label{eq:hw_agg_att}
	\widehat{\ATT}_{\mathrm{agg}} = \sum_{j=0}^{K} w_j \widehat{\alpha}_j~,
\end{equation}
where $w_j \geq 0$ are non-negative weights with $\sum_{j=0}^{K} w_j = 1$. Common choices are equal weights ($w_j = 1/(K+1)$) or exposure-duration-specific weights. The following proposition characterizes the probability limit of this estimand.

\begin{prop}[Aggregated ATT estimand]\label{prop:hw_agg_att}
	Under {\Cref{as:sampling}}--{\Cref{as:dddpct}}, the aggregated ATT \eqref{eq:hw_agg_att} satisfies
	\begin{equation}\label{eq:hw_agg_plim}
		\widehat{\ATT}_{\mathrm{agg}} \pto \sum_{g \in \calGtrg} \sum_{\ell \geq 0} \Omega_{g,\ell} \CATT(g,\ell)~,
	\end{equation}
	where the aggregated weights are
	\begin{equation}\label{eq:hw_agg_weights}
		\Omega_{g,\ell} = \sum_{j=0}^{K} w_j \omega_{g,\ell}^{j,\star}~.
	\end{equation}
	The aggregated weights $\Omega_{g,\ell}$ need not be non-negative, but they satisfy the normalization
	\begin{equation}\label{eq:hw_agg_weights_sum}
		\sum_{g \in \calGtrg} \sum_{\ell \geq 0} \Omega_{g,\ell} = 1~.
	\end{equation}
\end{prop}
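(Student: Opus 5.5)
The plan is to combine three pieces: consistency of each $\widehat{\alpha}_j$ for its population coefficient, the continuous mapping theorem, and the weight identities of \Cref{prop:hw_es_decomp}.

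\emph{Step 1 (consistency).} The vector $\widehat{\boldsymbol\alpha} = (\widehat{\alpha}_j)_{j\neq -1}$ is the OLS coefficient on the event-time indicators in \eqref{eq:hw_paper_notation}. By the FWL theorem it can be written as
\[
\widehat{\boldsymbol\alpha} = \Bigl(n^{-1}\sum_i\sum_t \ddot{\mathbf R}_{i,t}\ddot{\mathbf R}_{i,t}^\top\Bigr)^{-1} n^{-1}\sum_i\sum_t \ddot{\mathbf R}_{i,t} Y_{i,t},
\]
with the three-way demeaning computed from sample cell means. Because $T$ is fixed and the regressors are cell-level indicators, the sample demeaning operator converges to its population counterpart; this uses \Cref{as:overlap}, which guarantees that every cell share $p_{s,q}$ is bounded away from zero. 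Under \Cref{as:sampling}, the finite second moments give a law of large numbers for both sample moments. The continuous mapping theorem then yields $\widehat{\boldsymbol\alpha}\pto\boldsymbol\alpha$, the population coefficients. This requires invertibility of $\sum_t\E[\ddot{\mathbf R}_{i,t}\ddot{\mathbf R}_{i,t}^\top]$, which is already implicit in \Cref{prop:hw_es_decomp}. I expect this step to be the main obstacle. Unbalanced panels and the estimated demeaning operator need a careful, if routine, argument. The cleanest route is to treat the fixed effects as a finite set of cell-by-time dummies plus unit effects that are eliminated by within-unit demeaning, so that everything is a smooth function of finitely many sample means.

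\emph{Step 2 (plim representation).} Since the weights $w_j$ are fixed, $\widehat{\ATT}_{\mathrm{agg}}\pto\sum_{j=0}^K w_j\alpha_j$. Next I substitute \Cref{prop:hw_es_noanticip}, which under \Cref{as:noanticipation} and \Cref{as:dddpct} gives $\alpha_j=\sum_g\sum_{\ell\ge0}\omega^{j,\star}_{g,\ell}\CATT(g,\ell)$. Exchanging the finite sums produces \eqref{eq:hw_agg_plim} with $\Omega_{g,\ell}$ as defined in \eqref{eq:hw_agg_weights}.

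\emph{Step 3 (normalization).} Write
\[
\sum_g\sum_{\ell\ge0}\Omega_{g,\ell}=\sum_{j=0}^K w_j\sum_{\ell\ge 0}\sum_g\omega^{j,\star}_{g,\ell}.
\]
For each $j\ge0$, property (i) of \Cref{prop:hw_es_decomp} gives the value $1$ at $\ell=j$. Property (ii) gives $0$ for each $\ell\ge0$ with $\ell\neq j$. The excluded period $\ell=-1$ never enters, since the sum runs only over $\ell\geq 0$. The inner sum is therefore $1$ for every $j$, and $\sum_j w_j=1$ gives \eqref{eq:hw_agg_weights_sum}.

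\emph{Step 4 (signs).} To show that $\Omega_{g,\ell}$ need not be non-negative, note that the individual $\omega^{j,\star}_{g,\ell}$ can be negative, as seen from the variance-ratio form \eqref{eq:hw_weight_ratio}. This is best exhibited with a simple two-cohort example; it requires no general argument.
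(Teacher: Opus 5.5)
Your proposal is correct and follows the paper's proof essentially step for step: continuous mapping on the fixed linear combination, substitution of \Cref{prop:hw_es_noanticip}, properties (i)--(ii) of \Cref{prop:hw_es_decomp} for the normalization, and an example for negativity. The only addition is your explicit FWL/LLN argument for $\widehat{\boldsymbol\alpha}\pto\boldsymbol\alpha$, which the paper skips by writing $\alpha_j\pto\cdots$ directly; one caveat, shared with the paper, is that Step~3 applies property (ii) to every observed lag $\ell\geq 0$, which is legitimate only if each such lag is an included indicator (i.e.\ $K$ reaches the largest lag in the data), since omitted lags need not have weights summing to zero and the normalization can then fail.
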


\section{Estimands, and What Does Stacked OLS Identify?}\label{sec:regression}

Before developing the formal identification and estimation theory, I analyze the regression specifications that applied researchers commonly use for triple differences. This section characterizes exactly what each specification targets, under what conditions it recovers an interpretable causal parameter, and where the standard approach breaks down. I begin by deriving the two regression specifications that correctly target the causal estimands of interest, providing rigorous proofs for each claim. I then analyze commonly used specifications that fail to recover these estimands, diagnosing the precise source of each failure. Throughout this section I draw explicit parallels with the interaction-weighted framework of \citet{sun_estimating_2021} for difference-in-differences and the decomposition results of \citet{de_chaisemartin_two-way_2020}. The analysis proceeds without covariates; the covariate-adjusted framework is developed in Appendix \ref{app:covariates}.

\subsection{Stacked sub-experiments}

The central construction of the stacked DDD approach is that of the {stacked sub-experiment}. The following definition makes this precise.

\begin{defn}[Stacked Sub-Experiment]\label{def:stack}
	For each treatment cohort $g \in \calGtrg$ and comparison group $g_c \in \calS$ with $g_c > g$, the $g$-specific stack $\Stack_g = \Stack(g, g_c, L, K)$ consists of:
	\begin{enumerate}
		\item \textbf{Event window}  $\{g - L, \ldots, g - 1, g, \ldots, g + K\}$, where $L \geq 1$ pre-treatment and $K \geq 0$ post-treatment periods.
		\item \textbf{Treated units}  Units with $S_i = g$ and $Q_i = 1$ (the cohort receiving treatment);
		\item \textbf{Within-group controls}  Units with $S_i = g$ and $Q_i = 0$ (same enabling group, ineligible);
		\item \textbf{Clean comparison group, eligible}  Units with $S_i = g_c$ and $Q_i = 1$;
		\item \textbf{Clean comparison group, ineligible}  Units with $S_i = g_c$ and $Q_i = 0$;
	\end{enumerate}
	The clean comparison group satisfies $g_c > g + K$, so that no unit in the comparison group is treated during the time window of the stack.
\end{defn}

Each stack $\Stack_g$ is a self-contained $2 \times 2 \times 2$-like experiment. The four cells are defined by the interaction of two binary dimensions:
\begin{equation}\label{eq:four_cells}
	\underbracket{(S_i = g, Q_i = 1)}_{\text{treated}}, \quad
	\underbracket{(S_i = g, Q_i = 0)}_{\substack{\text{within-group}\\\text{control}}}, \quad
	\underbracket{(S_i = g_c,  Q_i = 1)}_{\substack{\text{comparison}\\\text{eligible}}}, \quad
	\underbracket{(S_i = g_c,  Q_i = 0)}_{\substack{\text{comparison}\\\text{ineligible}}}.
\end{equation}

\subsection{Estimands of Interest}\label{sub:estimands}

I now define the target parameters for the stacked DDD framework, building from cohort-specific treatment effects to aggregated event-study parameters.

\paragraph{Group-time average treatment effect.}
The fundamental building block is the group-time average treatment effect on the treated. For each treatment cohort $g \in \calGtrg$ and post-treatment period $t \geq g$, define
\begin{equation}\label{eq:att_gt}
	\ATT(g,t) = \E\left[\Yitpo{g} - \Yitpo{\infty} \mid S_i = g, Q_i = 1\right].
\end{equation}
This is the average causal effect of treatment for units in cohort $g$ who are eligible ($Q_i = 1$), measured at time $t$. The parameter $\ATT(g,t)$ is indexed by both the treatment cohort and time, and thus accommodates heterogeneity in treatment effects across cohorts and over time. This quantity is the DDD analogue of the group-time ATT defined by \citet{callaway_difference--differences_2021-1} in the standard DiD setting. Finally, note that $\CATT(g, e) = \ATT(g, g+e)$ under the identity $t = g + e$, so the two parameters refer to the same causal contrast.

Within a given stack $\Stack_g = \Stack(g, g_c, L, K)$, I define the \emph{stack-specific treatment effect}
\begin{equation}\label{eq:att_stack}
	\ATT_{\Stack_g, g_c}(g,t) ~,
\end{equation}
which is the DDD estimand formed from the four cells in \eqref{eq:four_cells} within the stack's time window. This is the difference-in-difference-in-differences contrast:
\begin{multline}\label{eq:ddd_contrast}
	\ATT_{\Stack_g, g_c}(g,t) = \E\left[\Delta \Yit \mid S_i = g, Q_i = 1\right] - \E\left[\Delta \Yit \mid S_i = g, Q_i = 0\right] \\
	- \Big\{\E\left[\Delta \Yit \mid S_i = g_c,  Q_i = 1\right] - \E\left[\Delta \Yit \mid S_i = g_c,  Q_i = 0\right]\Big\},
\end{multline}
where $\Delta \Yit = Y_{i,t} - Y_{i,g-1}$ is the outcome change relative to the last pre-treatment period.

Applied researchers typically summarize treatment effects along the event-time axis. For event-time $e = t - g$ (periods relative to treatment onset), define the population event-study parameter
\begin{equation}\label{eq:es}
	\ES(e) = \sum_{\substack{g \in \calGtrg:\\ g + e \in \{1,\ldots,T\}}} \omega_g(e) \ATT(g, g + e), \qquad e \geq 0,
\end{equation}
where $\omega_g(e) \geq 0$ are aggregation weights satisfying $\sum_g \omega_g(e) = 1$ for each $e$. Different weight choices yield different summary parameters.

The stacked event-study estimand aggregates stack-level treatment effect estimates across treated cohorts, which is defined as
\begin{equation}\label{eq:es_stack}
	\ES_{\mathrm{stack}}(e) = \sum_{\substack{g \in \calGtrg:\\ g + e \in \{g - L,\ldots,g+K\}}} \omega_g(e) \ATT_{\Stack_g, g_c(g)}(g, g + e),
\end{equation}
where $g_c(g)$ denotes the comparison group used for cohort $g$ and the summation ranges over cohorts for which event-time $e$ falls within the stack's window. The constraint $-L \leq e \leq K$ restricts attention to event-times within the symmetric window shared by all stacks.

\begin{remark}[Regression interpretation]\label{rem:regression_preview}
	The DDD estimand \eqref{eq:ddd_contrast} has an exact regression counterpart. Within each stack $\Stack_g$, the fully saturated OLS regression of $\DeltaY_{i,t}$ on the four cell indicators---intercept, group indicator $\one\{S_i = g\}$, eligibility indicator $\one\{Q_i = 1\}$, and their interaction $\one\{S_i = g, Q_i = 1\}$---recovers the sample triple difference as the coefficient on the interaction term. This equivalence means that applied researchers can implement the stacked DDD estimator by running a separate OLS regression within each stack and aggregating the resulting coefficients. The key requirement is that each stack-level regression must be fully saturated in the cell indicators; imposing common slopes across cells or pooling across stacks with a single treatment coefficient alters the target estimand, as I explore in Appendix \ref{app:covariates}.
\end{remark}

\subsection{The Saturated Within-Stack Regression}\label{sub:saturated}

Consider a single stack $\Stack_g$ with four cells defined by $(S_i, Q_i) \in \{(g,1), (g,0), (g_c,1), (g_c,0)\}$. Write $n_{s,q} = \sum_{i \in \Stack_g} \one\{S_i = s,  Q_i = q\}$ for the number of units in cell $(s,q)$, and $n_g = \sum_{s,q} n_{s,q}$ for the total number of units in the stack. For a fixed post-treatment period $t \geq g$, the fully saturated regression on long differences $\DeltaY_{i,t}$ is
\begin{equation}\label{eq:saturated}
	\DeltaY_{i,t} = \mu_{g,t} + \lambda_{g,t} \one\{S_i = g\} + \eta_{g,t} \one\{Q_i = 1\} + \tau_{g,t}^{\mathrm{sat}} \one\{S_i = g, Q_i = 1\} + \varepsilon_{i,t}~.
\end{equation}
The coefficients $(\mu_{g,t}, \lambda_{g,t}, \eta_{g,t}, \tau_{g,t}^{\mathrm{sat}})$ are subscripted by $(g,t)$ to emphasize that they are population parameters specific to the stack built around cohort $g$ evaluated at time $t$. The regression has four parameters for four cell means and is exactly identified. The following proposition establishes that the OLS interaction coefficient recovers the sample triple difference.

\begin{prop}[Saturated regression recovers triple difference]\label{prop:saturated_ols}
	In regression \eqref{eq:saturated}, the OLS coefficients are
	\begin{align}
		\widehat{\mu}_{g,t} &= \overline{\DeltaY}_{g_c,0,t}~, \label{eq:ols_mu} \\
		\widehat{\lambda}_{g,t} &= \overline{\DeltaY}_{g,0,t} - \overline{\DeltaY}_{g_c,0,t}~, \label{eq:ols_lambda} \\
		\widehat{\eta}_{g,t} &= \overline{\DeltaY}_{g_c,1,t} - \overline{\DeltaY}_{g_c,0,t}~, \label{eq:ols_eta} \\
		\widehat{\tau}_{g,t}^{\mathrm{sat}} &= \underbracket{\left(\overline{\DeltaY}_{g,1,t} - \overline{\DeltaY}_{g,0,t}\right)}_{\text{within-group DD}} - \underbracket{\left(\overline{\DeltaY}_{g_c,1,t} - \overline{\DeltaY}_{g_c,0,t}\right)}_{\text{across-group DD}}~, \label{eq:triple_diff_ols}
	\end{align}
	where $\overline{\DeltaY}_{s,q,t} = n_{s,q}^{-1}\sum_{i\colon S_i = s,  Q_i = q} \DeltaY_{i,t}$ is the cell-specific sample mean. Under \Cref{as:sampling}--\Cref{as:admissibility}, $\widehat{\tau}_{g,t}^{\mathrm{sat}} \pto \ATT(g,t)$.
\end{prop}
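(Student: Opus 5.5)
The proof has two parts: an algebraic identity for the OLS coefficients in \eqref{eq:saturated}, and a probability limit for the interaction coefficient.

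\emph{Algebraic part.} The four regressors in \eqref{eq:saturated} are an intercept, $\one\{S_i=g\}$, $\one\{Q_i=1\}$ and their product. I would reparametrize the design by the four mutually exclusive cell dummies $\one\{S_i=s,Q_i=q\}$ for $(s,q)\in\{g,g_c\}\times\{0,1\}$. These span the same column space, and the map between the two bases is an invertible $4\times 4$ unit lower-triangular matrix. Regressing on disjoint cell dummies gives fitted coefficients equal to the cell sample means $\overline{\DeltaY}_{s,q,t}$, because the normal equations decouple cell by cell. This requires every $n_{s,q}>0$, which holds with probability approaching one under \Cref{as:overlap} and \Cref{as:admissibility}. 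Since OLS fitted values are basis-invariant, the fitted value for unit $i$ is $\widehat\mu+\widehat\lambda\one\{S_i=g\}+\widehat\eta\one\{Q_i=1\}+\widehat\tau\one\{S_i=g,Q_i=1\}$ and must equal its cell mean. Evaluating this at the four cells in turn and solving the resulting triangular system:
\begin{itemize}
\item $(g_c,0)$ gives $\widehat\mu_{g,t}=\overline{\DeltaY}_{g_c,0,t}$;
\item $(g,0)$ gives $\widehat\lambda_{g,t}$ as in \eqref{eq:ols_lambda};
\item $(g_c,1)$ gives $\widehat\eta_{g,t}$ as in \eqref{eq:ols_eta};
\item $(g,1)$ gives $\widehat\tau^{\mathrm{sat}}_{g,t}=\overline{\DeltaY}_{g,1,t}-\widehat\mu-\widehat\lambda-\widehat\eta$, which rearranges to \eqref{eq:triple_diff_ols}.
\end{itemize}

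\emph{Consistency.} Write each cell mean as a ratio $\big(n^{-1}\sum_i \DeltaY_{i,t}\one\{S_i=s,Q_i=q\}\big)\big/\big(n^{-1}\sum_i\one\{S_i=s,Q_i=q\}\big)$. Under \Cref{as:sampling}, with finite second moments, the WLLN applies to numerator and denominator. The denominator limit $p_{s,q}$ is positive by \Cref{as:overlap}, so the continuous mapping theorem gives $\overline{\DeltaY}_{s,q,t}\pto\E[\DeltaY_{i,t}\mid S_i=s,Q_i=q]$. Hence $\widehat\tau^{\mathrm{sat}}_{g,t}$ converges to the population contrast \eqref{eq:ddd_contrast}, namely $\ATT_{\Stack_g,g_c}(g,t)$.

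\emph{Identification step.} It remains to show that \eqref{eq:ddd_contrast} equals $\ATT(g,t)$ for $g\le t\le g+K$. First I would express each cell's observed outcomes in potential outcomes:
\begin{itemize}
\item by \eqref{eq:observed_outcome} and \Cref{as:noanticipation}, the cell $(g,1)$ has $Y_{i,t}=Y_{i,t}(g)$ and $Y_{i,g-1}=Y_{i,g-1}(\infty)$;
\item by \Cref{as:nospillover}, the cell $(g,0)$ realizes $Y(\infty)$ in all periods;
\item since $g_c>g+K\ge t$ (\Cref{as:admissibility}), both comparison cells realize $Y(\infty)$ on the window.
\end{itemize}
Adding and subtracting $\E[Y_{i,t}(\infty)\mid S_i=g,Q_i=1]$ in the first term then splits the contrast into $\ATT(g,t)$ plus a residual triple difference of untreated long differences $Y_{i,t}(\infty)-Y_{i,g-1}(\infty)$.

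\emph{Main obstacle.} The one nonmechanical point is showing that this residual vanishes. \Cref{as:dddpct} is stated for one-period changes $Y_{i,\tau}(\infty)-Y_{i,\tau-1}(\infty)$, not for long differences. I would telescope $Y_{i,t}(\infty)-Y_{i,g-1}(\infty)=\sum_{\tau=g}^{t}\big(Y_{i,\tau}(\infty)-Y_{i,\tau-1}(\infty)\big)$ and apply \eqref{eq:dddpct} at each $\tau$. This is legitimate because every $\tau\le t\le g+K<g_c$ lies in the range where the assumption is imposed for the pair $(g,g_c)$. Each summand's triple difference is then zero, so the residual vanishes and $\widehat\tau^{\mathrm{sat}}_{g,t}\pto\ATT(g,t)$.
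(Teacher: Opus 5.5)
Your proposal is correct and follows essentially the same route as the paper: saturation forces the fitted values to equal the four cell means, the resulting triangular system is solved cell by cell, and the WLLN together with overlap gives convergence of the cell means to the population triple difference, which is then identified as $\ATT(g,t)$. The one difference is that the paper cites \Cref{thm:identification} for the last step, whereas you prove it inline. Your telescoping of the one-period DDD-PCT over $\tau = g,\ldots,t$ (valid because $2 \le g$ and $t \le g+K < g_c$) explicitly justifies the passage to long differences, which the paper's proof of that theorem simply asserts.
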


The saturated regression \eqref{eq:saturated} requires no functional form assumptions whatsoever. It is purely design-based; the four-cell structure of the stack, combined with the long-differencing that removes time-invariant heterogeneity, delivers the triple difference mechanically. Each coefficient has a transparent interpretation. The intercept $\mu_{g,t}$ is the mean outcome change for comparison-ineligible units, $\lambda_{g,t}$ is the group differential for ineligible units, $\eta_{g,t}$ is the eligibility premium in the comparison group, and $\tau_{g,t}^{\mathrm{sat}}$ is the excess eligibility premium in the treated group---the triple difference. This parallels the two-cell structure of the saturated DiD regression, where $\tau$ captures the excess treatment-group change relative to the control; here, the additional eligibility dimension requires four cells instead of two, but the logic is identical.

\subsection{Pooled Stacked Regressions and Their Targets}\label{sub:pooled}

The within-stack saturated regression \eqref{eq:saturated} can be pooled across stacks. I first analyze the fully unrestricted pooled regression, which correctly targets the cohort-time treatment effects, and then examine restricted specifications that impose various degrees of effect homogeneity.

{The unrestricted pooled regression.} Pool all stacks into a single dataset, where each observation $(i, t, g)$ belongs to stack $\Stack_g$ with $S_i \in \{g, g_c(g)\}$ and $t \in \{g - L, \ldots, g + K\}$. The fully unrestricted regression with stack fixed effects and stack-by-cell fixed effects is
\begin{equation}\label{eq:full_hetero}
	\DeltaY_{i,t,g} = \lambda_{s,t,g} + \eta_{q,t,g} + \sum_{g' \in \calGtrg} \sum_{e=-L}^{K} \tau_{g',e} \one\{g = g',  S_i = g',  Q_i = 1, t = g' + e\} + \varepsilon_{i,t,g}~,
\end{equation}
where the subscript $g$ on the left indexes the stack from which the observation is drawn, $\lambda_{s,t,g}$ are stack-by-group-by-time fixed effects, $\eta_{q,t,g}$ are stack-by-eligibility-by-time fixed effects, and $\tau_{g',e}$ is a cohort-by-event-time coefficient.b

\begin{prop}[Unrestricted pooled regression equivalence]\label{prop:pooled_equiv}
	Under \Cref{as:sampling}--\Cref{as:admissibility}, The fully unrestricted regression \eqref{eq:full_hetero} is numerically identical to running the saturated regression \eqref{eq:saturated} separately within each stack for each time period. Specifically, $\widehat{\tau}_{g',e} = \widehat{\tau}_{g',g'+e}^{\mathrm{sat}}$ for all $g' \in \calGtrg$ and $e \in \{-L, \ldots, K\}$.
\end{prop}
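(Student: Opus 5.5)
The plan is to show that the design matrix of \eqref{eq:full_hetero} is block diagonal across (stack, time) pairs $(g,t)$, and that each block reproduces the saturated regression \eqref{eq:saturated}. Once that holds, the OLS normal equations decouple, and each $\widehat\tau_{g',e}$ is determined only by the observations in stack $\Stack_{g'}$ at time $t=g'+e$.

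\textbf{Step 1: partition the regressors by $(g,t)$.} Every regressor in \eqref{eq:full_hetero} is an indicator supported on a single pair $(g,t)$:
\begin{itemize}
	\item $\lambda_{s,t,g}$ is supported on $\{\text{stack } g,\ \text{time } t,\ S_i=s\}$;
	\item $\eta_{q,t,g}$ is supported on $\{\text{stack } g,\ \text{time } t,\ Q_i=q\}$;
	\item the treatment dummy for $(g',e)$ is supported on $\{\text{stack } g',\ t=g'+e,\ S_i=g',\ Q_i=1\}$.
\end{itemize}
Stacked observations are indexed by $(i,t,g)$. A unit appearing in several stacks therefore generates distinct rows, and the regressor sets for different $(g,t)$ have disjoint row supports. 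The design matrix is block diagonal after ordering rows by $(g,t)$, and $X^\top X$ and $X^\top \DeltaY$ decompose blockwise. The OLS problem thus separates into independent least-squares problems, one per $(g,t)$.

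\textbf{Step 2: identify each block with \eqref{eq:saturated}.} Fix $(g,t)$. The block's regressors are:
\begin{itemize}
	\item $\one\{S_i=g\}$ and $\one\{S_i=g_c\}$;
	\item $\one\{Q_i=1\}$ and $\one\{Q_i=0\}$;
	\item $\one\{S_i=g,Q_i=1\}$.
\end{itemize}
The group and eligibility pairs both sum to the stack-time constant, so their span equals the span of $\{1,\one\{S_i=g\},\one\{Q_i=1\},\one\{S_i=g,Q_i=1\}\}$, which is exactly the regressor space of \eqref{eq:saturated}. The redundancy is one collinearity per block. I will handle it with any normalization, for instance dropping $\eta_{0,t,g}$. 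The fixed effects themselves are not separately identified, but the fitted values and the coefficient on a regressor outside the fixed-effect span are invariant to the normalization. The interaction lies outside the span of the additive group and eligibility terms, so $\widehat\tau_{g,t-g}$ is uniquely determined and equals the interaction coefficient in the reparametrized saturated regression.

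\textbf{Step 3: conclude by \Cref{prop:saturated_ols}.} That coefficient is $\widehat\tau^{\mathrm{sat}}_{g,t}$, given by the four-cell triple difference \eqref{eq:triple_diff_ols}. Setting $e=t-g$ yields $\widehat\tau_{g',e}=\widehat\tau^{\mathrm{sat}}_{g',g'+e}$.

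Two edge cases need a remark:
\begin{itemize}
	\item At the base period $t=g-1$, $\DeltaY\equiv 0$, so both sides are zero. Alternatively, the $e=-1$ indicator is omitted, matching the normalization.
	\item \Cref{as:overlap} and \Cref{as:admissibility} guarantee that all four cells are nonempty with probability approaching one, so each block has full rank after normalization.
\end{itemize}

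The main obstacle is Step 1's bookkeeping rather than any inequality. Duplicated comparison units across stacks must be treated as distinct rows, otherwise block separability fails. The fixed effects must also be interacted with $g$, not just $(s,t)$; the argument breaks if the fixed effects were common across stacks. I would state explicitly that \eqref{eq:full_hetero} is estimated on the stacked dataset with rows $(i,t,g)$. I would also use the FWL theorem to justify that projecting out the block-specific fixed effects acts separately within each block.
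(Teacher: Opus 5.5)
Your proof is correct and follows the same route as the paper's. Both split the pooled regression into separable blocks, identify each block with the saturated four-cell regression \eqref{eq:saturated}, and invoke \Cref{prop:saturated_ols}; you decouple the normal equations directly and absorb the one collinearity per block by normalization, where the paper argues via FWL and orthogonality of the residualized treatment indicators. Your bookkeeping is also the version that matches \eqref{eq:full_hetero}: the paper's Step~1 residualizes on time-invariant stack and stack-by-cell effects $\alpha_g,\mu_{s,q,g}$, under which the residualized indicators for two event-times in the same stack are not orthogonal (their inner product is $-n_{g',1}/(L+K+1)$), whereas with the time-indexed effects $\lambda_{s,t,g},\eta_{q,t,g}$ that you use, the $(g,t)$ blocks are genuinely disjoint and the separation goes through.
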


\Cref{prop:pooled_equiv} establishes that the unrestricted pooled regression \eqref{eq:full_hetero} is a ``master regression'' that simultaneously delivers all within-stack triple differences and all within-stack pre-trend diagnostics. The pre-treatment coefficients provide a regression-based implementation of the pre-trend test---systematic departures of $\widehat{\tau}_{g',e}$ from zero for $e < 0$ signal violations of DDD-PCT.

The researcher forms the event-study parameter via post-estimation aggregation
\begin{equation}\label{eq:post_agg}
	\ESstack(e) = \sum_{g \in \calGtrg(e)} \widehat{\omega}_g(e) \widehat{\tau}_{g,e}
\end{equation}
with explicit, researcher-chosen weights $\widehat{\omega}_g(e) \geq 0$ summing to one. The practical question becomes how to transparently aggregate the cohort-level estimates.
The upshot here is that estimating a {fully-saturated} event-study linear regression on the stacked dataset targets a weighted average of cohort-specific effects with \textit{strictly} positive weights. A stacked event-study regression is fully saturated when it includes a full set of stack-by-group-by-time and stack-by-eligibility-by-time fixed effects, absorbing all cell-specific time shocks. In long differences, this specification is
\begin{equation}\label{eq:stacked_es_reg}
	\DeltaY_{i,t,g} = \lambda_{s,t,g} + \eta_{q,t,g} + \sum_{\substack{e=-L \\ e \neq -1}}^{K} \tau_e \one\{S_i = g, Q_i = 1, t = g + e\} + \varepsilon_{i,t,g}~,
\end{equation}
where $\lambda_{s,t,g}$ is a fixed effect for group $s \in \{g, g_c\}$ at time $t$ in stack $g$, and $\eta_{q,t,g}$ is a fixed effect for eligibility type $q \in \{0, 1\}$ at time $t$ in stack $g$. Because $\lambda_{s,t,g}$ and $\eta_{q,t,g}$ consume exactly three degrees of freedom for the four cells in stack $g$ at time $t$, the addition of the treatment indicator perfectly saturates the four cell means at every period.

The following proposition establishes that this fully-saturated stacked event-study regression is sensible: at each post-treatment event-time $e$, its coefficient is a cell-size-weighted average of the cohort-specific treatment effects with strictly positive weights summing to one. This parallels the result of \citet{sun_estimating_2021} for the fully-saturated stacked DiD event-study regression, extended here to the triple-differences setting with the additional eligibility dimension.

\begin{prop}[Estimands of the fully-saturated event-study regression]\label{prop:stacked_es_estimands}
	In the fully-saturated stacked event-study regression \eqref{eq:stacked_es_reg}, the OLS coefficient $\widehat{\tau}_e$ for each event-time $e \neq -1$ is numerically identical to a weighted average of the within-stack saturated coefficients $\widehat{\tau}_{g,g+e}^{\mathrm{sat}}$
	\begin{equation}\label{eq:stacked_es_weights}
		\widehat{\tau}_e = \sum_{g \in \calGtrg(e)} w_g^{\mathrm{FWL}}(e) \widehat{\tau}_{g,g+e}^{\mathrm{sat}}~,
	\end{equation}
	where the weights $w_g^{\mathrm{FWL}}(e)$ are strictly positive {under \Cref{as:shares}} and sum to one across $g \in \calGtrg(e)$. The weight for stack $g$ is proportional to the FWL residual variance of the treatment indicator within that stack at time $t = g+e$. Under \Cref{as:sampling}--\Cref{as:admissibility} and \Cref{as:shares},
	\begin{enumerate}
		\item[{(i)}] {No cross-cohort contamination.} The weights $w_g^{\mathrm{FWL}}(e) > 0$.
		\item[{(ii)}] {No cross-period contamination.} For any {fixed} event-time $e$, $\widehat{\tau}_e \pto \sum_{g} w_g^{\mathrm{FWL}}(e) \CATT(g,e)$. Treatment effects from other event-times $e' \neq e$ receive exactly zero weight.
	\end{enumerate}
\end{prop}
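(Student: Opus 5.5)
The plan is to apply the Frisch--Waugh--Lovell theorem to \eqref{eq:stacked_es_reg} and to exploit the block structure of the fixed effects. The key structural fact is that the fixed effects $\lambda_{s,t,g}$ and $\eta_{q,t,g}$ are indexed by the triple (stack, time, cell-dimension). Their column space is therefore a direct sum over $(g,t)$ blocks, and each block contains only the four cells of stack $g$ at time $t$. Each treatment indicator $D_e \equiv \one\{S_i=g,Q_i=1,t=g+e\}$ is nonzero only in blocks with $t=g+e$. Two consequences follow once the fixed effects are partialled out. First, the residualized indicators $\ddot D_e$ and $\ddot D_{e'}$ for $e\neq e'$ have disjoint supports (blocks with $t-g=e$ versus $t-g=e'$). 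The partialled Gram matrix of the event-time indicators is therefore diagonal. Hence $\widehat\tau_e$ equals the bivariate coefficient $\sum \ddot D_e \DeltaY / \sum \ddot D_e^2$, with no influence from other event-times. This is the source of part (ii).

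Next I would compute $\ddot D_e$ inside a single block $(g,t=g+e)$. The fixed effects span the additive (group $+$ eligibility) model on the $2\times2$ table with cell counts $n_{s,q}$. Projecting the interaction indicator of cell $(g,1)$ off this space gives a residual of the form $c_g\,\epsilon_{s}\epsilon_{q}/n_{s,q}$, where $\epsilon_s,\epsilon_q\in\{\pm1\}$ are the signs of the cells and $c_g=\bigl(\sum_{s,q} n_{s,q}^{-1}\bigr)^{-1}$ is the harmonic-type constant. One verifies this by checking orthogonality to the row and column indicators, which holds because the residual sums to zero along each row and each column. Then $\sum_i \ddot D_e \DeltaY_{i,t} = c_g\,\widehat\tau^{\mathrm{sat}}_{g,t}$, using the cell-mean representation \eqref{eq:triple_diff_ols} from Proposition~\ref{prop:saturated_ols}, and $\sum_i \ddot D_e^2 = c_g$. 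Summing over blocks gives \eqref{eq:stacked_es_weights} with
\[
w_g^{\mathrm{FWL}}(e)=\frac{c_g(e)}{\sum_{g'\in\calGtrg(e)}c_{g'}(e)},\qquad c_g(e)=\Bigl(\sum_{s\in\{g,g_c(g)\}}\sum_{q\in\{0,1\}} n_{s,q}^{-1}\Bigr)^{-1},
\]
where the $n_{s,q}$ are the counts at time $g+e$. Each weight is the residual variance of the treatment indicator in its stack. The weights are strictly positive whenever all four cells of the stack are nonempty, which is what \Cref{as:shares} (positive cell shares) delivers, and they sum to one. This is part (i), and it holds numerically.

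For the probability limit, I divide $c_g$ by $n$. Then $c_g/n\pto(\sum_{s,q}p_{s,q}^{-1})^{-1}>0$ by the law of large numbers and \Cref{as:overlap}, so the weights converge to positive population limits. By Proposition~\ref{prop:saturated_ols}, $\widehat\tau^{\mathrm{sat}}_{g,g+e}\pto\ATT(g,g+e)=\CATT(g,e)$, and Slutsky's theorem gives (ii). The main obstacle I expect is bookkeeping rather than substance. The long-difference outcome $\DeltaY_{i,t,g}$ is defined relative to $g-1$, so the omitted $e=-1$ block is identically zero. The same control unit may also appear in several stacks. The block-diagonality argument must therefore be stated at the level of stacked observations $(i,t,g)$, not units; the block decomposition is exact once one indexes by $(i,t,g)$. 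Unbalanced panels only make the $n_{s,q}$ time-specific, which the formula already allows.
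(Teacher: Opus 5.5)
Your proposal is correct and follows the paper's proof essentially step for step. It uses the same ingredients: block-diagonal FWL over $(g,t)$ blocks, which kills cross-event-time terms; the $(+,-,-,+)$ inverse-cell-size residual giving $V_{g,e}=\bigl(\sum_{s,q} n_{s,q}^{-1}\bigr)^{-1}$ and the numerator $V_{g,e}\,\widehat{\tau}^{\mathrm{sat}}_{g,g+e}$; and the law of large numbers plus Slutsky for the limit.

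One detail to tighten concerns your verification of the residual. Orthogonality to the row and column indicators only shows that your candidate lies on the one-dimensional orthogonal complement, so it fixes the residual up to scale. You must also check that $D_e$ minus the candidate is additive, i.e.\ that its interaction contrast $1-c_g\sum_{s,q} n_{s,q}^{-1}$ vanishes. This is exactly what pins down $c_g$, and it plays the role of the paper's identity $R_e^{\top}MR_e=\|MR_e\|^2$.
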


\Cref{prop:stacked_es_estimands} highlights a advantage of the stacked DDD framework. While researchers are encouraged to compute the unrestricted cohort-specific estimates and explicitly aggregate them using transparent weights (such as cohort-size weights, as in Section \ref{sub:pooled}), running the fully-saturated stacked event-study regression provides a safe, simple alternative. It guarantees a sensible weighted average of cohort-level treatment effects, in the sense that it avoids the negative weights phenomenon entirely.

\begin{remark}[Implementing the stacked DDD as a linear regression]\label{rem:implementation_regression}
	For practitioners, the stacked DDD event-study estimator can be implemented as a single linear regression on the stacked dataset. After constructing the stacked dataset (one observation per unit $\times$ time period $\times$ stack), the regression specification \eqref{eq:stacked_es_reg} includes the following fixed effects and treatment indicators.
	\begin{enumerate}
		\item[{(FE1)}] \textbf{Stack $\times$ treatment status $\times$ time} $\lambda_{s,t,g}$;
		\item[{(FE2)}] \textbf{Stack $\times$ eligibility $\times$ time fixed effects} $\eta_{q,t,g}$;
		\item[{(TRT)}] \textbf{Event-time treatment indicators}.
	\end{enumerate}
	Within each stack $g$ at each time $t$, the four cells $(g,1), (g,0), (g_c,1), (g_c,0)$ have three degrees of freedom absorbed by FE1 and FE2 (the group indicator $\one\{S_i = g\}$, the eligibility indicator $\one\{Q_i = 1\}$, and a constant are jointly determined by $\lambda_{s,t,g}$ and $\eta_{q,t,g}$). The treatment indicator adds one more degree of freedom, thereby perfectly saturating the four cell means. This saturation is what ensures the regression recovers the triple difference, and guarantees a valid causal interpretation on the event-study coefficients (on the event-time treatment indicators).
\end{remark}

\subsection{Population estimands under different aggregation schemes}

The stack DDD event-study estimator \[\ESstack(e) = \sum_{g \in \calGtrg(e)} \widehat{\omega}_g(e)  \widehat{\ATT}_{\Stack_g}(g, g+e)\] depends on the aggregation weights $\widehat{\omega}_g(e) \geq 0$, $\sum_g \widehat{\omega}_g(e) = 1$. Different weight choices target different population parameters, and the stacked DDD framework makes these targets explicit. I now formally characterize what each weighting scheme identifies when the within-stack estimators are consistent for $\ATT(g, g+e) = \CATT(g,e)$ (\Cref{thm:identification}).

\subsubsection*{Cohort-size weights.} Set $\widehat \omega_g^{\mathrm{cohort}}(e) = n_{g,1} / \sum_{g' \in \calGtrg(e)} n_{g',1}$, where $n_{g,1}$ is the number of treated-eligible units in stack $\Stack_g$ and $\calGtrg(e) = \{g \in \calGtrg \mid g + e \leq T\}$ is the set of cohorts observed at event-time $e$. The population analogue is $\omega_g^{\mathrm{cohort}}(e) = \Prob(S_i = g \mid S_i \in \calGtrg(e),  Q_i = 1)$, the share of cohort $g$ among treated-eligible units observed at event-time $e$. The resulting estimand is the per-capita average treatment effect
\begin{equation}\label{eq:cohort_estimand}
    \ES^{\mathrm{cohort}}(e) \equiv \sum_{g \in \calGtrg(e)} \omega_g^{\mathrm{cohort}}(e) \CATT(g,e) = \E\left[\ATT(S_i,  S_i + e) \mid S_i + e \leq T,  Q_i = 1\right]~,
\end{equation}
which averages the cohort-specific effects weighted by the number of treated-eligible individuals in each cohort. This parameter has a direct welfare interpretation as the average effect experienced by a randomly drawn treated-eligible unit at event-time $e$, and coincides with the average effect of switching defined and discussed in \Cref{sub:existing_ddd}. It is the natural estimand when the policy question concerns aggregate impact---how much did the treated population benefit, on average? As \citet{de_chaisemartin_difference--differences_2024} emphasize, this per-capita parameter is the most policy-relevant estimand in many applications, since it reflects the actual distribution of treatment effects across the affected population.

\subsubsection*{Equal weights.} Set $\omega_g^{\mathrm{eq}}(e) = 1/|\calGtrg(e)|$, assigning each cohort equal influence regardless of its sample size. The target is the simple average of cohort-specific effects
\begin{equation}\label{eq:equal_estimand}
	\ES^{\mathrm{eq}}(e) \equiv \frac{1}{|\calGtrg(e)|}\sum_{g \in \calGtrg(e)} \CATT(g,e)~,
\end{equation}
which treats each policy adoption event---each cohort's experience of treatment---as an equally informative observation about the causal mechanism. This estimand answers the question ``what is the average treatment effect across the distinct adoption episodes observed at event-time $e$?'', weighting each cohort's evidence equally. It is appropriate when the research interest lies in the policy mechanism rather than the aggregate impact, or when one cohort is much larger than the others and the researcher does not want a single large cohort to dominate the estimate. Equal weights do not target a population-weighted parameter and should therefore be reported alongside cohort-size weights rather than as a primary specification.

\subsubsection*{Precision weights.} Set $\omega_g^{\mathrm{prec}}(e) = ({n_g/\sigma_g^2(e)})/({\sum_{g'\in\calGtrg(e)} n_{g'}/\sigma_{g'}^2(e)})$, where $\widehat{\sigma}_g^2(e)$ is a consistent estimator of the asymptotic variance of $\widehat{\ATT}_{\Stack_g}(g, g+e)$. This inverse-variance weighting minimizes the asymptotic variance of $\ESstack(e)$ when the within-stack estimators are asymptotically independent (an approximation when stacks share comparison units). The population target of precision weights is
\begin{equation}\label{eq:precision_estimand}
	\ES^{\mathrm{prec}}(e) \equiv \frac{\sum_{g \in \calGtrg(e)} \sigma_g^{-2}(e)  \CATT(g,e)}{\sum_{g \in \calGtrg(e)} \sigma_g^{-2}(e)}~,
\end{equation}
where $\sigma_g^2(e)$ is the asymptotic variance of the within-stack estimator. Unlike cohort-size and equal weights, the precision-weighted estimand depends on the data generating process through the variance terms $\sigma_g^2(e)$ and therefore does not have a fixed causal interpretation that is invariant to the sampling design. In particular, two different samples from the same population will in general target different weighted averages of $\CATT(g,e)$. Precision weights are the statistically optimal choice for minimizing estimation error, but the resulting estimand is harder to interpret substantively. Researchers using precision weights should report the realized weight shares $\widehat{\omega}_g^{\mathrm{prec}}(e)$ alongside point estimates.

\subsubsection*{General welfare weights.} The three schemes above are special cases of a general welfare-weighted aggregation. Suppose the researcher assigns welfare weights $v_g > 0$ reflecting the relative importance of cohort $g$'s treatment effect. The welfare-weighted event-study parameter is
\begin{equation}\label{eq:welfare_estimand}
	\ES^{v}(e) = \frac{\sum_{g \in \calGtrg(e)} v_g \CATT(g,e)}{\sum_{g \in \calGtrg(e)} v_g}~.
\end{equation}
The stacked DDD framework accommodates any choice of $v_g$ through the aggregation step, a flexibility that enhances interpretability in the case of stacked DDD. Making the welfare weights explicit is a key advantage of the stacked approach, as it separates the statistical problem (consistent estimation of $\CATT(g,e)$) from the aggregation problem (which cohorts matter more for the policy question at hand). This transparency echoes the recommendation of \citet{de_chaisemartin_two-way_2020} and \citet{de_chaisemartin_difference--differences_2024} that researchers should always report the weights underlying their treatment effect aggregation.

\subsection{Connection to Existing Estimators}\label{sub:existing_ddd}

\subsubsection*{The interaction-weighted DDD estimator.}
The stacked DDD is the natural triple-differences analogue of \citeauthor{sun_estimating_2021}'s (\citeyear{sun_estimating_2021}) interaction-weighted (IW) estimator for DiD. This connection reveals both the estimand targeted by the stacked estimator and the source of contamination in conventional 3WFE event-study specifications.
The event-study parameter is
\begin{equation}\label{eq:es_catt}
	\ES_{\mathrm{stack}}(e) = \sum_{g \in \calGtrg(e)} \omega_g(e) \CATT(g, e)~,
\end{equation}
where $\calGtrg(e) = \{g \in \calGtrg : g + e \leq T\}$ is the set of cohorts observed for at least $e$ post-treatment periods.

I define the IW-DDD estimator as
\begin{equation}\label{eq:iw_ddd}
	\widehat{\CATT}_{\mathrm{IW}}(e) = \sum_{g \in \calGtrg(e)} \widehat{\omega}_g(e) \widehat{\CATT}(g, e)~,
\end{equation}
where $\widehat{\CATT}(g,e)$ is a consistent estimator of $\CATT(g,e)$ and $\widehat{\omega}_g(e)$ are non-negative weights summing to one. When $\widehat{\CATT}(g,e)$ is the within-stack estimator and the weights coincide with those used in the stacked aggregation, the IW-DDD estimator is numerically identical to $\ESstack(e)$. Formally, by \Cref{prop:saturated_ols}, the within-stack saturated coefficient $\widehat\tau^{\mathrm{sat}}_{g,g+e}$ serves as the plug-in $\widehat\CATT(g,e)$ in \eqref{eq:iw_ddd}, and substituting into the IW-DDD definition gives
\begin{equation}\label{eq:iw_representation}
	\widehat\CATT_{\mathrm{IW}}(e) = \sum_{g\in\calGtrg(e)} \widehat\omega_g(e)\,\widehat\tau^{\mathrm{sat}}_{g,g+e} = \ESstack(e) ~.
\end{equation}

\subsubsection*{The average effect of switching}
Following \citet{de_chaisemartin_difference--differences_2024}, define the average effect of switching at event-time $e$ as
\begin{equation}\label{eq:avsw_iw}
	\mathrm{AS}(e) = \frac{\sum_{g \in \calGtrg(e)} \Prob(S_i = g, Q_i = 1) \CATT(g,e)}{\sum_{g \in \calGtrg(e)} \Prob(S_i = g, Q_i = 1)}~,
\end{equation}
the population-share-weighted average of cohort-specific treatment effects at exposure $e$. This coincides with the cohort-size-weighted event-study parameter $\ES^{\mathrm{cohort}}(e)$ from \eqref{eq:cohort_estimand}. Under the stacked DDD framework, $\mathrm{AS}(e)$ is estimated by the cohort-size-weighted stacked estimator.

\section{Identification}\label{sec:identification}

This section presents the identification strategy for stacked triple differences. I first establish nonparametric identification of the cohort-time average treatment effects within each stack, then show that the stacked construction mechanically eliminates the forbidden comparisons identified by \citet{strezhnev_decomposing_2023}, and finally develop testable implications of the identifying assumptions.

\subsection{Identification within Stacks}\label{subsec:id_within_stacks}

I now establish identification of $\ATT(g,t)$ within each stack. The main result shows that the triple difference of cell-mean outcome changes recovers the causal effect.

\begin{theorem}[Identification in Stacks]\label{thm:identification}
	Under \Cref{as:sampling}--\Cref{as:admissibility}, for each stack $\Stack_g$ with comparison group $g_c$, and for all post-treatment periods $t \geq g$~,
	\begin{align}
		\ATT(g,t) &= \E[\DeltaY_{i,t} \mid S_i = g, Q_i = 1] - \E[\DeltaY_{i,t} \mid S_i = g, Q_i = 0] \notag \\
		&\quad - \left(\E[\DeltaY_{i,t} \mid S_i = g_c, Q_i = 1] - \E[\DeltaY_{i,t} \mid S_i = g_c, Q_i = 0]\right)~. \label{eq:ra_id}
	\end{align}
\end{theorem}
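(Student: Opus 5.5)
The plan is the standard potential-outcomes argument for a difference-in-differences design, run on each of the four cells \eqref{eq:four_cells} and then combined. First I map observed long differences $\DeltaY_{i,t} = Y_{i,t} - Y_{i,g-1}$ to potential outcomes cell by cell using \eqref{eq:observed_outcome}.

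\begin{enumerate}
\item \emph{Cell $(g,1)$.} Here $Y_{i,t} = \Yitpo{g}$. Because $g-1 < g$, \Cref{as:noanticipation} gives $Y_{i,g-1} = Y_{i,g-1}(g) = Y_{i,g-1}(\infty)$. Adding and subtracting $\Yitpo{\infty}$ yields
\[
\E[\DeltaY_{i,t}\mid S_i=g,Q_i=1] = \ATT(g,t) + \E[\Yitpo{\infty}-Y_{i,g-1}(\infty)\mid S_i=g,Q_i=1].
\]
\item \emph{Cell $(g,0)$.} \Cref{as:nospillover} gives $Y_{i,s} = Y_{i,s}(\infty)$ for every $s$, so $\DeltaY_{i,t}$ is an untreated change.
\item \emph{Cells $(g_c,1)$ and $(g_c,0)$.} Admissibility (\Cref{as:admissibility}) requires $g_c > g+K \ge t$, and the time window has $t \le g+K$. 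So no comparison unit is treated at $t$ or at $g-1$. For eligible comparison units, \eqref{eq:observed_outcome} together with the convention $Y_{i,s}(g_c)=Y_{i,s}(\infty)$ for $s<g_c$ makes $\DeltaY_{i,t}$ an untreated change. Ineligible comparison units realize $Y(\infty)$ by construction.
\end{enumerate}

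\Cref{as:overlap} (or the positivity clause in \Cref{as:admissibility}) ensures all four conditional expectations are well defined. \Cref{as:sampling} supplies finite moments.

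Substituting these four expressions into the right-hand side of \eqref{eq:ra_id}, the right-hand side becomes $\ATT(g,t)$ plus a remainder. The remainder is the triple difference of untreated long changes $\E[\Yitpo{\infty}-Y_{i,g-1}(\infty)\mid S_i=s,Q_i=q]$ over $(s,q)$. It remains to show this remainder is zero.

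The step needing care is that \Cref{as:dddpct} is stated for one-period changes $Y_{i,r}(\infty)-Y_{i,r-1}(\infty)$, whereas we need it for the long change from $g-1$ to $t$. I will telescope:
\[
\Yitpo{\infty}-Y_{i,g-1}(\infty) = \sum_{r=g}^{t}\bigl(Y_{i,r}(\infty)-Y_{i,r-1}(\infty)\bigr).
\]
Linearity of conditional expectation then writes the remainder as $\sum_{r=g}^{t} D_r$. Here $D_r$ is the one-period triple difference, namely the $(g)$ within-group eligible/ineligible gap minus the $(g_c)$ gap. Each $r$ in the sum satisfies $2 \le g \le r \le t \le g+K < g_c$, so $r \le g_c$. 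Hence \eqref{eq:dddpct} applies to the pair $(g,g_c)$ at every such $r$, giving $D_r=0$. So the remainder vanishes and \eqref{eq:ra_id} follows.

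The remaining checks are the range conditions in \Cref{as:dddpct} ($t\ge 2$ and $t\le g_c$). These hold because $g\in\{2,\dots,T\}$ and $g_c\in\mathcal{C}(g)$. If $g_c=\infty$, the restriction $t \le g_c$ is vacuous.
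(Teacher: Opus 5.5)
Your proof is correct and follows the same route as the paper. You use no anticipation to make the baseline $Y_{i,g-1}$ an untreated outcome, then DDD-PCT to impute the counterfactual trend of the treated-eligible cell from the other three cells; the paper solves for $\E[Y_{i,t}(\infty)\mid S_i=g,Q_i=1]$ and substitutes it, whereas you show the remainder vanishes, which is the same algebra. Two of your steps make precise what the paper's proof only asserts: the telescoping of the one-period condition \eqref{eq:dddpct} over $r=g,\dots,t$, and the check that the three comparison cells realize untreated outcomes because $t \le g+K < g_c$.
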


\begin{remark}[Pairwise versus global parallel changes-in-trends]\label{rmk:pairwise_pt}
	A key advantage of the stacked DDD framework is that \Cref{as:dddpct} need hold only pairwise between each treatment cohort $g$ and its specific comparison group $g_c$, rather than globally across all cohort pairs simultaneously. This is a weaker requirement than the typical ``global'' parallel changes-in-trends assumption. In the stacked DDD framework, one explicitly chooses which comparison group $g_c$ to pair with each treated cohort $g$, and the identifying assumption is tailored to each pair individually. If parallel changes-in-trends fails for one particular comparison group---say, because a concurrent policy shock affects the eligibility-specific trend in that group---only that stack's estimates are affected; the remaining stacks retain their validity.
\end{remark}

\begin{remark}[Role of each cell in the stack]\label{rmk:cell_roles}
	Each of the four cells in the stack $\Stack_g$ plays a distinct identifying role. The {treated-eligible} cell $(S_i = g, Q_i = 1)$ provides observed treated outcomes---these units are the target population for $\ATT(g,t)$. The {within-group ineligible} cell $(S_i = g, Q_i = 0)$ measures the group-specific trend for cohort $g$ among ineligible units, enabling the researcher to separate the treatment effect from group-specific confounds. The {comparison-eligible} cell $(S_i = g_c,  Q_i = 1)$ measures the eligibility-specific trend absent treatment. Finally, the {comparison-ineligible} cell $(S_i = g_c,  Q_i = 0)$ anchors the comparison group's baseline, enabling the third difference that isolates the causal effect from eligibility-specific trends.
\end{remark}

\begin{remark}[Identification without homogeneity]\label{rmk:id_heterogeneity}
	\Cref{thm:identification} identifies $\ATT(g,t)$ under arbitrary within-cohort treatment effect heterogeneity. The individual-level effects $Y_{i,t}(g) - Y_{i,t}(\infty)$ may vary freely across units sharing the same $(g, t)$, and the identification formula \eqref{eq:ra_id} recovers the average of these heterogeneous effects over the treated-eligible population without imposing any restriction on their joint distribution. In particular, no constant-effects assumption is needed.
\end{remark}

Several features of \Cref{thm:identification} merit emphasis. First, identification uses {only} data within the stack $\Stack_g$---that is, units with $S_i \in \{g, g_c\}$. No information from other treatment cohorts is needed, and no cross-stack restrictions are imposed.
Second, the identification formula in \eqref{eq:ra_id} is a {regression adjustment} representation. It recovers the counterfactual by modeling conditional outcome changes in each of the three comparison cells---$(g, 0)$, $(g_c, 1)$, and $(g_c, 0)$---and combining them to form the triple difference.
Third, the use of the long difference $\DeltaY_{i,t}$ avoids the compositional issues that arise when the comparison group's treatment status changes over time.
Finally, the identification formula \eqref{eq:ra_id} is the unconditional triple difference of cell means. When pre-treatment covariates are available, one can leverage covariate-adjusted identification results developed in Appendix \ref{app:covariates}.

\section{Asymptotic Theory: Estimation and Inference}\label{sec:asymptotics}

This section establishes the asymptotic properties of the stacked DDD estimator. The key challenge is that comparison units may be shared across stacks (e.g., never-treated units appear in every stack), inducing cross-stack dependence that must be accounted for in variance estimation and inference.

\subsection{Estimation}

I begin with within-stack estimation of $\ATT(g,t)$, then describe aggregation across stacks to form the event-study parameter $\ESstack(e)$, discuss the treatment of multiple comparison groups, and outline the practical implementation algorithm.

Throughout, fix a treatment cohort $g \in \calGtrg$ with associated comparison group $g_c$ and stack $\Stack_g$. Let $n_g$ denote the total number of units in the stack. Write $n_{s,q} = \sum_{i \in \Stack_g} \one\{S_i = s,  Q_i = q\}$ for the number of units in cell $(s,q)$, and $n_{g,1}$ for the number of treated-eligible units.

\subsubsection{Within-Stack Estimation}\label{sec:within-stack}

The identification formula (\Cref{thm:identification}) expresses $\ATT(g,t)$ as a triple difference of cell-mean outcome changes. The natural sample analogue is the sample triple difference
\begin{equation}\label{eq:att-sample}
	\widehat{\ATT}_{\Stack_g, g_c}(g,t) = \big(\overline{\DeltaY}_{g,1,t} - \overline{\DeltaY}_{g,0,t}\big) - \big(\overline{\DeltaY}_{g_c,1,t} - \overline{\DeltaY}_{g_c,0,t}\big)~,
\end{equation}
where $\overline{\DeltaY}_{s,q,t} = n_{s,q}^{-1}\sum_{i\colon S_i = s,  Q_i = q} \DeltaY_{i,t}$ is the cell-specific sample mean of long differences $\DeltaY_{i,t} = Y_{i,t} - Y_{i,g-1}$.

As shown in Section \ref{sub:saturated}, this estimator is numerically identical to the OLS coefficient on the interaction term in the fully saturated regression \eqref{eq:saturated} within the stack. It requires no functional form assumptions and no nuisance function estimation---it is a purely design-based estimator that exploits only the four-cell structure within each stack. For the following results, I maintain the following assumption.

\begin{assumption}[Non-vanishing Stack Shares]\label{as:shares}
	Let $n$ denote the total number of unique units in the global sample. Let $\Stack_g$ denote the set of indices $i$ belonging to stack $g$, and let $n_g = \sum_{1 \leq i \leq n} \one\{i \in \Stack_g\}$ be the total number of units in stack $g$. Let $n_{s,q}$ be the number of units in cohort $s \in \{g, g_c\}$ with eligibility $q \in \{0,1\}$. As $n \to \infty$,
	\begin{enumerate}
		\item[{(i)}] \emph{Cohort shares.} For each $g \in \calG = \calGtrg\cup\{\infty\}$, $n_g/n \pto \lambda_g \equiv \Prob(i \in \Stack_g) > 0$.
		\item[{(ii)}] \emph{Within-cohort cell shares.} For each $g \in \calG$ and each cell $(s,q) \in \{0,1\}^2$ with $\Prob(S_i = s, Q_i = q \mid G_i = g)>0$, the empirical share satisfies $n^{(g)}_{s,q}/n_g \pto \pi^{(g)}_{s,q} \equiv \Prob(S_i = s, Q_i = q \mid G_i = g) > 0$.
		\item[{(iii)}] \emph{Stack-level shares.} For each treated cohort $g \in \calGtrg$ with admissible comparison $g_c \in \mathcal{C}(g)$ and each cell $(s,q)$ in stack $g$, the within-stack share $n_{s,q}/n_g \pto \pi_{s,q}^{(g,g_c)} > 0$, where $n_g = \sum_{s,q} n_{s,q}$ aggregates over the four cells of stack $g$.
	\end{enumerate}
\end{assumption}

\Cref{as:shares} requires that each cohort, each within-cohort cell, and each stack-level cell retain non-vanishing population shares. The cohort-superscripted notation $\pi^{(g)}_{s,q}$ makes explicit that the cell share is a \emph{within-cohort} conditional probability, not a marginal—different cohorts may have different $(S,Q)$-cell distributions, and the stack-level share $\pi^{(g,g_c)}_{s,q}$ derived in (iii) inherits this cohort-indexed structure. The primary sampling frame is i.i.d.\ at the unit level (\Cref{as:sampling}), hence cell shares are limits of sample proportions. Under the i.i.d. sampling frame, \Cref{as:shares}(i)--(iii) follow from the weak law of large numbers.

\begin{prop}[Within-Stack Consistency]\label{prop:within_stack_consistency}
	Under \Cref{as:sampling}--\Cref{as:admissibility} and \Cref{as:shares}, the sample triple difference is consistent for the cohort-time average treatment effect on the treated:
	\begin{equation}\label{eq:within_consistency}
		\widehat{\ATT}_{\Stack_g, g_c}(g,t) \pto \ATT(g,t) \quad \text{as } n \to \infty~.
	\end{equation}
\end{prop}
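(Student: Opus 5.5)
The plan is to combine a law of large numbers for each cell mean with the continuous mapping theorem, and then invoke \Cref{thm:identification} to identify the probability limit.

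\textbf{Step 1: rewrite each cell mean as a ratio of sample averages.} Fix a cell $(s,q)\in\{g,g_c\}\times\{0,1\}$ and a post-treatment period $t\ge g$. Write
\[
\overline{\DeltaY}_{s,q,t}=\frac{n^{-1}\sum_{i=1}^n \one\{S_i=s,Q_i=q\}\DeltaY_{i,t}}{n^{-1}\sum_{i=1}^n \one\{S_i=s,Q_i=q\}}.
\]
Both the numerator and the denominator are averages of i.i.d.\ terms, by \Cref{as:sampling}.

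\textbf{Step 2: apply the weak law of large numbers to numerator and denominator.}
\begin{itemize}
\item The summand $\one\{S_i=s,Q_i=q\}\DeltaY_{i,t}$ has a finite first moment, since $|\DeltaY_{i,t}|\le |Y_{i,t}|+|Y_{i,g-1}|$ and $\E\|W_i\|^2<\infty$. The numerator therefore converges in probability to $\E[\one\{S_i=s,Q_i=q\}\DeltaY_{i,t}]$.
\item The denominator converges to $p_{s,q}=\Prob(S_i=s,Q_i=q)$.
\item $p_{s,q}>0$ for every $s\in\{g,g_c\}$ by \Cref{as:overlap} and the definition of $\mathcal{C}(g)$ in \Cref{as:admissibility}. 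Equivalently, \Cref{as:shares}(iii) guarantees that each cell is nonempty with probability tending to one, so the ratio is well-defined asymptotically.
\end{itemize}
The continuous mapping theorem, applied to $(a,b)\mapsto a/b$ at $b=p_{s,q}>0$, then gives $\overline{\DeltaY}_{s,q,t}\pto \E[\DeltaY_{i,t}\mid S_i=s,Q_i=q]$.

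\textbf{Step 3: combine the four cells and identify the limit.} The estimator \eqref{eq:att-sample} is a fixed linear combination, with coefficients $\pm1$, of the four cell means. A second application of the continuous mapping theorem, together with Slutsky, shows it converges in probability to the population triple difference on the right-hand side of \eqref{eq:ra_id}. \Cref{thm:identification} equates that expression to $\ATT(g,t)$, using no anticipation, no spillover, DDD-PCT for the pair $(g,g_c)$, and the fact that $g_c>g+K$, so that comparison units realize $Y(\infty)$ throughout the window.

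\textbf{Main obstacle: unbalanced panels.} The only delicate point is that the panel need not be balanced, so $\DeltaY_{i,t}$ is observed only when both $t$ and $g-1$ lie in $\mathcal{T}_i$. I would handle this by treating observability as part of $W_i$ and restricting each cell to units observed in both periods. This requires missingness to be ignorable within cells so that the conditional means coincide, which I would state explicitly (or simply take $\mathcal{T}_i$ to be full within the stack window). Otherwise, \Cref{as:shares} is essentially implied by the law of large numbers and the argument is routine.
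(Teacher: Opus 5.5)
Your proposal is correct and follows essentially the same route as the paper, which (in Step~4 of the proof of \Cref{prop:saturated_ols}, since the sample triple difference equals $\widehat{\tau}_{g,t}^{\mathrm{sat}}$) applies the weak law of large numbers to each of the four cell means and then invokes \Cref{thm:identification} to identify the limit as $\ATT(g,t)$. Two parts of your write-up are slightly more careful than the paper's version: writing each cell mean as a ratio of full-sample averages handles the random cell sizes explicitly, and you flag the unbalanced-panel issue, which the paper leaves implicit.
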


\Cref{prop:within_stack_consistency} extends to estimated weights satisfying $\widehat{\omega}_g(e) \pto \omega_g(e)$, which holds in particular for cohort-size weights $\widehat{\omega}_g(e) = n_{g,1}/\sum_{g'\in\calGtrg(e)} n_{g',1}$ under \Cref{as:shares}.

\subsubsection{Aggregation Across Stacks}\label{sec:aggregation}

The within-stack estimators $\widehat{\ATT}_{\Stack_g, g_c(g)}(g,t)$ recover cohort-time treatment effects separately for each $g$. To construct event-study parameters, I aggregate across cohorts at each relative event-time $e$. Define the stacked event-study estimator
\begin{equation}\label{eq:es-stack}
	\ESstack(e) = \sum_{\substack{g \in \calGtrg :\\ g + e  \in  \{g-L, \ldots, g+K\}}} \widehat{\omega}_g(e)  \cdot  \widehat{\ATT}_{\Stack_g, g_c(g)}(g, g+e)~,
\end{equation}
where $L$ and $K$ are the pre- and post-treatment window lengths and $\widehat{\omega}_g(e) \geq 0$ with $\sum_g \widehat{\omega}_g(e) = 1$ for each $e$. The summation is over all cohorts $g$ for which event-time $e$ falls within the stack window. Following my earlier discussion, I consider two choices of weights.

\medskip
\noindent\textbf{1.\ Cohort-size weights.}
Set
\begin{equation}\label{eq:weights-cohort}
	\widehat{\omega}_g(e) = \frac{n_{g,1}}{\displaystyle\sum_{g' \in \calGtrg:  g'+e  \in  \{g'-L, \ldots, g'+K\}} n_{g',1}},
\end{equation}
so that each cohort's contribution is proportional to the number of treated-eligible units it contains. This weighting targets the population event-study parameter $\ES(e) = \E[\ATT(G_i, G_i + e) \mid G_i + e \in \{G_i - L, \ldots, G_i + K\}]$, which weights cohorts by their population shares. It is the natural choice when the goal is to estimate an average effect across the treated population.

\medskip
\noindent\textbf{2.\ Equal weights.}
Set
\begin{equation}\label{eq:weights-equal}
	\widehat{\omega}_g(e) = \frac{1}{\big|\{g' \in \calGtrg : g'+e \in \{g'-L, \ldots, g'+K\}\}\big|},
\end{equation}
which assigns each cohort equal influence regardless of its sample size or estimation precision. Equal weighting is the most transparent aggregation scheme and places equal value on detecting treatment effects in small and large cohorts alike. However, it has no clear population interpretation and may be inefficient when cohort sizes vary substantially.

\begin{remark}
	The choice among these two weights involves a familiar trade-off. Cohort-size weights have the most transparent causal interpretation, targeting the ``per unit'' (of treatment) ATT. Equal weights are the most transparent and are robust to concerns about one or two large cohorts dominating the aggregation, but they may sacrifice both efficiency and interpretability when cohort sizes are heterogeneous. In practice, I recommend reporting results under cohort-size weights as the primary specification, with equal weights as a robustness check. I note that precision weights are statistically efficient but may assign opaque weights that change across event-times, making causal interpretation difficult. 
\end{remark}

\subsection{Inference}

The stacked event-study regression \eqref{eq:stacked_es_reg} delivers point estimates $\widehat{\tau}_e$ that are convex combinations of the within-stack triple differences (\Cref{prop:stacked_es_estimands}). However, standard OLS standard errors from this regression---the heteroskedasticity-robust  standard errors---do not correctly account for the cross-stack dependence induced by shared comparison units. It turns out the the cluster-robust variance estimators (CRVE) clustered at the original unit level {do} correctly recover the asymptotic variance.
In the results that follow, all limits are taken as $n \to \infty$ with the number of periods $T$ and the number of cohorts $|\calGtrg|$ held fixed.

\begin{assumption}[Finite Second Moments]\label{as:moments}
	For each stack $\Stack_g$ with comparison group $g_c$, each cell $(s,q)$, and each $t$ in the stack window, $\E[Y_{i,t}^2 \mid S_i = s, Q_i = q] < \infty$.
\end{assumption}

\Cref{as:moments} requires that outcomes have finite variance in each cell.

\begin{theorem}[Within-Stack Asymptotic Normality]\label{thm:within_stack_clt}
	Under \Cref{as:sampling}--\Cref{as:admissibility}, \Cref{as:shares}, and \Cref{as:moments}, for each stack $\Stack_g$ with comparison group $g_c$ and for $t \geq g$, {writing $\DeltaY_{i, t} \equiv Y_{i, t} - Y_{i, g - 1}$ for the long difference from the stack's baseline period $g - 1$,}
	\begin{equation}\label{eq:within_clt}
		\sqrt{n_g}\Big(\widehat{\ATT}_{\Stack_g, g_c}(g,t) - \ATT(g,t)\Big) \dto \mathcal{N}\big(0, \Sigma_{g,t,g_c}\big)~,
	\end{equation}
	where the asymptotic variance is
	\begin{equation}\label{eq:within_var}
		\Sigma_{g,t,g_c} = \sum_{(s,q)} \frac{1}{\pi_{s,q}} \Var\big(\DeltaY_{i,t} \mid S_i = s, Q_i = q\big)~,
	\end{equation}
	{and $\pi_{s, q} \equiv \lim_{n \to \infty} n_{s, q} / n_g$ is the \emph{within-stack} conditional share of cell $(s, q)$.}
	The influence function for unit $i \in \Stack_g$ is
	\begin{equation}\label{eq:if_decomp}
		\psi_{\Stack_g}(W_i; g, t, g_c) = \sum_{(s,q)} \frac{\one\{S_i = s,  Q_i = q\}}{\pi_{s,q}} c_{s,q} \big(\DeltaY_{i,t} - \E[\DeltaY_{i,t} \mid S_i = s, Q_i = q]\big)~,
	\end{equation}
	where $c_{g,1} = +1$, $c_{g,0} = -1$, $c_{g_c,1} = -1$, $c_{g_c,0} = +1$ are the signs of the triple difference.
\end{theorem}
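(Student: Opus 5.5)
The plan is to write the sample triple difference \eqref{eq:att-sample} as a smooth function of cell-level sample moments and apply the multivariate CLT with the delta method. The limit is then identified as $\ATT(g,t)$ via \Cref{thm:identification}.

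\textbf{Step 1: ratio representation.} For each of the four cells $(s,q)$ of $\Stack_g$, put
\[
A_{s,q} = n_g^{-1}\sum_{i\in\Stack_g}\one\{S_i=s,Q_i=q\}\DeltaY_{i,t}, \qquad B_{s,q} = n_{s,q}/n_g .
\]
Then $\overline{\DeltaY}_{s,q,t}=A_{s,q}/B_{s,q}$, and
\[
\widehat{\ATT}_{\Stack_g,g_c}(g,t)=\sum_{(s,q)}c_{s,q}\,A_{s,q}/B_{s,q}.
\]
Let $m_{s,q}=\E[\DeltaY_{i,t}\mid S_i=s,Q_i=q]$. By \Cref{thm:identification}, $\sum c_{s,q}m_{s,q}=\ATT(g,t)$. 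So it suffices to study $\sum c_{s,q}(\overline{\DeltaY}_{s,q,t}-m_{s,q})$.

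\textbf{Step 2: linearization.} Each term linearizes as
\[
\overline{\DeltaY}_{s,q,t}-m_{s,q} = \frac{1}{B_{s,q}}\, n_g^{-1}\sum_{i\in\Stack_g}\one\{S_i=s,Q_i=q\}\bigl(\DeltaY_{i,t}-m_{s,q}\bigr).
\]
This identity is exact, with no delta-method remainder, because centering at $m_{s,q}$ removes the denominator's randomness from the numerator.

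\textbf{Step 3: CLT and Slutsky.} Conditioning on membership in $\Stack_g$ (i.e.\ $S_i\in\{g,g_c\}$) preserves the i.i.d.\ structure of \Cref{as:sampling}, and $n_g\to\infty$ since $n_g/n\pto\lambda_g>0$. The summands $\one\{\cdot\}(\DeltaY_{i,t}-m_{s,q})$ are then i.i.d.\ with mean zero given stack membership. Their variance is finite by \Cref{as:moments}, since $\DeltaY$ is a difference of two square-integrable outcomes. The Lindeberg–Lévy CLT applies to
\[
n_g^{-1/2}\sum_{i\in\Stack_g}\psi_{\Stack_g}(W_i;g,t,g_c),
\]
with $\psi$ exactly as in \eqref{eq:if_decomp} with $\pi_{s,q}$ in place of $B_{s,q}$. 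Slutsky's lemma, together with $B_{s,q}\pto\pi_{s,q}>0$ from \Cref{as:shares}(iii), replaces $B_{s,q}$ by $\pi_{s,q}$.

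\textbf{Step 4: variance.} The cell indicators are mutually exclusive, so cross products vanish in $\E[\psi^2]$. Each cell contributes
\[
c_{s,q}^2\,\pi_{s,q}^{-2}\,\E\bigl[\one\{\cdot\}(\DeltaY-m_{s,q})^2\bigr]=\pi_{s,q}^{-1}\Var(\DeltaY_{i,t}\mid S_i=s,Q_i=q),
\]
using $\E[\one\{\cdot\}\mid i\in\Stack_g]=\pi_{s,q}$ and $c_{s,q}^2=1$. Summing over cells gives \eqref{eq:within_var}.

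The main obstacle is the sampling-frame bookkeeping in Step 3. The stack is a random subsample with random size $n_g$, and the cell shares should be read conditional on stack membership; this is the one place the argument could break. I would handle it by first proving the CLT at rate $\sqrt{n}$ over the full sample, with summands $\one\{i\in\Stack_g\}\psi$, and then rescaling by $\sqrt{n_g/n}\pto\sqrt{\lambda_g}$. This avoids any conditioning on $n_g$.
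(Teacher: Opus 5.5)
Your proposal is correct and follows essentially the same route as the paper's proof. Both rest on the exact identity $\overline{\DeltaY}_{s,q,t}-m_{s,q} = (n_g/n_{s,q})\, n_g^{-1}\sum_{i\in\Stack_g}\one\{S_i=s,Q_i=q\}(\DeltaY_{i,t}-m_{s,q})$, then use Slutsky with $n_{s,q}/n_g \pto \pi_{s,q}$, an i.i.d. CLT for the influence-function sum, and the same cellwise variance computation. The one difference is your device of running the CLT over the full sample at rate $\sqrt{n}$ and rescaling by $\sqrt{n_g/n}\pto\sqrt{\lambda_g}$, which handles the random stack size more carefully than the paper, which applies the CLT directly to the stack-level sum.
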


The influence function \eqref{eq:if_decomp} has a transparent structure. Each unit contributes through its cell-specific deviation from the cell mean, weighted by the inverse of its cell's population share and signed according to the triple-difference pattern. The variance \eqref{eq:within_var} is the sum of cell-specific variance contributions, each scaled by the inverse cell share---larger cells contribute less variance per observation, as expected.

I now turn to the asymptotic distribution of the aggregated event-study estimator $\ESstack(e)$, which combines within-stack treatment effect estimates across cohorts using weights $\omega_g(e)$.

\begin{theorem}[Stacked Event-Study CLT]\label{thm:agg_clt}
	Under the conditions of \Cref{thm:within_stack_clt}, for deterministic weights $\omega_g(e) \geq 0$ satisfying $\sum_{g \in \calGtrg} \omega_g(e) = 1$ and for event-time $e \in \{0,\ldots,K\}$:
	\begin{equation}\label{eq:agg_clt}
		\sqrt{n}\Big(\ESstack(e) - \ES(e)\Big) \dto \mathcal{N}\big(0, \Vstack(e)\big),
	\end{equation}
	where $n = \sum_{g \in \calGtrg} n_g$ and the form of $\Vstack(e)$ depends on whether stacks share control units.
\end{theorem}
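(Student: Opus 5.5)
The argument stacks the influence-function representation from \Cref{thm:within_stack_clt} across cohorts and applies a single multivariate CLT at the level of the original i.i.d.\ units, not at the stack level. Fix $e\in\{0,\ldots,K\}$ and let $\calGtrg(e)$ be the cohorts contributing at $e$. By \Cref{prop:within_stack_consistency} and \Cref{thm:identification}, $\ATT(g,g+e)=\CATT(g,e)$. Hence
\[
\ESstack(e)-\ES(e)=\sum_{g\in\calGtrg(e)}\omega_g(e)\big(\widehat{\ATT}_{\Stack_g,g_c(g)}(g,g+e)-\ATT(g,g+e)\big),
\]
and the weights are deterministic, so no delta-method term for $\widehat\omega$ arises.

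\textbf{Step 1 (linearize each stack in terms of all $n$ units).} Each cell mean satisfies $\overline{\DeltaY}_{s,q,t}=\big(n^{-1}\sum_{i=1}^n \one\{S_i=s,Q_i=q\}\DeltaY_{i,t}\big)/\big(n^{-1}\sum_{i=1}^n\one\{S_i=s,Q_i=q\}\big)$. Expanding this ratio with the WLLN and \Cref{as:shares}, I will write
\[
\sqrt n\big(\widehat{\ATT}_{\Stack_g}-\ATT(g,g+e)\big)=n^{-1/2}\sum_{i=1}^n\phi_g(W_i)+\smallOp(1),
\]
where
\[
\phi_g(W_i)=\sum_{(s,q)}c_{s,q}\,p_{s,q}^{-1}\,\one\{S_i=s,Q_i=q\}\big(\DeltaY_{i,g+e}-\E[\DeltaY_{i,g+e}\mid S_i=s,Q_i=q]\big)
\]
and $p_{s,q}=\Prob(S_i=s,Q_i=q)>0$ by \Cref{as:overlap}. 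This is the within-stack influence function \eqref{eq:if_decomp} rescaled from $n_g$ to $n$, using $\pi_{s,q}=p_{s,q}/\lambda_g$. The key feature is that $\phi_g$ is a function of a single unit's data $W_i$, even when unit $i$ also belongs to other stacks.

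\textbf{Step 2 (joint CLT).} The vector $(\phi_g(W_i))_{g\in\calGtrg(e)}$ is i.i.d.\ across $i$ by \Cref{as:sampling}, has mean zero, and has finite second moments by \Cref{as:moments}. The Lindeberg--L\'evy CLT therefore gives joint asymptotic normality, with covariance matrix $\Phi$ whose entries are $\Phi_{gg'}=\E[\phi_g\phi_{g'}]$. Applying Slutsky's theorem and the continuous mapping theorem to the fixed linear combination yields \eqref{eq:agg_clt} with $\Vstack(e)=\omega(e)^\top\Phi\,\omega(e)$.

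\textbf{Step 3 (form of $\Vstack(e)$).} The normalization in the statement uses $n=\sum_g n_g$ rather than the number of unique units. Because $T$ and $|\calGtrg|$ are fixed, the two differ by a factor converging to a positive constant, so the limiting variance only rescales. For $g\neq g'$, the product $\phi_g\phi_{g'}$ is nonzero only on cells $(s,q)$ that appear in both stacks. If the stacks share no control units, the cross terms vanish and $\Vstack(e)=\sum_g\omega_g(e)^2\Sigma_{g,g+e,g_c}/\lambda_g$. If they share a comparison cohort $g_c$, the cross term equals
\[
\sum_{q}p_{g_c,q}^{-1}\,\mathrm{Cov}\big(\DeltaY^{(g)}_{i,g+e},\DeltaY^{(g')}_{i,g'+e}\mid S_i=g_c,Q_i=q\big).
\]
The two outcome differences here have different baselines and target periods, so this covariance is generally nonzero and yields the dependent-stack form.

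\textbf{Main obstacle.} Step 1 is the delicate part. The expansion must be written in terms of unique units rather than stack-level observations, because the naive approach of treating stacks as independent samples gives the wrong variance. Once the sum over $i=1,\ldots,n$ is set up correctly, the remaining steps are routine.
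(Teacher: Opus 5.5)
Your proposal is correct and follows essentially the paper's route. The paper also linearizes each within-stack estimator and rewrites the aggregate as a single average $n^{-1}\sum_{i=1}^n\phi_i(e)$ over unique units, with $\phi_i(e)\to\sum_{g:\,i\in\Stack_g}(\omega_g(e)/\lambda_g)\,\psi_{\Stack_g}(W_i)$; this is your $\sum_g\omega_g(e)\phi_g(W_i)$, since $\pi_{s,q}\lambda_g=p_{s,q}$. It then applies a scalar Lindeberg--L\'evy CLT where you use a joint CLT followed by a fixed linear map. Your direct linearization with the deterministic shares $p_{s,q}$ is slightly cleaner than the paper's, whose $\phi_i(e)$ still carries the random factors $n/n_g$ and so is i.i.d.\ only in the limit, and you correctly spot that the statement's $n=\sum_g n_g$ differs from the unique-unit $n$ used in the proof, which only rescales $\Vstack(e)$.
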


The key subtlety in the aggregated result is that when stacks share comparison units---the most common case in practice, where every stack uses the never-treated group $S_i = \infty$ as its comparison---the within-stack estimators are \emph{not} independent. The shared comparison units induce cross-stack correlation that must be accounted for in the variance formula.
To see why, consider a unit $i$ with $S_i = \infty$ that belongs to stacks $\Stack_g$ and $\Stack_{g'}$ for two distinct cohorts $g \neq g'$. The influence function contribution of this unit to the two within-stack estimators is generally non-zero for both, creating dependence between $\widehat{\ATT}_{\Stack_g, g_c}(g, g+e)$ and $\widehat{\ATT}_{\Stack_{g'}, g_c}(g', g'+e)$. The following result characterizes the resulting variance structure.

\begin{prop}[Variance with Shared Controls]\label{prop:shared_var}
	When stacks share comparison units (\eg, $g_c = \infty$ for all stacks), the asymptotic variance of the aggregated estimator is:
	\begin{equation}\label{eq:shared_var}
		\Vstack(e) = \sum_{g \in \calGtrg} \sum_{g' \in \calGtrg} \omega_g(e)  \omega_{g'}(e)  \mathrm{Cov}\big(\psi_{\Stack_g}(W_i; g, g+e, g_c), \psi_{\Stack_{g'}}(W_i; g', g'+e, g_c)\big),
	\end{equation}
	where the covariance is non-zero when stacks $\Stack_g$ and $\Stack_{g'}$ share comparison units. When stacks use distinct comparison groups ($g_c(g) \neq g_c(g')$ with no shared units), the cross-terms vanish and:
	\begin{equation}\label{eq:indep_var}
		\Vstack(e) = \sum_{g \in \calGtrg} \omega_g(e)^2  \frac{\Sigma_{g, g+e, g_c(g)}}{n_g} .
	\end{equation}
\end{prop}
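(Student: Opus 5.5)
The plan is to derive \eqref{eq:shared_var} from an asymptotically linear representation of $\ESstack(e)$ in terms of the original i.i.d.\ units, not in terms of stack-level observations. The first step rewrites each within-stack estimator using \Cref{thm:within_stack_clt}. For stack $\Stack_g$ with baseline $g-1$ and $t=g+e$,
\[
\widehat{\ATT}_{\Stack_g, g_c}(g,g+e) - \ATT(g,g+e) = \frac{1}{n_g}\sum_{i=1}^n \one\{i \in \Stack_g\}\,\psi_{\Stack_g}(W_i; g, g+e, g_c) + \smallOp(n^{-1/2}).
\]
The influence function $\psi_{\Stack_g}$ is a measurable function of $W_i$ alone. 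It is identically zero for units outside $\Stack_g$, because it only involves the indicators $\one\{S_i=s,Q_i=q\}$ for $s \in \{g,g_c\}$. So I can write the sum over all $n$ unique units rather than over stack membership.

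The second step replaces $n_g$ by $n\lambda_g$, using \Cref{as:shares}(i) and Slutsky. This gives
\[
\sqrt{n}\bigl(\widehat{\ATT}_{\Stack_g}(g,g+e)-\ATT(g,g+e)\bigr) = n^{-1/2}\sum_{i=1}^n \lambda_g^{-1}\psi_{\Stack_g}(W_i)+\smallOp(1).
\]
To match the displayed formula, I absorb the factor $\lambda_g^{-1}$ into $\psi_{\Stack_g}$. Equivalently, I define the cell shares $\pi_{s,q}$ relative to $n$ rather than $n_g$. Since the weights are deterministic, taking the weighted sum over $g \in \calGtrg(e)$ yields
\[
\sqrt n\bigl(\ESstack(e)-\ES(e)\bigr)=n^{-1/2}\sum_{i=1}^n \sum_g \omega_g(e)\psi_{\Stack_g}(W_i)+\smallOp(1).
\]
This uses $\ES(e)=\sum_g\omega_g(e)\ATT(g,g+e)$, which follows from \Cref{thm:identification}.

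The third step applies the Lindeberg--L\'evy CLT to the i.i.d.\ scalar summands $\sum_g \omega_g(e)\psi_{\Stack_g}(W_i)$. Each has mean zero because every cell-centered term has conditional mean zero. Each has finite variance by \Cref{as:moments} together with $\pi_{s,q}>0$. Expanding the variance of the sum gives exactly the double sum of $\omega_g\omega_{g'}\mathrm{Cov}(\psi_{\Stack_g},\psi_{\Stack_{g'}})$. Note that the i.i.d.\ structure lives at the unit level. A unit appearing in several stacks contributes one summand that loads on several influence functions. This is precisely why the cross-covariances appear.

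The final step handles the distinct-comparison case. The supports of $\psi_{\Stack_g}$ and $\psi_{\Stack_{g'}}$ are the cells with $S_i\in\{g,g_c(g)\}$ and $S_i\in\{g',g_c(g')\}$. If these sets are disjoint, the product of the two influence functions is identically zero, so the covariance equals $\E[\psi_{\Stack_g}\psi_{\Stack_{g'}}]=0$. The diagonal terms then reduce to $\Sigma_{g,g+e,g_c(g)}$ rescaled by stack size, which gives \eqref{eq:indep_var}.

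I expect the main obstacle to be bookkeeping about normalization, not probability. \eqref{eq:within_var} is stated at the $\sqrt{n_g}$ rate, while \eqref{eq:shared_var}--\eqref{eq:indep_var} mix a $\sqrt{n}$ rate with a $1/n_g$ factor. I would fix one convention: influence functions are scaled so that the estimator equals $n^{-1}\sum_i\psi+\smallOp(n^{-1/2})$. Under that convention, \eqref{eq:indep_var} holds after multiplying by $n$, i.e.\ with $n_g$ replaced by $\lambda_g$.

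A secondary subtlety is that treated cohorts of one stack could serve as comparison cells of another, which would also make supports overlap. So the statement ``no shared units'' must be read as disjointness of all four-cell supports, not only of the comparison cells.
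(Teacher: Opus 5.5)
Your proposal is correct and follows essentially the same route as the paper: linearize each within-stack estimator at the level of the original i.i.d.\ units, form the aggregated summand $\phi_i(e)=\sum_{g}\omega_g(e)\lambda_g^{-1}\psi_{\Stack_g}(W_i)$, expand $\E[\phi_i(e)^2]$ into diagonal and cross-stack terms, and note that the cross terms vanish when $\Stack_g\cap\Stack_{g'}=\varnothing$. You handle the $\lambda_g^{-1}$ rescaling and the $n$ versus $n_g$ normalization in \eqref{eq:indep_var} explicitly, whereas the paper's proof does this bookkeeping only implicitly, so your version is if anything a cleaner statement of the same argument.
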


\begin{remark}[Magnitude of cross-stack correlation]\label{rmk:cross_corr}
	The cross-stack covariance in \eqref{eq:shared_var} arises exclusively from shared comparison units. In applications where the never-treated pool is large relative to the treated cohorts, these units receive small inverse-probability weights, and the cross-stack correlation is modest. Conversely, when the never-treated pool is small and each comparison unit receives substantial weight, the cross-stack dependence is non-negligible and ignoring it leads to confidence intervals that are too narrow.
\end{remark}

A natural estimator of $\Vstack(e)$ in \eqref{eq:shared_var} is the sample analogue based on estimated influence functions. For each unit $i$ and each stack $\Stack_g$, define $\widehat{\psi}_{\Stack_g}(W_i; g, g+e, g_c)$ as the influence function \eqref{eq:if_decomp} evaluated at sample cell proportions and sample cell means, with $\widehat{\psi}_{\Stack_g}(W_i) = 0$ for units $i \notin \Stack_g$.\footnote{To verify that this is the correct influence function, note that $\widehat{\tau}_{g,g+e}^{\mathrm{sat}} = n_g^{-1}\sum_{i \in \Stack_g} \widehat{\psi}_{\Stack_g}(W_i) + \widehat{\tau}_{g,g+e}^{\mathrm{sat}}$, since $\sum_{i \in \Stack_g} \widehat{\psi}_{\Stack_g}(W_i) = 0$ by the orthogonality of cell-mean residuals. The influence function captures the first-order effect of perturbing unit $i$'s outcome on the triple difference---adding a unit to cell $(g,1)$ with outcome $\DeltaY_{i,g+e}$ shifts $\overline{\DeltaY}_{g,1,g+e}$ by approximately $(\DeltaY_{i,g+e} - \overline{\DeltaY}_{g,1,g+e})/n_{g,1}$, which changes $\widehat{\tau}_{g,g+e}^{\mathrm{sat}}$ by approximately $(\DeltaY_{i,g+e} - \overline{\DeltaY}_{g,1,g+e})/n_{g,1} = \widehat{\psi}_{\Stack_g}(W_i)/n_g$.} The plug-in variance estimator is
\begin{equation}\label{eq:var_hat}
	\widehat{V}_{\mathrm{stack}}(e) = \frac{1}{n}\sum_{i=1}^{n}\left(\sum_{g \in \calGtrg} \widehat{\omega}_g(e) \widehat{\psi}_{\Stack_g}(W_i; g, g+e, g_c(g))\right)^{\!2}.
\end{equation}
This estimator has two important features. First, by summing the weighted influence function contributions at the unit level before squaring, it automatically captures the cross-stack covariance for shared control units. When unit $i$ belongs to multiple stacks, its contributions from different stacks are summed inside the square, producing the correct cross-terms. Second, to establish consistency of this variance estimator, I require a strengthening of the moment condition in \Cref{as:moments}.

\begin{assumption}[Finite Fourth Moments]\label{as:moments4}
	For each stack $\Stack_g$ with comparison group $g_c$, each cell $(s,q)$, and each $t$ in the stack window, $\E[Y_{i,t}^4 \mid S_i = s, Q_i = q] < \infty$.
\end{assumption}

\Cref{as:moments4} strengthens the second-moment condition in \Cref{as:moments} to fourth moments. This is needed because the variance estimator \eqref{eq:var_hat} is a sample average of \emph{squared} influence functions: the law of large numbers requires that these squared terms have finite variance, which imply finite fourth moments of the outcome differences by the $c_r$ inequality.

\begin{prop}[Variance Estimator Consistency]\label{prop:var_consistency}
	Under \Cref{as:sampling}--\Cref{as:admissibility}, \Cref{as:shares}, and \Cref{as:moments4}, for deterministic weights $\omega_g(e)$ satisfying $\sum_g \omega_g(e) = 1$, and for each event-time $e \in \{0,\ldots,K\}$~
	\begin{equation}\label{eq:var_consistency}
		\widehat{V}_{\mathrm{stack}}(e) \pto \Vstack(e)~.
	\end{equation}
\end{prop}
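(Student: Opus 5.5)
We need to show $\widehat V_{\mathrm{stack}}(e)\pto \Vstack(e)$. Write $\phi(W_i)=\sum_g \omega_g(e)\psi_{\Stack_g}(W_i;g,g+e,g_c(g))$ for the population aggregated influence function, with $\psi_{\Stack_g}(W_i)=0$ for $i\notin\Stack_g$, and $\widehat\phi(W_i)$ for its plug-in counterpart built from $\widehat\omega_g(e)$ (or deterministic $\omega_g(e)$), sample cell shares and sample cell means. The first task is to pin down the normalization. By \Cref{prop:shared_var} and \Cref{thm:agg_clt}, $\Vstack(e)$ is the variance of the aggregated influence function, expressed on the global $\sqrt n$ scale. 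To make the cell shares in $\psi_{\Stack_g}$ consistent with the $n^{-1}\sum_{i=1}^n$ average in \eqref{eq:var_hat}, I will rescale each $\psi_{\Stack_g}$ by $n/n_g\pto 1/\lambda_g$. \Cref{as:shares}(i) makes this harmless, and it yields $\Vstack(e)=\E[\phi(W_i)^2]$ since $\E[\phi(W_i)]=0$. The decomposition is then
\[
\widehat V_{\mathrm{stack}}(e)-\Vstack(e)=\Big(\tfrac1n\sum_i \phi(W_i)^2-\E[\phi^2]\Big)+\tfrac1n\sum_i\big(\widehat\phi(W_i)^2-\phi(W_i)^2\big).
\]

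The first bracket is handled by the weak law of large numbers under \Cref{as:sampling}. The function $\phi$ is a finite sum, over the fixed numbers of cohorts, cells and periods, of terms $\one\{\text{cell}\}\,c\,(\DeltaY_{i,t}-m)/\pi$ with $\pi>0$ by \Cref{as:shares}. So $\E[\phi^2]<\infty$ follows from \Cref{as:moments4} (second moments would already suffice for the LLN). Fourth moments, via the $c_r$ inequality on $Y_{i,t}-Y_{i,g-1}$, give finite $\E[\phi^4]$, which is what the paper invokes.

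The second bracket is the main step. Write $\widehat\phi^2-\phi^2=(\widehat\phi-\phi)^2+2\phi(\widehat\phi-\phi)$, so by Cauchy--Schwarz it suffices to show $n^{-1}\sum_i(\widehat\phi(W_i)-\phi(W_i))^2\pto0$. On each cell indicator, $\widehat\phi-\phi$ is a finite linear combination of the following terms:
\begin{itemize}
\item $(1/\widehat\pi_{s,q}-1/\pi_{s,q})\,c_{s,q}(\DeltaY_{i,t}-m_{s,q})$, and
\item $(1/\widehat\pi_{s,q})\,c_{s,q}(m_{s,q}-\overline{\DeltaY}_{s,q,t})$,
\end{itemize}
multiplied by $\widehat\omega_g-\omega_g$ or $\omega_g$. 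Each coefficient is $\smallOp(1)$. For cell shares and weights this follows from \Cref{as:shares}, the continuous mapping theorem and $\pi_{s,q}>0$. For cell means it follows from within-cell LLNs, which \Cref{prop:within_stack_consistency} already uses. The random factors $n^{-1}\sum_i\one\{\cdot\}(\DeltaY_{i,t}-m)^2$ are $\bigOp(1)$ by the LLN. Expanding the square with $(a_1+\dots+a_k)^2\le k\sum a_j^2$ therefore gives $\smallOp(1)$.

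The expected obstacle is bookkeeping, not analysis. A unit with $S_i=\infty$ (or any shared $g_c$) enters several $\psi_{\Stack_g}$ terms at once. I must confirm that the same unit-level summation is used in both $\phi$ and $\widehat\phi$, so that the cross-stack covariances in \eqref{eq:shared_var} appear as the expectation of the cross-products inside $\phi^2$. I must also confirm that the stack-specific baseline $g-1$ in $\DeltaY_{i,t}$ is tracked per stack. Since $|\calGtrg|$ and $T$ are fixed, the number of such terms is finite, and these checks do not affect the rate argument.
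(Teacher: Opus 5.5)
Your argument is correct and follows the paper's proof essentially step for step. The paper likewise rewrites $\widehat V_{\mathrm{stack}}(e)$ as $n^{-1}\sum_i\widehat\phi_i(e)^2$ with the same $n/n_g\to 1/\lambda_g$ rescaling, splits it into the oracle average $n^{-1}\sum_i\phi_i(e)^2$ (WLLN) plus a remainder controlled by Cauchy--Schwarz and $n^{-1}\sum_i(\widehat\phi_i-\phi_i)^2=\smallOp(1)$, and bounds $\widehat\phi_i-\phi_i$ through your share-error, mean-error, $n/n_g-1/\lambda_g$, and weight-error terms. Your remark that second moments suffice (Khinchin's WLLN needs only a finite first moment of the squared summand) is right, and it is slightly sharper than the paper's appeal to \Cref{as:moments4}.
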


\Cref{prop:var_consistency} extends to estimated weights satisfying $\widehat{\omega}_g(e) \pto \omega_g(e)$, which holds in particular for cohort-size weights $\widehat{\omega}_g(e) = n_{g,1}/\sum_{g'\in\calGtrg(e)} n_{g',1}$ under \Cref{as:shares}.
At a high-level, the proof involves showing that the plug-in variance estimator is a sample average of squared estimated influence functions that converges by the law of large numbers, with the estimation error in the influence functions controlled by all the regularity conditions above. As an immediate consequence, pointwise confidence intervals of the form
\begin{equation}\label{eq:pointwise_ci}
	\ESstack(e) \pm z_{\alpha/2}\sqrt{\widehat{V}_{\mathrm{stack}}(e)/n}
\end{equation}
have asymptotically correct coverage. More precisely, \Cref{prop:var_consistency} combined with \Cref{thm:agg_clt} and Slutsky's theorem gives
\begin{equation}\label{eq:pointwise_coverage}
	\frac{\ESstack(e) - \ES(e)}{\sqrt{\widehat{V}_{\mathrm{stack}}(e)/n}} \dto \mathcal{N}(0,1)~,
\end{equation}
so the interval \eqref{eq:pointwise_ci} has asymptotic coverage $1-\alpha$ for each $e$.

In practice, valid inference under the fully saturated OLS regression specification applied to stacked dataset can be done via the cluster-robust standard error, where the clustering is done at the unit level. This validity is ensured under the same assumptions maintained in \Cref{prop:var_consistency}; I prove this in Appendix \ref{app:proof-CRVE-validity}.
Hence, in the case of estimating fully saturated OLS regressions on the stacked dataset, the usual cluster-robust standard errors---clustered at the level of treatment---can be used without implementing the estimator proposed in \Cref{prop:var_consistency}. It is possible to construct uniform confidence bands in this setting; I discuss this in \Cref{app:technical-results}.

\section{Empirical Illustrations}\label{sec:empirical}

I illustrate the stacked DDD framework with two applications drawn from recent empirical work that exploits staggered rollouts over a country $\times$ eligibility-group panel: \citet{hansen_national_2023}'s study of the national impacts of genetically modified crops, and \citet{shastry_vaccine_2025}'s evaluation of Gavi's vaccine program. For both illustrations, I use the never-treated groups as my clean comparison group in each stack, and present the OLS-implied weights for each stack.

\subsection{National and Global Impacts of Genetically Modified Crops}\label{sub:empirical_crops}

\cite{hansen_national_2023} (HW) study the national and global impacts of genetically modified (GM) crops. In particular, they estimate the impact of genetically modified (GM) crops on countrywide yields, harvested area, and trade ``using a triple-differences rollout design that exploits variation in the availability of GM seeds across crops, countries, and time''. In other words, there are staggered national approvals for GM cultivation across countries, a clear eligibility dimension defined at the crop level, and a multi-country panel dataset encompassing both treated and never-treated crops and countries. 
They find ``positive impacts on yields, especially in poor countries,'' and emphasize that ``without GM crops, the world would have needed 3.4 percent additional cropland to keep global agricultural output at its 2019 level.''

Mapping their setting to my notations, the \emph{treatment-enabling group} $S_i$ records the year a country $i$ first approved GM cultivation, with never-adopting countries assigned $S_i = \infty$. \emph{Eligibility} $Q_c$ indicates whether crop $c$ has a commercially viable GM variety available globally (specifically: cotton, maize, rapeseed, and soybean, so $Q_c = 1$), whereas other field crops (e.g., rice, wheat) serve as ineligible within-country controls ($Q_c = 0$). The \emph{treatment indicator} $D_{ict} = \one\{t \geq S_i\} Q_c$ takes the value one if country $i$ has approved GM cultivation by year $t$ and crop $c$ is a GM-eligible crop.
Finally, I examine four primary outcomes at the country-crop-year level: the GM adoption rate, log yield, log harvested area, and net-export share. 

I construct a stack for each GM approval cohort $g \in \calGtrg$. Within each stack, the never-treated countries ($S_i = \infty$) serve as the clean comparison group ($g_c = \infty$). 
For a given cohort $g$, the stack contains four cells: \emph{treated-eligible} (GM crops in cohort $g$ countries), \emph{within-group ineligible} (non-GM crops in cohort $g$ countries), \emph{across-group eligible} (GM crops in never-adopting countries), and \emph{across-group ineligible} (non-GM crops in never-adopting countries).

I now compare the results of the conventional 3WFE estimator (``H\&W (pooled)'', \eqref{eq:hw_original}) against a stacked OLS estimator (``Sat. OLS (stacked)'') and the triple-differences estimator of \cite{ortiz-santanna_triple_2025} (``triplediff (DR)''). For aggregating cohort-level ATTs into event-study, for my proposed method, I used cohort-size as weights, so the event-study estimates should be interpreted as ``per unit'' ATT. The event-study plots for all four outcomes are presented in Figure \ref{fig:sec11_comparison} and Figure \ref{fig:sec11_att}.

\subsubsection*{Event-Study Estimates}
Across all outcomes of interest, the stacked DDD estimator display more modest, but still statistically significant, treatment effect estimates. In particular, my results confirm most of the findings in \cite{hansen_national_2023}. Unsurprisingly, my estimates and those of ``triplediff (DR)'' are qualitatively similar. The pooled specification of HW likely inflates the treatment effect estimates by using early-adopting countries as controls for later-adopting countries in the data.

\subsubsection*{Aggregated ATT}
All estimated effects under the stacked design are smaller in magnitude compared to the results from HW. In the case of log harvested area, the stacked OLS estimator reveals a statistically insignificant effect of GM adoption on percentage of harvest. In other cases, all the estimated aggregated ATTs are half of the estimated effects of HW.
To illustrate treatment effect heterogeneity, I show cohort-level ATTs in Figure \ref{fig:sec11_att}. 
The vertical dotted lines inside the panels plot the cohort-size weighted aggregated ATTs across these stacks, which map exactly to the corresponding pooled point estimates generated by the ``Sat. OLS (stacked)'' estimator presented in Figure \ref{fig:sec11_comparison}. 

\medskip

Overall, the alternative estimation strategy I employ here mostly confirms the positive impact of the technology on outcomes of interest, with the exception of harvested area. The main advantage of the stacked design is it deals with identification transparently, allowing practitioners to design credible sub-experiments for each cohort and better argue that identification assumption holds.

\subsection{Effective Health Aid through Gavi's Vaccine Program}\label{sub:empirical_vaccine}

I next revisit \citeauthor{shastry_vaccine_2025}'s \citeyear{shastry_vaccine_2025} evaluation of Gavi, the Vaccine Alliance, which has distributed over US\$16 billion in vaccine aid to low- and middle-income countries since its founding in 1999. \citet{shastry_vaccine_2025} identify Gavi's causal effect from the differential timing of Gavi funding across countries, vaccines, and birth cohorts, comparing Gavi-funded vaccines to non-funded vaccines within the same country and birth cohort, and comparing the same vaccines across Gavi-adopting and non-adopting countries. They estimate a triple-differences specification
\begin{equation}\label{eq:st_ols}
	Y_{ivt} = \beta_1  \mathrm{Post}_{ivt} + \theta_{iv} + \theta_{it} + \theta_{vt} + \varepsilon_{ivt} ~,
\end{equation}
with country $\times$ vaccine, country $\times$ cohort, and vaccine $\times$ cohort fixed effects, clustering standard errors by country. They report that ``Gavi's support for a vaccine increased coverage rates by $2-5$ percentage points across all vaccines, on average, and by $10-20$ percentage points for newer vaccines,'' and that ``Gavi's support for a vaccine reduced child mortality from related causes by 1 child per 1,000 live births,'' implying roughly $1.5$ million lives saved at approximately US\$9,000 per life saved. \citet{shastry_vaccine_2025} acknowledge the sensitivity of staggered 3WFE estimators to treatment-effect heterogeneity and verify robustness using \citet{borusyak_revisiting_2023}'s imputation estimator.

Here is the mapping from the context of \cite{shastry_vaccine_2025} to my notations. The treatment-enabling group $S_i$ is the year country $i$ first received Gavi funding for the vaccine targeting cause $c$, with never-funded countries assigned $S_i = \infty$. Eligibility $Q_c$ is an indicator for the cause of death being the primary target of a Gavi-funded vaccine, so respiratory, diarrheal, measles, and meningitis/encephalitis causes have $Q_c = 1$ and other under-five causes have $Q_c = 0$. The treatment indicator is $D_{ict} = \one\{t \geq S_i\} Q_c$. The cross-sectional unit is the country-cause pair, and the outcome $Y_{ict}$ is the post-neonatal mortality rate (deaths per 1,000 live births) from cause $c$ in country $i$ in year $t$.

For each Gavi-funding cohort $g \in \calGtrg$ and each Gavi-linked cause $c$, I construct a stack with never-funded countries ($S_i = \infty$) as the clean comparison group. I set $L = K = 5$ to match the 11-year window used in \citet{shastry_vaccine_2025}. I analyze two samples in \citet{shastry_vaccine_2025}'s Figure 5: the full GHO sample (``All Countries'') and the vital-registry subset restricted to years from 2005 onwards (``Post-2005 Vital Registry''), which avoids imputed mortality data at the cost of a substantially smaller set of countries.

\subsubsection*{Event-study estimates}
Figures \ref{fig:vaccine_es_all} and \ref{fig:vaccine_es_vr} report event-study coefficients from three estimators (same as those considered earlier in the HW application) applied to the data.
In both samples, all three estimators display essentially flat pre-trends, consistent with \citet{shastry_vaccine_2025}'s identifying assumption that coverage of Gavi-funded vaccines would have trended similarly to coverage of non-funded vaccines absent Gavi. Post-introduction, all three estimators show a downward trend in mortality rate.

However, the three estimators disagree in magnitude. In the full sample (Figure \ref{fig:vaccine_es_all}), the pooled OLS post-treatment ATT is $-0.945$ deaths per 1,000 live births, the saturated stacked OLS estimator gives $-0.633$, and triplediff produces $-0.248$. All aggregated ATTs estimated are statistically significant. In the cleaner vital-registry subset (Figure \ref{fig:vaccine_es_vr}), the three estimates are to $-0.362$, $-0.248$, and $-0.257$ per 1,000 live births for the original specification, saturated stacked OLS estimator, and triplediff, respectively. The large gap between pooled OLS and the two stacked estimators in the full sample liekly reflects the forbidden-comparison problem in \eqref{eq:st_ols}: later-funded country-cause pairs are implicitly compared to already-funded ones whose mortality rates continue to decline in the post-treatment window, and the resulting negative weights inflate the pooled OLS point estimate. In the smaller vital-registry subset, where the treatment timing is more concentrated and the effective control pool is narrower, the three estimators roughly agree at approximately a quarter of a death per 1,000 live births.
Figures \ref{fig:vaccine_decomp_all} and \ref{fig:vaccine_decomp_vr} decompose the saturated stacked OLS estimate into its underlying cause-by-cohort stacks. 
Applying the stacked DDD methodology to Gavi's vaccine rollout reinforces \citet{shastry_vaccine_2025}'s central conclusion that Gavi-funded vaccines meaningfully reduced child mortality. The stacked decomposition additionally shows transparently which country-vaccine-cohort stack drive the aggregate treatment effect estimate.

\begin{landscape}
	\begin{figure}[ht]
		\centering
		\includegraphics[width=1.4\textheight]{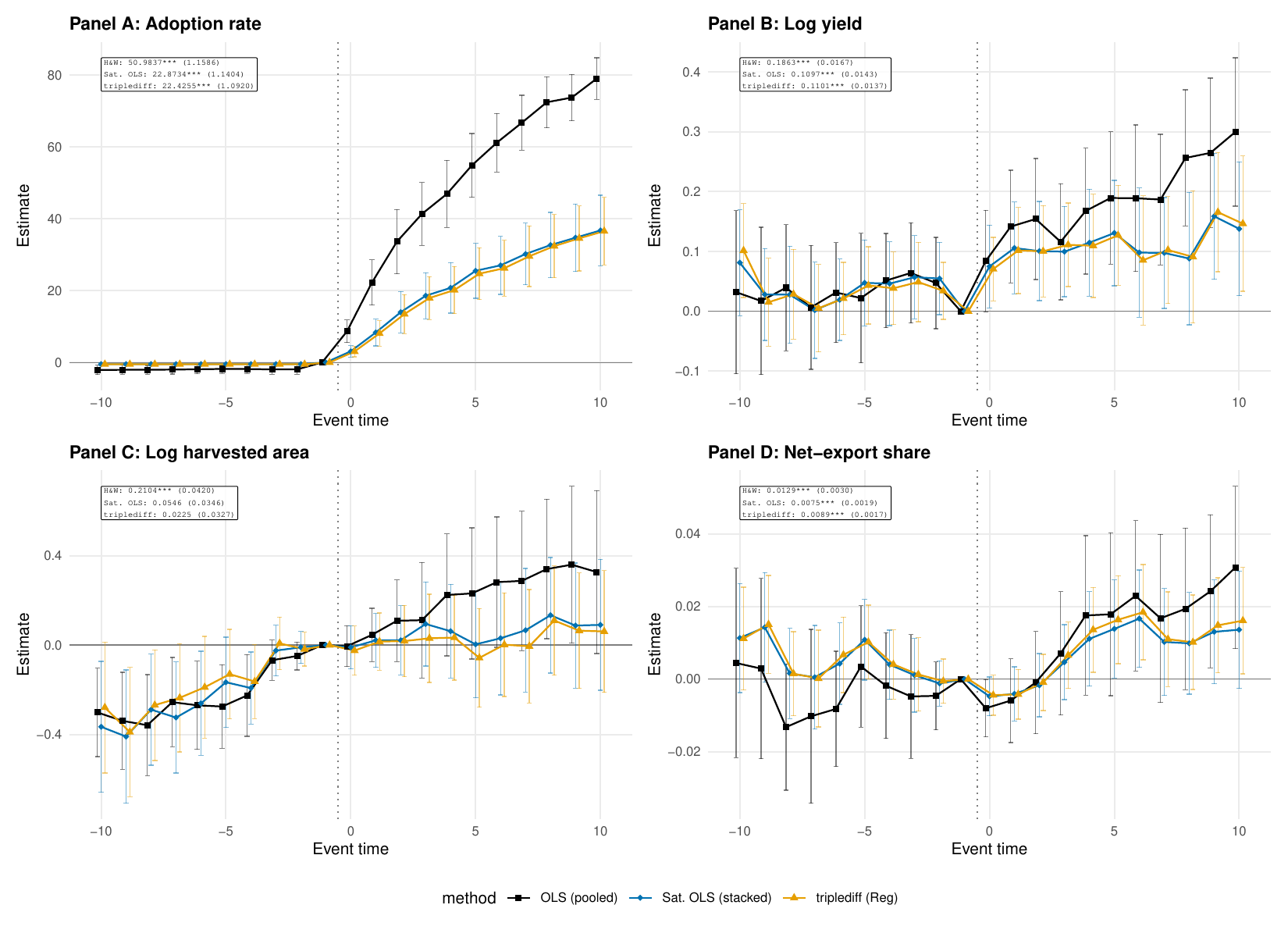}
		\caption{Event-Study Estimates by Outcome in \cite{hansen_national_2023}}
		\label{fig:sec11_comparison}
	\end{figure}
	
	\begin{figure}[ht]
		\centering
		\includegraphics[trim={0 0 0 0.7cm},clip,width=1.25\textheight]{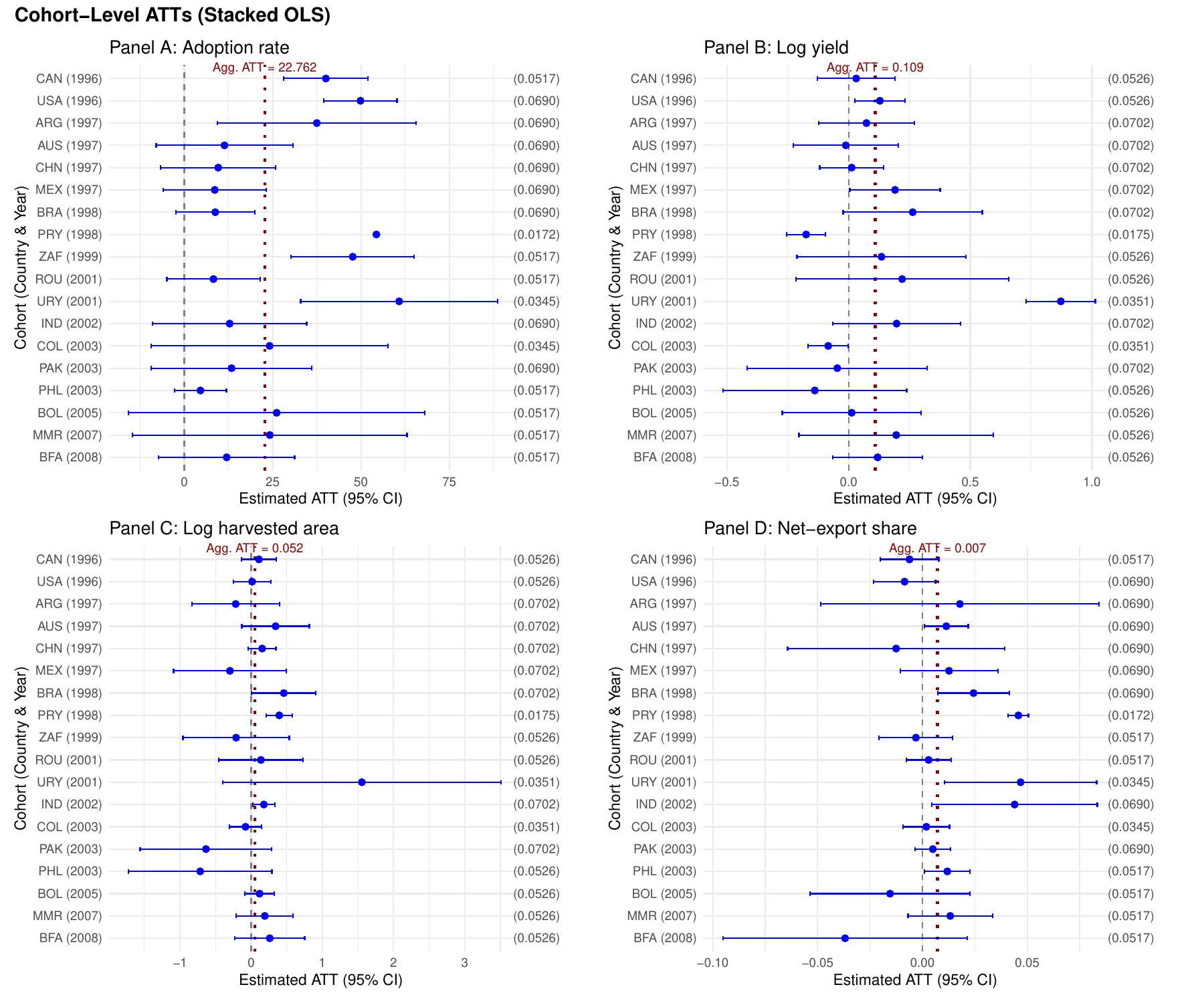}
		\caption{Cohort-Level ATT by Outcome in \cite{hansen_national_2023}}
		\label{fig:sec11_att}
	\end{figure}
	
	\begin{figure}[htbp]
		\centering
		\includegraphics[trim={0 0 0 0.7cm},clip,width=1.3\textheight]{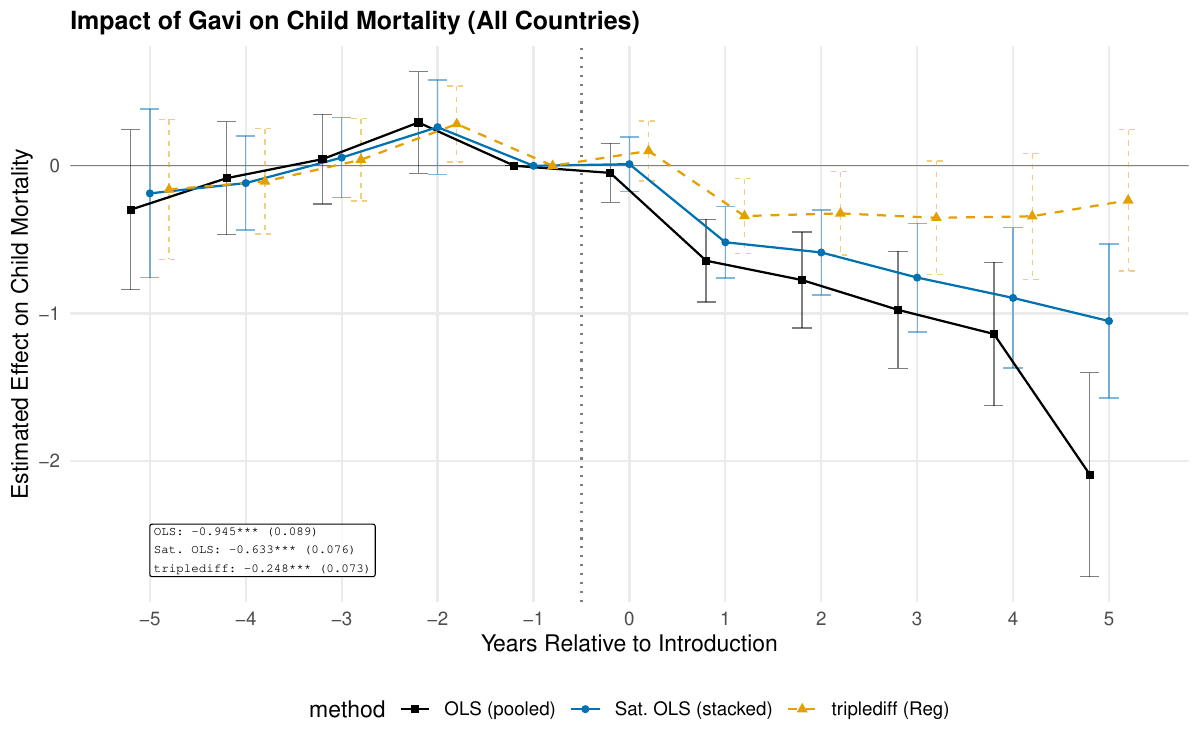}
		\caption{Event-Study Estimates: Impact of Gavi on Child Mortality, All Countries \citep{shastry_vaccine_2025}}
		\label{fig:vaccine_es_all}
	\end{figure}
	
	\begin{figure}[htbp]
		\centering
		\includegraphics[trim={0 0 0 0.7cm},clip,width=1.3\textheight]{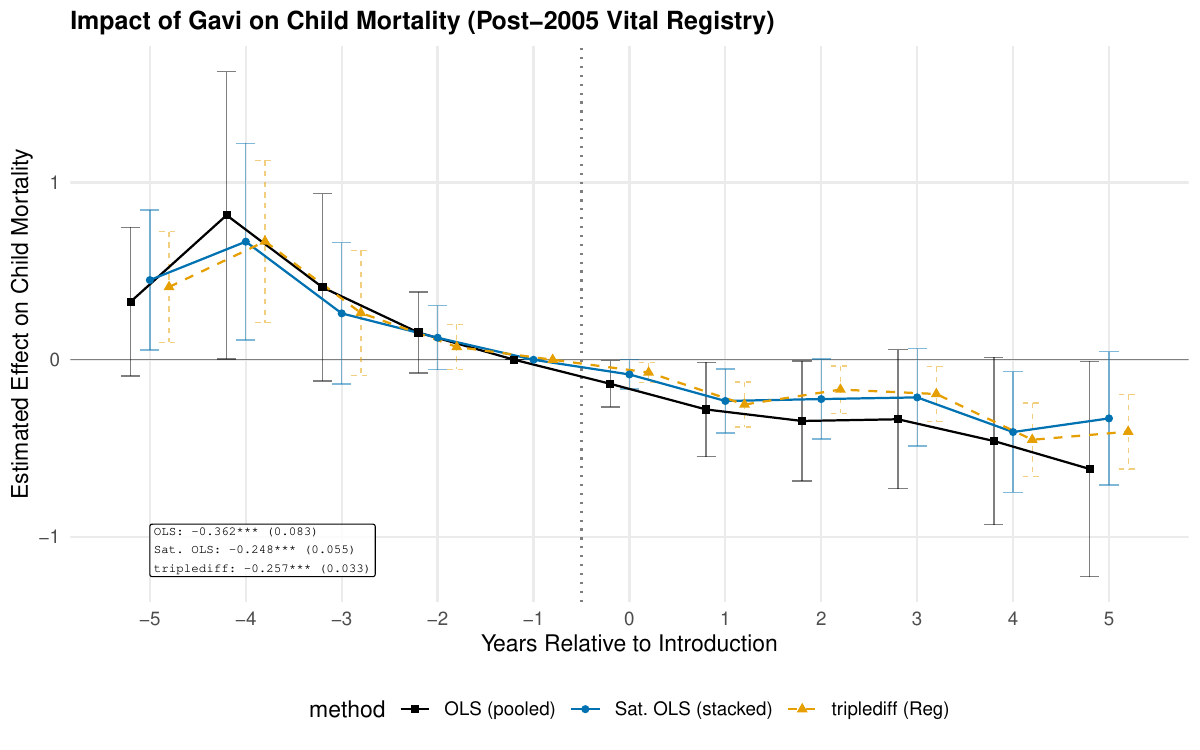}
		\caption{Event-Study Estimates: Impact of Gavi on Child Mortality, Post-2005 Vital Registry Subset \citep{shastry_vaccine_2025}}
		\label{fig:vaccine_es_vr}
	\end{figure}

	\begin{figure}[htbp]
		\centering
		\includegraphics[width=1.3\textheight]{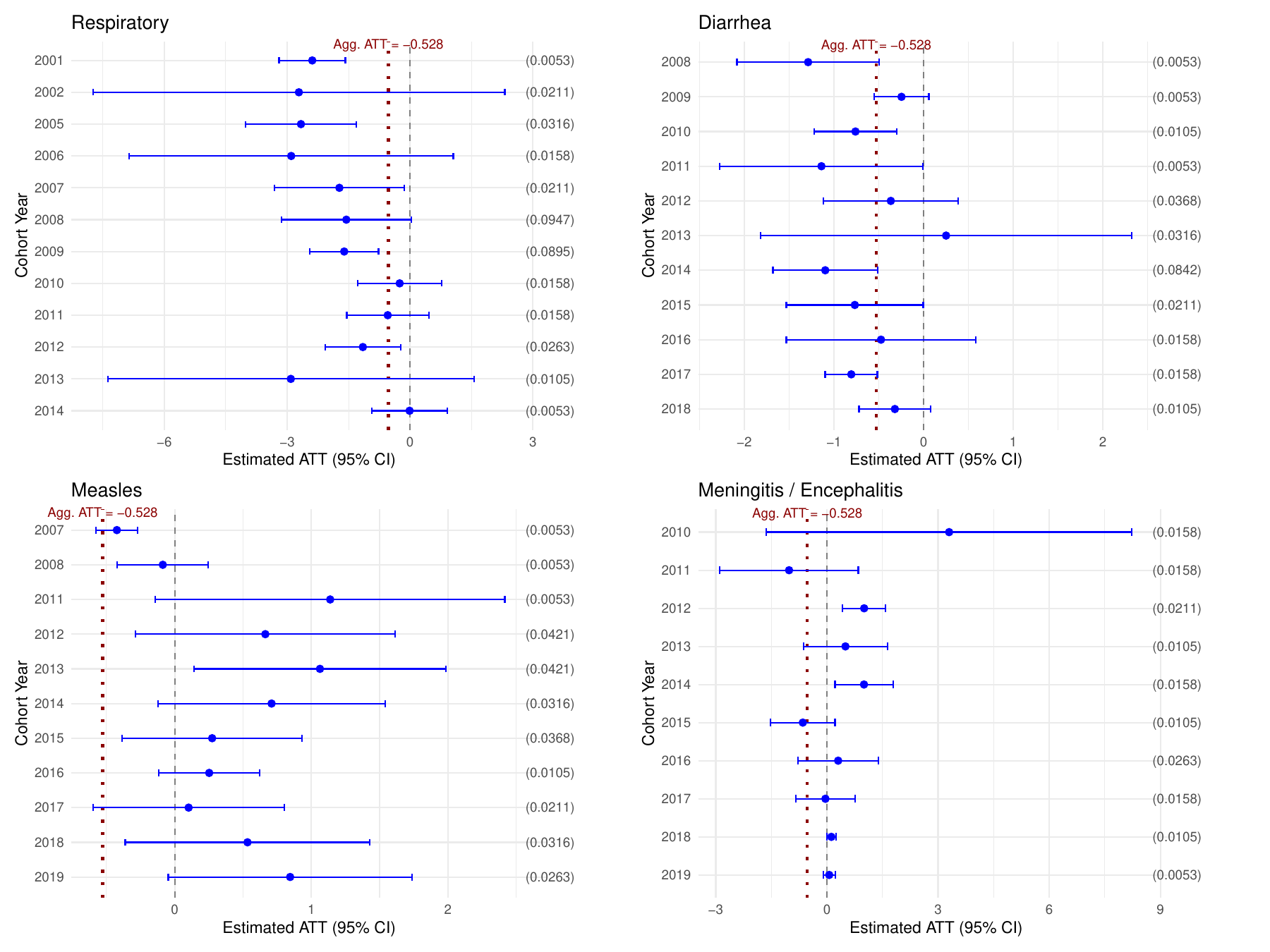}
		\caption{Stack-Level ATT Decomposition: All Countries \citep{shastry_vaccine_2025}}
		\label{fig:vaccine_decomp_all}
	\end{figure}

	\begin{figure}[htbp]
		\centering
		\includegraphics[trim={0 0 0 0.7cm},clip,width=1.25\textheight]{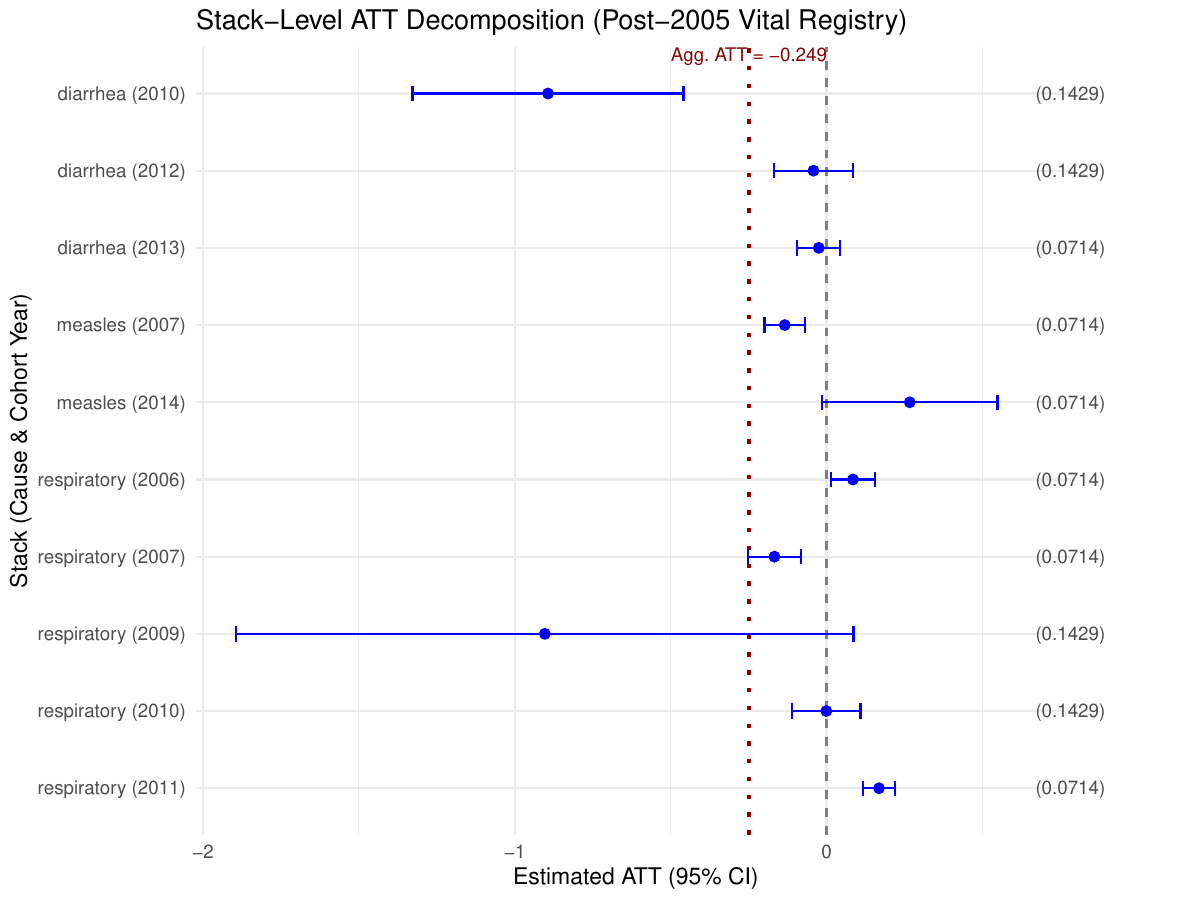}
		\caption{Stack-Level ATT Decomposition: Post-2005 Vital Registry Subset \citep{shastry_vaccine_2025}}
		\label{fig:vaccine_decomp_vr}
	\end{figure}
	
\end{landscape}

\clearpage

\bibliography{stacked}

@article{shastry_vaccine_2025,
Author = {Shastry, Gauri Kartini and Tortorice, Daniel L.},
Title = {Effective Health Aid: Evidence from Gavi's Vaccine Program},
Journal = {American Economic Journal: Economic Policy},
Volume = {17},
Number = {1},
Year = {2025},
Month = {February},
Pages = {540–74},
DOI = {10.1257/pol.20230036},
URL = {https://www.aeaweb.org/articles?id=10.1257/pol.20230036}}

@article{courtemanche_early_2017,
	title = {Early {Impacts} of the {Affordable} {Care} {Act} on {Health} {Insurance} {Coverage} in {Medicaid} {Expansion} and {Non}-{Expansion} {States}},
	author = {Courtemanche, Charles and Marton, James and Ukert, Benjamin and Yelowitz, Aaron and Zapata, Daniela},
	journal = {Journal of Policy Analysis and Management},
	volume = {36},
	number = {1},
	pages = {178--210},
	year = {2017},
	doi = {10.1002/pam.21961},
}

@article{kaestner_effects_2017,
	title = {Effects of {ACA} {Medicaid} {Expansions} on {Health} {Insurance} {Coverage} and {Labor} {Supply}},
	author = {Kaestner, Robert and Garrett, Bowen and Chen, Jiajia and Gangopadhyaya, Anuj and Fleming, Caitlyn},
	journal = {Journal of Policy Analysis and Management},
	volume = {36},
	number = {3},
	pages = {608--642},
	year = {2017},
	doi = {10.1002/pam.21993},
}

@article{greenstone_impacts_2002,
	title = {The {Impacts} of {Environmental} {Regulations} on {Industrial} {Activity}: {Evidence} from the 1970 and 1977 {Clean} {Air} {Act} {Amendments} and the {Census} of {Manufactures}},
	author = {Greenstone, Michael},
	journal = {Journal of Political Economy},
	volume = {110},
	number = {6},
	pages = {1175--1219},
	year = {2002},
	doi = {10.1086/342808},
}

@article{walker_transitioning_2013,
	title = {The {Transitional} {Costs} of {Sectoral} {Reallocation}: {Evidence} from the {Clean} {Air} {Act} and the {Workforce}},
	author = {Walker, W. Reed},
	journal = {Quarterly Journal of Economics},
	volume = {128},
	number = {4},
	pages = {1787--1835},
	year = {2013},
	doi = {10.1093/qje/qjt022},
}

@article{sant2020doubly,
	title = {Doubly Robust Difference-in-Differences Estimators},
	author = {Sant'Anna, Pedro H. C. and Zhao, Jun},
	journal = {Journal of Econometrics},
	volume = {219},
	number = {1},
	pages = {101--122},
	year = {2020},
	publisher = {Elsevier}
}

@techreport{ortiz-santanna_triple_2025,
	title = {Better {Understanding} {Triple} {Differences} {Estimators}},
	author = {Ortiz-Villavicencio, Marcelo and Sant'Anna, Pedro H. C.},
	year = {2025},
	institution = {Emory University},
	type = {Working Paper},
	url = {https://arxiv.org/abs/2505.09942},
}

@techreport{leventer_triple_2025,
	title = {Triple {Differences}: {Identification}, {Estimation}, and {Inference}},
	author = {Leventer, Avi},
	year = {2025},
	institution = {Hebrew University of Jerusalem},
	type = {Working Paper}
}

@techreport{caron_triple_2025,
	title = {Triple {Differences} with {Staggered} {Adoption} and {Heterogeneous} {Treatment} {Effects}},
	author = {Caron, Laura},
	year = {2025},
	type = {Working Paper}
}

@article{robins_new_1986,
	title = {A {New} {Approach} to {Causal} {Inference} in {Mortality} {Studies} with a {Sustained} {Exposure} {Period}---{Application} to {Control} of the {Healthy} {Worker} {Survivor} {Effect}},
	author = {Robins, James},
	journal = {Mathematical Modelling},
	volume = {7},
	pages = {1393--1512},
	year = {1986}
}

@article{butters_how_2022,
	title = {How {Do} {National} {Firms} {Respond} to {Local} {Cost} {Shocks}?},
	volume = {112},
	issn = {0002-8282},
	url = {https://pubs.aeaweb.org/doi/10.1257/aer.20201524},
	doi = {10.1257/aer.20201524},
	language = {en},
	number = {5},
	urldate = {2025-03-21},
	journal = {American Economic Review},
	author = {Butters, R. Andrew and Sacks, Daniel W. and Seo, Boyoung},
	month = may,
	year = {2022},
	pages = {1737--1772},
}

@article{deshpande_who_2019,
	title = {Who {Is} {Screened} {Out}? {Application} {Costs} and the {Targeting} of {Disability} {Programs}},
	volume = {11},
	issn = {1945-7731, 1945-774X},
	shorttitle = {Who {Is} {Screened} {Out}?},
	url = {https://pubs.aeaweb.org/doi/10.1257/pol.20180076},
	doi = {10.1257/pol.20180076},
	language = {en},
	number = {4},
	urldate = {2025-03-17},
	journal = {American Economic Journal: Economic Policy},
	author = {Deshpande, Manasi and Li, Yue},
	month = nov,
	year = {2019},
	pages = {213--248},
}

@article{cengiz_effect_2019,
	title = {The {Effect} of {Minimum} {Wages} on {Low}-{Wage} {Jobs}*},
	volume = {134},
	copyright = {https://academic.oup.com/journals/pages/open\_access/funder\_policies/chorus/standard\_publication\_model},
	issn = {0033-5533, 1531-4650},
	url = {https://academic.oup.com/qje/article/134/3/1405/5484905},
	doi = {10.1093/qje/qjz014},
	language = {en},
	number = {3},
	urldate = {2025-03-17},
	journal = {The Quarterly Journal of Economics},
	author = {Cengiz, Doruk and Dube, Arindrajit and Lindner, Attila and Zipperer, Ben},
	month = aug,
	year = {2019},
	pages = {1405--1454},
}

@techreport{wing_stacked_2024,
	address = {Cambridge, MA},
	title = {Stacked {Difference}-in-{Differences}},
	url = {http://www.nber.org/papers/w32054.pdf},
	language = {en},
	number = {w32054},
	urldate = {2025-02-18},
	institution = {National Bureau of Economic Research},
	author = {Wing, Coady and Freedman, Seth and Hollingsworth, Alex},
	month = jan,
	year = {2024},
	doi = {10.3386/w32054},
	pages = {w32054},
}

@article{dube_local_2025,
author = {Dube, Arindrajit and Girardi, Daniele and Jordà, Oscar and Taylor, Alan M.},
title = {A Local Projections Approach to Difference-in-Differences},
journal = {Journal of Applied Econometplecs},
volume = {40},
number = {7},
pages = {741-758},
doi = {https://doi.org/10.1002/jae.70000},
url = {https://onlinelibrary.wiley.com/doi/abs/10.1002/jae.70000},
eprint = {https://onlinelibrary.wiley.com/doi/pdf/10.1002/jae.70000},
year = {2025}
}

@article{de_chaisemartin_difference--differences_2024,
	title = {Difference-in-{Differences} {Estimators} of {Intertemporal} {Treatment} {Effects}},
	issn = {0034-6535, 1530-9142},
	url = {https://direct.mit.edu/rest/article/doi/10.1162/rest_a_01414/119488/Difference-in-Differences-Estimators-of},
	doi = {10.1162/rest_a_01414},
	language = {en},
	urldate = {2024-05-30},
	journal = {Review of Economics and Statistics},
	author = {de Chaisemartin, Clément and D'Haultfœuille, Xavier},
	month = feb,
	year = {2024},
	pages = {1--45},
}

@article{mackinnon_cluster-robust_2023,
	title = {Cluster-robust inference: {A} guide to empirical practice},
	volume = {232},
	issn = {03044076},
	shorttitle = {Cluster-robust inference},
	url = {https://linkinghub.elsevier.com/retrieve/pii/S0304407622000781},
	doi = {10.1016/j.jeconom.2022.04.001},
	language = {en},
	number = {2},
	urldate = {2024-04-18},
	journal = {Journal of Econometrics},
	author = {MacKinnon, James G. and Nielsen, Morten {\O}rregaard and Webb, Matthew D.},
	month = feb,
	year = {2023},
	pages = {272--299},
}

@article{de_chaisemartin_two-way_2020,
	title = {Two-{Way} {Fixed} {Effects} {Estimators} with {Heterogeneous} {Treatment} {Effects}},
	volume = {110},
	issn = {0002-8282},
	url = {https://pubs.aeaweb.org/doi/10.1257/aer.20181169},
	doi = {10.1257/aer.20181169},
	language = {en},
	number = {9},
	urldate = {2023-07-18},
	journal = {American Economic Review},
	author = {de Chaisemartin, Clément and D’Haultfœuille, Xavier},
	month = sep,
	year = {2020},
	pages = {2964--2996},
}

@article{goodman-bacon_difference--differences_2021,
	title = {Difference-in-differences with variation in treatment timing},
	volume = {225},
	issn = {03044076},
	url = {https://linkinghub.elsevier.com/retrieve/pii/S0304407621001445},
	doi = {10.1016/j.jeconom.2021.03.014},
	language = {en},
	number = {2},
	urldate = {2023-06-04},
	journal = {Journal of Econometrics},
	author = {Goodman-Bacon, Andrew},
	month = dec,
	year = {2021},
	pages = {254--277},
}

@article{callaway_difference--differences_2021-1,
	title = {Difference-in-{Differences} with multiple time periods},
	volume = {225},
	issn = {03044076},
	url = {https://linkinghub.elsevier.com/retrieve/pii/S0304407620303948},
	doi = {10.1016/j.jeconom.2020.12.001},
	language = {en},
	number = {2},
	urldate = {2023-06-04},
	journal = {Journal of Econometrics},
	author = {Callaway, Brantly and Sant’Anna, Pedro H.C.},
	month = dec,
	year = {2021},
	pages = {200--230},
}

@article{borusyak_revisiting_2023,
	title = {Revisiting {Event}-{Study} {Designs}: {Robust} and {Efficient} {Estimation}},
	shorttitle = {Revisiting {Event}-{Study} {Designs}},
	journal = {Review of Economic Studies},
	volume = {91},
	number = {6},
	pages = {3253--3285},
	author = {Borusyak, Kirill and Jaravel, Xavier and Spiess, Jann},
	year = {2024},
	doi = {10.1093/restud/rdae007},
}

@article{sun_estimating_2021,
	title = {Estimating dynamic treatment effects in event studies with heterogeneous treatment effects},
	volume = {225},
	issn = {03044076},
	url = {https://linkinghub.elsevier.com/retrieve/pii/S030440762030378X},
	doi = {10.1016/j.jeconom.2020.09.006},
	language = {en},
	number = {2},
	urldate = {2023-05-30},
	journal = {Journal of Econometrics},
	author = {Sun, Liyang and Abraham, Sarah},
	month = dec,
	year = {2021},
	pages = {175--199},
}

@article{abadie_when_2022,
	title = {When {Should} {You} {Adjust} {Standard} {Errors} for {Clustering}?},
	author = {Abadie, Alberto and Athey, Susan and Imbens, Guido W. and Wooldridge, Jeffrey M.},
	journal = {Quarterly Journal of Economics},
	volume = {138},
	number = {1},
	pages = {1--35},
	year = {2023},
	doi = {10.1093/qje/qjac038},
}

@article{hansen_national_2023,
	title = {National and Global Impacts of Genetically Modified Crops},
	volume = {5},
	number = {2},
	journal = {American Economic Review: Insights},
	author = {Hansen, Casper Worm and Wingender, Asger Mose},
	year = {2023},
	pages = {224--240},
	doi = {10.1257/aeri.20220144}
}

@article{matsuzawa_minimum_2025,
	author = {Matsuzawa, Kyutaro and Rees, Daniel I. and Sabia, Joseph J. and Margolit, Rebecca},
	title = {Minimum Wages and Teenage Childbearing in the {United States}},
	journal = {Journal of Applied Econometrics},
	volume = {40},
	number = {4},
	pages = {471--484},
	year = {2025},
	doi = {10.1002/jae.3112},
}

@article{olden_triple_2022,
	author = {Olden, Andreas and M{\o}en, Jarle},
	title = {The Triple Difference Estimator},
	journal = {The Econometrics Journal},
	volume = {25},
	number = {3},
	pages = {531--553},
	year = {2022},
	doi = {10.1093/ectj/utac010},
}

@techreport{strezhnev_decomposing_2023,
	author = {Strezhnev, Anton},
	title = {Decomposing Triple-Differences Regression under Staggered Adoption},
	institution = {University of Chicago},
	year = {2023},
	note = {arXiv preprint arXiv:2307.02735},
}
\bibliographystyle{apalike}

\clearpage


\thispagestyle{empty}
\begin{center}
	{\large\it Appendix:}
	\vspace{0.2em}
	\begin{spacing}{1}
		\Large{\textbf{Stacked Triple Differences}}
	\end{spacing}
\end{center}
\begin{spacing}{1.5}
	\DoToC
\end{spacing}
\thispagestyle{empty}
\setcounter{page}{0}
\setcounter{figure}{0}
\newpage

\clearpage
\appendix

\paragraph{Notations}
I maintain the following notations throughout the appendix sections.
Let $n$ be the total number units, and $n_g$ be the total number of units in stack $g$. Because stack $g$ consists of cohort $g$ and its clean comparison cohort $g_c$, it follows that $n_g = n_{g,\cdot} + n_{g_c,\cdot}$ (the sum expands to multiple terms for stacks with multiple comparison cohorts). Let $n_{s,q}$ denote the number of units in treatment-enabling cohort $s$ with eligibility $q$. I write $\lambda_g \equiv \lim_{n \to \infty} n_g / n > 0$ to denote the asymptotic relative size of stack $g$, and $\pi_{s,q} \equiv \lim_{n \to \infty} n_{s,q} / n_g > 0$ the asymptotic relative share of cell $(s,q)$ within stack $g$. $\one_g(i) \equiv \one\{i \in \Stack_g\}$ denotes the indicator for whether unit $i$ belongs to stack $g$.

\section{Proof of Decomposition Results in Section \ref{sec:hw_application}} \label[appendix]{app:proofs-hw}

\begin{proof}[Proof of \Cref{prop:hw_es_decomp}]
	The proof follows the structure of \Cref{prop:3wfe_es_weights}, with the three-way demeaning operator \eqref{eq:hw_demean} replacing the two-way operator of the standard 3WFE.
	
	{Step~1. Population OLS formula via FWL.} Collect the event-time indicators in the column vector $\mathbf{R}_{i,t} = (R_e(i,t))_{e \neq -1}^{\intercal}$ and their coefficients in $\boldsymbol{\alpha} = (\alpha_e)_{e \neq -1}^{\intercal}$. By the Frisch--Waugh--Lovell theorem applied to \eqref{eq:hw_paper_notation}, the population regression coefficient vector is
	\begin{equation}\label{eq:hw_fwl_ols}
		\boldsymbol{\alpha} = \left(\sum_{t=1}^{T} \E \left[\ddot{\mathbf{R}}_{i,t} \ddot{\mathbf{R}}_{i,t}^{\intercal}\right]\right)^{-1}\sum_{t=1}^{T} \E \left[\ddot{\mathbf{R}}_{i,t}  \ddot{Y}_{i,t}\right]~.
	\end{equation}
	The $j$-th component is $\alpha_j = \mathbf{e}_j^{\intercal}\boldsymbol{\alpha}$. The three-way demeaning operates via the inclusion-exclusion formula \eqref{eq:hw_demean}. For the event-time indicator $R_e(i,t) = \one\{t - S_i = e\} Q_i$, the demeaned version is
	\begin{equation}\label{eq:hw_demeaned_Re}
		\ddot{R}_e(i,t) = R_e(i,t) - \overline{R}_{e,i,\cdot} - \overline{R}_{e,S_i,t} - \overline{R}_{e,Q_i,t} + \overline{R}_{e,S_i,\cdot} + \overline{R}_{e,Q_i,\cdot} + \overline{R}_{e,\cdot,t} - \overline{R}_{e,\cdot\cdot}~,
	\end{equation}
	where $\overline{R}_{e,i,\cdot} = T^{-1}\sum_{t'} R_e(i,t')$ is the unit-level time mean, $\overline{R}_{e,S_i,t} = n_{S_i}^{-1}\sum_{j\colon S_j = S_i} R_e(j,t)$ is the group-time mean, $\overline{R}_{e,Q_i,t} = n_{Q_i}^{-1}\sum_{j\colon Q_j = Q_i} R_e(j,t)$ is the eligibility-time mean, and the remaining barred quantities are the corresponding total means. This demeaning removes three channels of variation from $R_e$ that the standard 3WFE demeaning (which omits the $\overline{R}_{e,Q_i,t}$ terms) leaves in.
	
	I now compute the key group means explicitly. Fix a unit $i$ with $S_i = g \in \calGtrg$ and $Q_i = 1$ (treated-eligible). The unit-level time mean is $\overline{R}_{e,i,\cdot} = T^{-1}\one\{g + e \in \{1,\ldots,T\}\}$, since $R_e(i,t') = \one\{t' = g + e\} Q_i$ is nonzero at exactly one time period. The group-time mean is
	\begin{equation}\label{eq:hw_group_time_mean}
		\overline{R}_{e,g,t} = \frac{n_{g,1}}{n_{g,\cdot}} \one\{t = g + e\}~,
	\end{equation}
	where $n_{g,1}$ is the number of eligible units in group $g$ and $n_{g,\cdot} = n_{g,1} + n_{g,0}$ is the total number of units in group $g$. This is because $R_e(j,t) = \one\{S_j = g\} Q_j \one\{t = g+e\}$ for $j$ in group $g$, so summing over $j$ in group $g$ yields $n_{g,1} \one\{t = g+e\}$, divided by $n_{g,\cdot}$. The eligibility-time mean is
	\begin{equation}\label{eq:hw_elig_time_mean}
		\overline{R}_{e,Q_i=1,t} = \frac{1}{n_{\cdot,1}} \sum_{g' \in \calGtrg} n_{g',1} \one\{t = g' + e\}~,
	\end{equation}
	the fraction of eligible units whose event-time $e$ falls at time $t$. For ineligible units ($Q_i = 0$), $R_e(i,t) = 0$ identically, so all group means involving ineligible units are zero.
	
	{Step~2. Potential outcomes substitution and weight identification.} The observed outcome decomposes as
	\[
	Y_{i,t} = Y_{i,t}(\infty) + \sum_{g \in \calGtrg} \sum_{\ell} R_{g,\ell}(i,t) \left[Y_{i,t}(g) - Y_{i,t}(\infty)\right]~,
	\]
	since $R_{g,\ell}(i,t) = \one\{S_i = g, Q_i = 1, t = g+\ell\}$ selects treated-eligible units of cohort $g$ at time $g+\ell$. Applying the three-way demeaning and substituting into \eqref{eq:hw_fwl_ols}, by linearity of the demeaning operator
	\begin{align}
		\sum_t \E \left[\ddot{\mathbf{R}}_{i,t} \ddot{Y}_{i,t}\right] &= \sum_t \E \left[\ddot{\mathbf{R}}_{i,t} \ddot{Y}_{i,t}(\infty)\right] \nonumber \\
		&\qquad + \sum_{g \in \calGtrg}\sum_{\ell} \sum_t \E \left[\ddot{\mathbf{R}}_{i,t} R_{g,\ell}(i,t) (Y_{i,t}(g) - Y_{i,t}(\infty))\right]~. \label{eq:hw_substitution}
	\end{align}
	
	For the second term, at time $t = g + \ell$, the indicator $R_{g,\ell}(i,t) = \one\{S_i = g, Q_i = 1\}$ selects only treated-eligible units in cohort $g$, for whom $Y_{i,t}(g) - Y_{i,t}(\infty)$ is the individual treatment effect. Taking the conditional expectation yields $\E[Y_{i,g+\ell}(g) - Y_{i,g+\ell}(\infty) \mid S_i = g, Q_i = 1] = \CATT(g,\ell)$. The weight on each $(g,\ell)$ term, after multiplication by the inverse covariance matrix, is
	\begin{equation}\label{eq:hw_weight_formula}
		\omega_{g,\ell}^{j,\star} = \mathbf{e}_j^{\intercal} \left(\sum_{t} \E \left[\ddot{\mathbf{R}}_{i,t} \ddot{\mathbf{R}}_{i,t}^{\intercal}\right]\right)^{-1} \E\left[\ddot{\mathbf{R}}_{i,g+\ell}  R_{g,\ell}(i, g+\ell)\right]~.
	\end{equation}
	The weight $\omega_{g,\ell}^{j,\star}$ is precisely the coefficient on $R_j(i,t)$ in the modified auxiliary regression \eqref{eq:hw_aux_regression}, establishing the explicit form \eqref{eq:hw_weights_explicit}.
	
	{Step~3.} I now verify properties~(i)--(iv). The proofs are identical to those in \Cref{prop:3wfe_es_weights} because they depend only on the algebraic identity $\sum_{g \in \calGtrg} R_{g,\ell}(i,t) = R_\ell(i,t)$ and the fact that unit fixed effects absorb $Q_i \one\{S_i \in \calGtrg\}$, both of which hold under any demeaning operator that includes unit effects.
	
	{Property~(i).} Fix $\ell = j$. Summing \eqref{eq:hw_weight_formula} over $g \in \calGtrg$, the right-hand side becomes the auxiliary regression coefficient from regressing $\sum_g R_{g,j}(i,t) = R_j(i,t)$ on itself and the other event-time indicators, which equals $1$ by construction.
	
	{Property~(ii).} For $\ell \neq j$ with $\ell \neq -1$, $\sum_g R_{g,\ell}(i,t) = R_\ell(i,t)$, so $\sum_g \omega_{g,\ell}^{j,\star}$ equals the auxiliary regression coefficient of $R_\ell$ on $R_j$, which is $0$ for $\ell \neq j$ since included indicators are orthogonal in the multiple regression.
	
	{Property~(iii).} For $\ell = -1$, $\sum_g R_{g,-1}(i,t) = R_{-1}(i,t)$. The key identity is
	\[
	\sum_{\ell} R_\ell(i,t) = Q_i \one\{S_i \in \calGtrg\}~,
	\]
	which depends only on unit $i$ and is therefore absorbed by the unit fixed effects $\alpha_i$. Applying the three-way demeaning operator (which includes unit effects), the right-hand side vanishes
	\[
	\ddot{\left(Q_i \one\{S_i \in \calGtrg\}\right)} = 0~.
	\]
	By linearity, $\sum_{e \neq -1} \ddot{R}_e(i,t) + \ddot{R}_{-1}(i,t) = 0$, giving $\ddot{R}_{-1}(i,t) = -\sum_{e \neq -1} \ddot{R}_e(i,t)$. In the auxiliary regression of $R_{-1}$ on $\{R_e\}_{e \neq -1}$ and fixed effects, the FWL residuals satisfy this identity exactly, so every coefficient is $-1$. In particular, $\sum_g \omega_{g,-1}^{j,\star} = -1$.
	
	{Property~(iv).} For $g = \infty$, $R_{\infty,\ell}(i,t) = 0$ for all $(i,t,\ell)$, so $\omega_{\infty,\ell}^{j,\star} = 0$.
\end{proof}

\begin{proof}[Proof of \Cref{prop:hw_es_pt}]
	By the FWL theorem applied to \eqref{eq:hw_paper_notation}, collecting event-time indicators in $\mathbf{R}_{i,t} = (R_e(i,t))_{e \neq -1}^{\intercal}$ and using the decomposition $Y_{i,t} = Y_{i,t}(\infty) + \sum_{g \in \calGtrg}\sum_{\ell} R_{g,\ell}(i,t)\big(Y_{i,t}(g) - Y_{i,t}(\infty)\big)$, it follows that 
	\begin{equation}\label{eq:hw_two_terms}
		\sum_t \E\left[\ddot{\mathbf{R}}_{i,t} \ddot{Y}_{i,t}\right] = \underbracket{\sum_t \E\left[\ddot{\mathbf{R}}_{i,t} \ddot{Y}_{i,t}(\infty)\right]}_{\text{(A)}} + \underbracket{\sum_{g \in \calGtrg}\sum_\ell \sum_t \E\left[\ddot{\mathbf{R}}_{i,t} R_{g,\ell}(i,t) (Y_{i,t}(g) - Y_{i,t}(\infty))\right]}_{\text{(B)}}~.
	\end{equation}
	Term~(B), after premultiplication by the inverse Gram matrix and extraction of the $j$-th component, equals $\sum_{g,\ell}\omega_{g,\ell}^{j,\star} \CATT(g,\ell)$ by \Cref{prop:hw_es_decomp} and the definition of $\CATT(g,\ell)$. It remains to show that term~(A) vanishes under DDD-PCT. The three-way demeaning operator \eqref{eq:hw_demean} projects out unit fixed effects $\alpha_i$, group-by-time shocks $\delta_{S_i,t}$, and eligibility-by-time shocks $\eta_{Q_i,t}$. \Cref{as:dddpct} imposes that any remaining eligible--ineligible differential trend in $Y(\infty)$ is common across treatment groups, so $\ddot{Y}_{i,t}(\infty)$ has conditional mean zero in every $(S,Q)$-cell at every time. Since $\ddot{\mathbf{R}}_{i,t}$ is cell-measurable at each $t$, the iterated-expectations argument gives $\sum_t \E[\ddot{\mathbf{R}}_{i,t} \ddot{Y}_{i,t}(\infty)] = \mathbf{0}$. The identity \eqref{eq:hw_contamination} follows.
\end{proof}

\begin{proof}[Proof of \Cref{prop:hw_es_noanticip}]
	Partition the sum in \eqref{eq:hw_contamination} by the sign of $\ell$. For $\ell < 0$ (including $\ell = -1$), no anticipation gives $\CATT(g,\ell) = 0$ for all $g$, so these terms vanish. Only $\ell \geq 0$ terms survive. The coefficient $\alpha_j$ is a linear combination of post-treatment effects $\CATT(g,\ell)$ for $\ell \geq 0$, even when $j < 0$.
	
	To see that $\alpha_j \neq 0$ generically for $j < 0$, note that the post-treatment weights $\omega_{g,\ell}^{j,\star}$ for $\ell \geq 0$ are not constrained to sum to zero across $g$ by the weight properties of \Cref{prop:hw_es_decomp}. Under staggered adoption, the three-way-demeaned pre-treatment indicator $\ddot{R}_j(i,t)$ retains nonzero correlation with post-treatment cohort-specific indicators $R_{g,\ell}(i,t)$ for $\ell \geq 0$, because the relative-time composition of the sample changes across time. Therefore $\sum_g \sum_{\ell \geq 0} \omega_{g,\ell}^{j,\star}  \CATT(g,\ell) \neq 0$ generically.
\end{proof}

\begin{proof}[Proof of \Cref{prop:hw_es_homo}]
	Starting from \eqref{eq:hw_contamination}, impose $\CATT(g,\ell) = \ATT_\ell$ for all $g$. Since $\ATT_\ell$ does not depend on $g$, it factors out of the inner sum
	\[
	\alpha_j = \sum_{\ell \neq -1} \ATT_\ell \underbracket{\left(\sum_{g \in \calGtrg} \omega_{g,\ell}^{j,\star}\right)}_{\overline{\omega}_\ell^{j,\star}}~.
	\]
	By the weight properties of \Cref{prop:hw_es_decomp}, $\overline{\omega}_j^{j,\star} = 1$ (property~(i)), $\overline{\omega}_\ell^{j,\star} = 0$ for $\ell \neq j$, $\ell \neq -1$ (property~(ii)), and $\overline{\omega}_{-1}^{j,\star} = -1$ (property~(iii)). Collecting the surviving terms yields $\alpha_j = \ATT_j + (-1) \ATT_{-1} = \ATT_j - \ATT_{-1}$. Under no anticipation, $\ATT_{-1} = 0$, giving $\alpha_j = \ATT_j$.
\end{proof}

\begin{proof}[Proof of \Cref{prop:hw_agg_att}]
	By \Cref{prop:hw_es_noanticip}, $\alpha_j \pto \sum_{g} \sum_{\ell \geq 0} \omega_{g,\ell}^{j,\star} \CATT(g,\ell)$ under DDD-PCT and no anticipation. Applying Slutsky's theorem and the continuous mapping theorem to the linear combination \eqref{eq:hw_agg_att}
	\begin{align*}
		\widehat{\ATT}_{\mathrm{agg}} &\pto \sum_{j=0}^{K} w_j \sum_{g \in \calGtrg} \sum_{\ell \geq 0} \omega_{g,\ell}^{j,\star} \CATT(g,\ell) \\
		&= \sum_{g \in \calGtrg} \sum_{\ell \geq 0} \underbracket{\left(\sum_{j=0}^{K} w_j \omega_{g,\ell}^{j,\star}\right)}_{\Omega_{g,\ell}} \CATT(g,\ell)~,
	\end{align*}
	which establishes \eqref{eq:hw_agg_plim} and \eqref{eq:hw_agg_weights}.
	
	I now verify the normalization \eqref{eq:hw_agg_weights_sum}. Summing $\Omega_{g,\ell}$ over $g$ and $\ell \geq 0$
	\begin{align*}
		\sum_{g} \sum_{\ell \geq 0} \Omega_{g,\ell} &= \sum_{j=0}^{K} w_j \sum_{g} \sum_{\ell \geq 0} \omega_{g,\ell}^{j,\star}~.
	\end{align*}
	For a fixed $j \geq 0$, the inner sum decomposes as follows. By property~(i), $\sum_g \omega_{g,j}^{j,\star} = 1$. By property~(ii), $\sum_g \omega_{g,\ell}^{j,\star} = 0$ for each $\ell \neq j$ with $\ell \neq -1$ and $\ell \geq 0$. The $\ell < 0$ terms are already excluded from the sum. Therefore $\sum_g \sum_{\ell \geq 0} \omega_{g,\ell}^{j,\star} = 1 + 0 = 1$ for each $j \geq 0$, and $\sum_g \sum_{\ell \geq 0} \Omega_{g,\ell} = \sum_j w_j \cdot 1 = 1$.
	
	To see that $\Omega_{g,\ell}$ can be negative, note that $\omega_{g,\ell}^{j,\star}$ for a specific $(g,\ell,j)$ is the auxiliary regression coefficient from \eqref{eq:hw_aux_regression}. This coefficient depends on the covariance between the three-way-demeaned indicators $\ddot{R}_j(i,t)$ and $R_{g,\ell}(i,t)$, evaluated through the inverse covariance matrix. Under staggered adoption with heterogeneous cohort sizes and unbalanced relative-time composition, these covariances can take either sign. For a concrete example, consider a panel with two treatment cohorts $g_1 < g_2$ and suppose $j = 0$ and $\ell = 2$. The three-way-demeaned indicator $\ddot{R}_0(i,t)$ at time $t = g_1$ has positive residual for cohort $g_1$ (which is at event-time $0$) but the same indicator at $t = g_2$ has positive residual for cohort $g_2$ (which is also at event-time $0$). Meanwhile, $R_{g_1,2}(i,t) = \one\{S_i = g_1, Q_i = 1, t = g_1 + 2\}$. When $g_1 + 2 = g_2$, the calendar-time alignment means $\ddot{R}_0$ at time $g_2$ is correlated with cohort $g_2$ being at event-time $0$, which in turn is correlated with cohort $g_1$ being at event-time $2$. The sign of this residual correlation depends on relative cohort sizes and the number of eligible units in each cohort. When the eligible share $n_{g_1,1}/n_{g_1,\cdot}$ differs from $n_{g_2,1}/n_{g_2,\cdot}$, the three-way demeaning introduces asymmetric residuals, and the auxiliary regression coefficient $\omega_{g_1,2}^{0,\star}$ can be negative. Aggregating a negative $\omega_{g_1,2}^{0,\star}$ with positive $w_0$ yields a negative contribution to $\Omega_{g_1,2}$.
\end{proof}

\section{Proof of Identification Results}\label[appendix]{app:proofs}

\begin{proof}[Proof of \Cref{prop:saturated_ols}]
	The regression \eqref{eq:saturated} is saturated, having four parameters for four cells. I show that the OLS solution equates predicted values to cell means by explicitly solving the normal equations.
	
	{Step~1. Set up the normal equations.} Let $\mathbf{x}_i = (1, \one\{S_i = g\}, \one\{Q_i = 1\}, \one\{S_i = g, Q_i = 1\})'$ denote the $4 \times 1$ regressor vector for unit $i$. The OLS normal equations are $(\mathbf{X}'\mathbf{X})\widehat{\boldsymbol{\beta}} = \mathbf{X}'\mathbf{y}$, where $\widehat{\boldsymbol{\beta}} = (\widehat{\mu}_{g,t}, \widehat{\lambda}_{g,t}, \widehat{\eta}_{g,t}, \widehat{\tau}_{g,t}^{\mathrm{sat}})'$. Since each unit belongs to exactly one cell, the regressor vector takes four distinct values
	\[
	\mathbf{x}_i = \begin{cases}
		(1, 1, 1, 1)' & \text{if } (S_i, Q_i) = (g, 1)~, \\
		(1, 1, 0, 0)' & \text{if } (S_i, Q_i) = (g, 0)~, \\
		(1, 0, 1, 0)' & \text{if } (S_i, Q_i) = (g_c, 1)~, \\
		(1, 0, 0, 0)' & \text{if } (S_i, Q_i) = (g_c, 0)~.
	\end{cases}
	\]
	Write $n_{g,\cdot} = n_{g,1} + n_{g,0}$ for the number of treated-group units and $n_{\cdot,1} = n_{g,1} + n_{g_c,1}$ for the number of eligible units. The Gram matrix is
	\begin{equation}\label{eq:gram}
		\mathbf{X}'\mathbf{X} = \begin{pmatrix}
			n_g & n_{g,\cdot} & n_{\cdot,1} & n_{g,1} \\
			n_{g,\cdot} & n_{g,\cdot} & n_{g,1} & n_{g,1} \\
			n_{\cdot,1} & n_{g,1} & n_{\cdot,1} & n_{g,1} \\
			n_{g,1} & n_{g,1} & n_{g,1} & n_{g,1}
		\end{pmatrix}~.
	\end{equation}
	To verify, note that the $(j,k)$-entry equals $\sum_{i=1}^{n_g} x_{ij} x_{ik}$. For instance, the $(1,2)$-entry is $\sum_i \one\{S_i = g\} = n_{g,\cdot}$, the $(2,3)$-entry is $\sum_i \one\{S_i = g\}\one\{Q_i = 1\} = n_{g,1}$, and the $(3,3)$-entry is $\sum_i \one\{Q_i = 1\}^2 = n_{\cdot,1}$.
	
	Write $T_{s,q} = \sum_{i\colon S_i = s,  Q_i = q} \DeltaY_{i,t} = n_{s,q} \overline{\DeltaY}_{s,q,t}$ for the cell sum of long differences. The moment vector is
	\begin{equation}\label{eq:moment}
		\mathbf{X}'\mathbf{y} = \begin{pmatrix}
			T_{g,1} + T_{g,0} + T_{g_c,1} + T_{g_c,0} \\
			T_{g,1} + T_{g,0} \\
			T_{g,1} + T_{g_c,1} \\
			T_{g,1}
		\end{pmatrix}~.
	\end{equation}
	
	{Step~2. Solve the normal equations by substitution.} The normal equations $(\mathbf{X}'\mathbf{X})\widehat{\boldsymbol{\beta}} = \mathbf{X}'\mathbf{y}$ yield four equations. For unit $i$ in cell $(s,q)$, the fitted value is $\widehat{y}_i = \widehat{\mu}_{g,t} + \widehat{\lambda}_{g,t} \one\{S_i = g\} + \widehat{\eta}_{g,t} \one\{Q_i = 1\} + \widehat{\tau}_{g,t}^{\mathrm{sat}} \one\{S_i = g, Q_i = 1\}$. I proceed by expressing the four cell-level fitted values
	\begin{alignat}{2}
		\text{Cell } (g_c, 0)\!\!&\quad \widehat{y} = \widehat{\mu}_{g,t}~, \label{eq:cell00}\\
		\text{Cell } (g, 0)\!\!&\quad \widehat{y} = \widehat{\mu}_{g,t} + \widehat{\lambda}_{g,t}~, \label{eq:cell10}\\
		\text{Cell } (g_c, 1)\!\!&\quad \widehat{y} = \widehat{\mu}_{g,t} + \widehat{\eta}_{g,t}~, \label{eq:cell01}\\
		\text{Cell } (g, 1)\!\!&\quad \widehat{y} = \widehat{\mu}_{g,t} + \widehat{\lambda}_{g,t} + \widehat{\eta}_{g,t} + \widehat{\tau}_{g,t}^{\mathrm{sat}}~. \label{eq:cell11}
	\end{alignat}
	Since the regression is saturated (four free parameters, four cells), OLS equates the fitted value in each cell to the cell mean. Formally, the $k$-th normal equation states $\sum_i x_{ik} \widehat{y}_i = \sum_i x_{ik} y_i$, which, when partitioned by cells, reduces to $\widehat{y}_{(s,q)} = \overline{\DeltaY}_{s,q,t}$ for each cell. To see this explicitly, consider the fourth normal equation (corresponding to the treatment indicator $\one\{S_i = g, Q_i = 1\}$)
	\begin{align*}
		\sum_{i=1}^{n_g} \one\{S_i = g, Q_i = 1\} \widehat{y}_i &= \sum_{i=1}^{n_g} \one\{S_i = g, Q_i = 1\} \DeltaY_{i,t}~.
	\end{align*}
	Since $\one\{S_i = g, Q_i = 1\}$ is nonzero only for the $n_{g,1}$ units in cell $(g,1)$, this simplifies to
	\[
	n_{g,1} (\widehat{\mu}_{g,t} + \widehat{\lambda}_{g,t} + \widehat{\eta}_{g,t} + \widehat{\tau}_{g,t}^{\mathrm{sat}}) = T_{g,1} = n_{g,1} \overline{\DeltaY}_{g,1,t}~.
	\]
	Applying the same argument to the second normal equation (selecting cell $(g,0)$ and cell $(g,1)$), the third (selecting cell $(g_c,1)$ and cell $(g,1)$), and the first (summing all cells), and noting that the system has a unique solution because $\mathbf{X}'\mathbf{X}$ is invertible whenever all cell sizes are positive (guaranteed by \Cref{as:overlap}), I obtain the cell-mean equations
	\begin{align}
		\widehat{\mu}_{g,t} &= \overline{\DeltaY}_{g_c,0,t}~, \label{eq:solve_mu}\\
		\widehat{\mu}_{g,t} + \widehat{\lambda}_{g,t} &= \overline{\DeltaY}_{g,0,t}~, \label{eq:solve_lambda}\\
		\widehat{\mu}_{g,t} + \widehat{\eta}_{g,t} &= \overline{\DeltaY}_{g_c,1,t}~, \label{eq:solve_eta}\\
		\widehat{\mu}_{g,t} + \widehat{\lambda}_{g,t} + \widehat{\eta}_{g,t} + \widehat{\tau}_{g,t}^{\mathrm{sat}} &= \overline{\DeltaY}_{g,1,t}~. \label{eq:solve_tau}
	\end{align}
	
	{Step~3.} From \eqref{eq:solve_mu}, $\widehat{\mu}_{g,t} = \overline{\DeltaY}_{g_c,0,t}$. Substituting into \eqref{eq:solve_lambda} gives $\widehat{\lambda}_{g,t} = \overline{\DeltaY}_{g,0,t} - \overline{\DeltaY}_{g_c,0,t}$, and into \eqref{eq:solve_eta} gives $\widehat{\eta}_{g,t} = \overline{\DeltaY}_{g_c,1,t} - \overline{\DeltaY}_{g_c,0,t}$. Finally, substituting all three into \eqref{eq:solve_tau}
	\begin{align*}
		\widehat{\tau}_{g,t}^{\mathrm{sat}} &= \overline{\DeltaY}_{g,1,t} - \widehat{\mu}_{g,t} - \widehat{\lambda}_{g,t} - \widehat{\eta}_{g,t} \\
		&= \overline{\DeltaY}_{g,1,t} - \overline{\DeltaY}_{g_c,0,t} - \left(\overline{\DeltaY}_{g,0,t} - \overline{\DeltaY}_{g_c,0,t}\right) - \left(\overline{\DeltaY}_{g_c,1,t} - \overline{\DeltaY}_{g_c,0,t}\right) \\
		&= \overline{\DeltaY}_{g,1,t} - \overline{\DeltaY}_{g,0,t} - \overline{\DeltaY}_{g_c,1,t} + \overline{\DeltaY}_{g_c,0,t} \\
		&= \left(\overline{\DeltaY}_{g,1,t} - \overline{\DeltaY}_{g,0,t}\right) - \left(\overline{\DeltaY}_{g_c,1,t} - \overline{\DeltaY}_{g_c,0,t}\right)~,
	\end{align*}
	which is the sample triple difference \eqref{eq:triple_diff_ols}.
	
	{Step~4.} By the weak law of large numbers, each cell mean converges in probability to the corresponding population expectation, $\overline{\DeltaY}_{s,q,t} \pto \E[\DeltaY_{i,t} \mid S_i = s,  Q_i = q]$. Therefore
	\begin{align*}
		\widehat{\tau}_{g,t}^{\mathrm{sat}} &\pto \left(\E[\DeltaY_{i,t} \mid S_i = g, Q_i = 1] - \E[\DeltaY_{i,t} \mid S_i = g, Q_i = 0]\right) \\
		&\qquad - \left(\E[\DeltaY_{i,t} \mid S_i = g_c,  Q_i = 1] - \E[\DeltaY_{i,t} \mid S_i = g_c,  Q_i = 0]\right) \\
		&= \ATT(g,t)~,
	\end{align*}
	where the final equality applies \Cref{thm:identification} under \Cref{as:sampling}--\Cref{as:admissibility}.
\end{proof}

\begin{proof}[Proof of \Cref{prop:pooled_equiv}]
	{Step~1. Block-diagonal structure.} The stack fixed effects $\alpha_g$ and stack-by-cell fixed effects $\mu_{s,q,g}$ partition the data into independent blocks. The projection onto the space spanned by $\{\alpha_g, \mu_{s,q,g}\}$ operates block-diagonally, so that residualizing observation $(i,t,g)$ against these fixed effects uses only data from stack $g$. Formally, for any variable $Z_{i,t,g}$, the residual after projecting out $\{\alpha_g, \mu_{s,q,g}\}$ is
	\[
	\widetilde{Z}_{i,t,g} = Z_{i,t,g} - \frac{1}{n_{s(i),q(i)}} \sum_{\substack{j \in \Stack_g\\ S_j = S_i,  Q_j = Q_i}} \!\!\overline{Z}_{j,\cdot,g}~,
	\]
	where $\overline{Z}_{j,\cdot,g} = (L + K + 1)^{-1}\sum_{t'=g-L}^{g+K} Z_{j,t',g}$ and $s(i), q(i)$ denote unit $i$'s cell membership. In particular, the residualization of observation $(i,t,g)$ depends only on data within stack $g$ and cell $(s(i), q(i))$.
	
	{Step~2. Residualize the treatment indicators.} Consider the indicator $R_{g',e}(i,t,g) = \one\{g = g',  S_i = g',  Q_i = 1, t = g' + e\}$. This indicator is nonzero only for observations in stack $g'$ with $S_i = g'$ and $Q_i = 1$ at time $t = g' + e$. By the block-diagonal structure, its residual $\widetilde{R}_{g',e}(i,t,g)$ is nonzero only within stack $g'$. Moreover, within stack $g'$, $R_{g',e}$ is nonzero only in cell $(g', 1)$ at a single time period $t = g'+e$, so the residualized indicator is orthogonal to all other residualized indicators $\widetilde{R}_{g'',e''}$ for $(g'', e'') \neq (g', e)$.
	
	{Step~3.} By the Frisch-Waugh-Lovell (FWL) theorem, the OLS coefficient $\widehat{\tau}_{g',e}$ from the pooled regression \eqref{eq:full_hetero} equals the coefficient from the bivariate regression of $\widetilde{\DeltaY}_{i,t,g}$ on $\widetilde{R}_{g',e}(i,t,g)$. Since $\widetilde{R}_{g',e}$ is nonzero only within stack $g'$ and the residualized indicators are mutually orthogonal across $(g',e)$ pairs, this is equivalent to running a separate regression within stack $g'$ at time $t = g'+e$. The residualized regression within stack $g'$ at time $t$ reduces to the saturated regression \eqref{eq:saturated}---the stack-by-cell fixed effects absorb the intercept, group indicator, and eligibility indicator, leaving only the interaction term. The resulting coefficient is exactly the within-stack triple difference $\widehat{\tau}_{g',g'+e}^{\mathrm{sat}}$.
\end{proof}

\begin{proof}[Proof of \Cref{prop:stacked_es_estimands}]
	The regression under analysis is the fully-saturated stacked event-study specification \eqref{eq:stacked_es_reg}, restated here for reference:
	\[
	\DeltaY_{i,t,g} = \lambda_{s,t,g} + \eta_{q,t,g} + \sum_{\substack{e=-L \\ e \neq -1}}^{K} \tau_e \one\{S_i = g, Q_i = 1, t = g + e\} + \varepsilon_{i,t,g}~,
	\]
	where $\lambda_{s,t,g}$ is a fixed effect for group $s \in \{g, g_c\}$ at time $t$ in stack $g$, and $\eta_{q,t,g}$ is a fixed effect for eligibility type $q \in \{0,1\}$ at time $t$ in stack $g$. Each observation $(i, t, g)$ belongs to stack $\Stack_g$, and the same unit $i$ may appear in multiple stacks. The treatment indicators $\one\{S_i = g, Q_i = 1, t = g + e\}$ are active only for treated-eligible units in the corresponding stack at the corresponding time. I derive the OLS coefficients $\widehat{\tau}_e$ and their probability limits in three steps.
	
	{Step~1. Block-diagonal projection and strict orthogonality.}
	Let the full design matrix of fixed effects be $\mathbf{F}$, which contains indicators for every $(s, t, g)$ and $(q, t, g)$ combination. Because these indicators are fully interacted with stack $g$ and time $t$, the projection matrix $\mathbf{M_F} = \mathbf{I} - \mathbf{F}(\mathbf{F}'\mathbf{F})^{-}\mathbf{F}'$ operates block-diagonally by $(g, t)$. Let $\widetilde{R}_e(i,t,g)$ denote the FWL residual of the event-time indicator $R_e(i,t,g) = \one\{S_i = g, Q_i = 1, t = g + e\}$. Since $R_e(i,t,g)$ is non-zero only for observations in stack $g$ at exactly time $t = g+e$, its residual $\widetilde{R}_e(i,t,g)$ is strictly zero for all $t \neq g+e$ and for all stacks other than $g$.
	
	Consequently, for any two distinct event-times $e \neq e'$, the residuals are orthogonal, i.e.
	\begin{equation}
		\sum_{g \in \calGtrg} \sum_{t} \sum_{i \in \Stack_g} \widetilde{R}_e(i,t,g) \widetilde{R}_{e'}(i,t,g) = 0~,
	\end{equation}
	because they are non-zero in mutually exclusive time for any given stack. By the FWL theorem, this orthogonality implies that the multiple regression \eqref{eq:stacked_es_reg} separates into independent bivariate regressions of $\widetilde{\DeltaY}_{i,t,g}$ on $\widetilde{R}_e(i,t,g)$ for each event-time $e$.
	
	{Step~2. Computing the FWL residual within each stack.}
	By Step~1, the OLS estimator for $\tau_e$ reduces to a bivariate FWL regression
	\begin{equation}\label{eq:tau_e_fwl}
		\widehat{\tau}_e = \frac{\sum_{g \in \calGtrg(e)} \sum_{i \in \Stack_g} \widetilde{R}_e(i,g+e,g)  \widetilde{\DeltaY}_{i,g+e,g}}{\sum_{g \in \calGtrg(e)} \sum_{i \in \Stack_g} \widetilde{R}_e(i,g+e,g)^2}~,
	\end{equation}
	where the time summation collapses entirely to $t = g+e$ because $\widetilde{R}_e(i,t,g) = 0$ for $t \neq g+e$. I now derive the FWL residual $\widetilde{R}_e$ and the resulting weight formula in complete detail.
	
	{Step~2a. Structure of the residual.} Fix a single stack $g$ and time $t = g + e$. Within this block, there are $n_g$ units partitioned into four cells $(s,q) \in \{(g,1), (g,0), (g_c,1), (g_c,0)\}$ with cell sizes $n_{g,1}, n_{g,0}, n_{g_c,1}, n_{g_c,0}$. The treatment indicator restricted to this block is $R_e(i) = \one\{S_i = g, Q_i = 1\}$, which takes the value $1$ only for treated-eligible units and $0$ otherwise. The fixed effects $\lambda_{s,t,g}$ and $\eta_{q,t,g}$, restricted to this block, span three linearly independent regressors
	\[
	\underbracket{1}_{\text{constant}}~, \qquad \underbracket{\one\{S_i = g\}}_{\text{group indicator}}~, \qquad \underbracket{\one\{Q_i = 1\}}_{\text{eligibility indicator}}~.
	\]
	Since $R_e(i) = \one\{S_i = g\} \one\{Q_i = 1\}$ and all three regressors are constant within each cell, the FWL residual $\widetilde{R}_e$ is also constant within each cell. Write $\widetilde{r}_{s,q}$ for the value of $\widetilde{R}_e$ in cell $(s,q)$. The three orthogonality conditions defining the projection are
	\begin{alignat}{2}
		\text{(O1)} &\quad \textstyle\sum_{(s,q)} n_{s,q} \widetilde{r}_{s,q} = 0 \qquad &&\text{($\widetilde{R}_e \perp 1$)}~, \label{eq:orth1} \\[3pt]
		\text{(O2)} &\quad n_{g,1} \widetilde{r}_{g,1} + n_{g,0} \widetilde{r}_{g,0} = 0 \qquad &&\text{($\widetilde{R}_e \perp \one\{S_i = g\}$)}~, \label{eq:orth2} \\[3pt]
		\text{(O3)} &\quad n_{g,1} \widetilde{r}_{g,1} + n_{g_c,1} \widetilde{r}_{g_c,1} = 0 \qquad &&\text{($\widetilde{R}_e \perp \one\{Q_i = 1\}$)}~. \label{eq:orth3}
	\end{alignat}
	These are three linear equations in four unknowns $(\widetilde{r}_{g,1}, \widetilde{r}_{g,0}, \widetilde{r}_{g_c,1}, \widetilde{r}_{g_c,0})$, so the solution is determined up to a scalar multiple---precisely the one degree of freedom that the treatment indicator spans beyond the main effects.
	
	{Step~2b. Solving the orthogonality conditions.} From \eqref{eq:orth2}, solving for $\widetilde{r}_{g,0}$
	\begin{equation}\label{eq:res_g0}
		\widetilde{r}_{g,0} = -\frac{n_{g,1}}{n_{g,0}} \widetilde{r}_{g,1}~.
	\end{equation}
	From \eqref{eq:orth3}, solving for $\widetilde{r}_{g_c,1}$
	\begin{equation}\label{eq:res_gc1}
		\widetilde{r}_{g_c,1} = -\frac{n_{g,1}}{n_{g_c,1}} \widetilde{r}_{g,1}~.
	\end{equation}
	Substituting both into \eqref{eq:orth1} to solve for $\widetilde{r}_{g_c,0}$
	\begin{align*}
		n_{g,1} \widetilde{r}_{g,1} + n_{g,0} \left(-\frac{n_{g,1}}{n_{g,0}} \widetilde{r}_{g,1}\right) + n_{g_c,1} \left(-\frac{n_{g,1}}{n_{g_c,1}} \widetilde{r}_{g,1}\right) + n_{g_c,0} \widetilde{r}_{g_c,0} &= 0 \\[3pt]
		n_{g,1} \widetilde{r}_{g,1} - n_{g,1} \widetilde{r}_{g,1} - n_{g,1} \widetilde{r}_{g,1} + n_{g_c,0} \widetilde{r}_{g_c,0} &= 0~,
	\end{align*}
	which gives
	\begin{equation}\label{eq:res_gc0}
		\widetilde{r}_{g_c,0} = \frac{n_{g,1}}{n_{g_c,0}} \widetilde{r}_{g,1}~.
	\end{equation}
	Writing $c \equiv \widetilde{r}_{g,1}$ for the as-yet-undetermined residual value in the treated-eligible cell, the four residual values are
	\begin{equation}\label{eq:fwl_residual_pattern}
		\widetilde{r}_{g,1} = c~, \qquad
		\widetilde{r}_{g,0} = -\frac{n_{g,1}}{n_{g,0}} c~, \qquad
		\widetilde{r}_{g_c,1} = -\frac{n_{g,1}}{n_{g_c,1}} c~, \qquad
		\widetilde{r}_{g_c,0} = \frac{n_{g,1}}{n_{g_c,0}} c~.
	\end{equation}
	The sign pattern $(+, -, -, +)$ mirrors the triple-difference contrast---treated-eligible and comparison-ineligible receive positive residuals, while treated-ineligible and comparison-eligible receive negative residuals. The magnitude of each residual is inversely proportional to the cell size, reflecting the standard inverse-variance scaling of OLS. This parallels the DiD case analyzed by \citet{sun_estimating_2021}, where the FWL residual of the treatment indicator in a two-cell setup has the sign pattern $(+, -)$; in the DDD setting, the additional eligibility dimension doubles the number of cells and produces the characteristic $(+, -, -, +)$ pattern.
	
	{Step~2c.} The residual sum of squares of $\widetilde{R}_e$ within stack $g$ at time $g+e$ is
	\begin{align}
		V_{g,e} &\equiv \sum_{i \in \Stack_g} \widetilde{R}_e(i,g+e,g)^2 = \sum_{(s,q)} n_{s,q} \widetilde{r}_{s,q}^2 \nonumber \\
		&= n_{g,1} c^2 + n_{g,0} \frac{n_{g,1}^2}{n_{g,0}^2} c^2 + n_{g_c,1} \frac{n_{g,1}^2}{n_{g_c,1}^2} c^2 + n_{g_c,0} \frac{n_{g,1}^2}{n_{g_c,0}^2} c^2 = c^2 n_{g,1}^2 \left(\frac{1}{n_{g,1}} + \frac{1}{n_{g,0}} + \frac{1}{n_{g_c,1}} + \frac{1}{n_{g_c,0}}\right)~. \label{eq:Vge_raw}
	\end{align}
	
	Step~2d.
	Let $M$ denote the orthogonal projection onto the complement of the span of the fixed-effects design within stack $g$. By construction, $\widetilde{R}_e = M R_e$. Because $M$ is symmetric and idempotent,
	\begin{equation}\label{eq:fwl_orthogonality}
		\sum_{i \in \Stack_g} \widetilde{R}_e(i) R_e(i) = R_e^{\top} M R_e = R_e^{\top} M^{\top} M R_e = \sum_{i \in \Stack_g} \widetilde{R}_e(i)^2 = V_{g,e}~.
	\end{equation}
	Because $R_e(i,t,g) = \one\{S_i = g,  Q_i = 1,  t = g+e\}$ vanishes outside the treated-eligible cell at time $g+e$, the left-hand side of \eqref{eq:fwl_orthogonality} evaluates directly by \eqref{eq:fwl_residual_pattern}, i.e.
    \begin{equation}\label{eq:fwl_orthogonality_lhs}
		\sum_{i \in \Stack_g} \widetilde{R}_e(i) R_e(i) = n_{g,1}  \widetilde{r}_{g,1} = n_{g,1} c~.
	\end{equation}
	Combining \eqref{eq:fwl_orthogonality}--\eqref{eq:fwl_orthogonality_lhs} yields the linear identity
	\begin{equation}\label{eq:Vge_linear}
		V_{g,e} = n_{g,1} c~.
	\end{equation}
	Substituting the quadratic expression \eqref{eq:Vge_raw} for $V_{g,e}$ into \eqref{eq:Vge_linear} and using $c \neq 0$ (guaranteed under \Cref{as:overlap}, since all four cell sizes are strictly positive),
	\[
	c^2  n_{g,1}^2 \left(\frac{1}{n_{g,1}} + \frac{1}{n_{g,0}} + \frac{1}{n_{g_c,1}} + \frac{1}{n_{g_c,0}}\right) = n_{g,1} c
	\implies
	1 = n_{g,1} c \left(\frac{1}{n_{g,1}} + \frac{1}{n_{g,0}} + \frac{1}{n_{g_c,1}} + \frac{1}{n_{g_c,0}}\right)~,
	\]
	which pins down
	\begin{equation}\label{eq:c_value}
		c = \left[n_{g,1} \left(\frac{1}{n_{g,1}} + \frac{1}{n_{g,0}} + \frac{1}{n_{g_c,1}} + \frac{1}{n_{g_c,0}}\right)\right]^{-1}~.
	\end{equation}
	Substituting \eqref{eq:c_value} into \eqref{eq:Vge_linear} delivers the closed-form for the residual variance within stack $g$:
	\begin{equation}\label{eq:Vge}
		V_{g,e} = n_{g,1} c = \left(\frac{1}{n_{g,1}} + \frac{1}{n_{g,0}} + \frac{1}{n_{g_c,1}} + \frac{1}{n_{g_c,0}}\right)^{-1}~.
	\end{equation}
	Because $\widetilde{R}_e$ is orthogonal to the fixed effects, $\sum_{i \in \Stack_g} \widetilde{R}_e(i) \DeltaY_{i,g+e} = \sum_{i \in \Stack_g} \widetilde{R}_e(i) \widetilde{\DeltaY}_{i,g+e}$. Expanding using the cell-constant structure \eqref{eq:fwl_residual_pattern},
	\begin{align}
		\sum_{i \in \Stack_g} \widetilde{R}_e(i) \DeltaY_{i,g+e} &= \sum_{(s,q)} n_{s,q}  \widetilde{r}_{s,q}  \overline{\DeltaY}_{s,q,g+e} \nonumber \\
		&= c n_{g,1}  \overline{\DeltaY}_{g,1,g+e} - c n_{g,1}  \overline{\DeltaY}_{g,0,g+e} - c n_{g,1}  \overline{\DeltaY}_{g_c,1,g+e} + c n_{g,1}  \overline{\DeltaY}_{g_c,0,g+e} \nonumber \\
		&= c n_{g,1}\! \left(\overline{\DeltaY}_{g,1,g+e} - \overline{\DeltaY}_{g,0,g+e} - \overline{\DeltaY}_{g_c,1,g+e} + \overline{\DeltaY}_{g_c,0,g+e}\right) \nonumber \\
		&= c n_{g,1}  \widehat{\tau}_{g,g+e}^{\mathrm{sat}}~, \label{eq:fwl_num}
	\end{align}
	where the last equality invokes the triple-difference formula \eqref{eq:triple_diff_ols} from \Cref{prop:saturated_ols}. It follows that
	\[
	\frac{\sum_{i \in \Stack_g} \widetilde{R}_e(i)  \DeltaY_{i,g+e}}{\sum_{i \in \Stack_g} \widetilde{R}_e(i)^2}
	= \frac{c  n_{g,1}  \widehat{\tau}_{g,g+e}^{\mathrm{sat}}}{V_{g,e}}
	= \frac{c  n_{g,1}  \widehat{\tau}_{g,g+e}^{\mathrm{sat}}}{c  n_{g,1}}
	= \widehat{\tau}_{g,g+e}^{\mathrm{sat}}~,
	\]
	where the second equality uses \eqref{eq:Vge_linear}.
	
	{Step~2e.} Note that \eqref{eq:tau_e_fwl} pools across stacks. Because $\widetilde{R}_e(i,g+e,g)$ has support only in stack $g$, the numerator and denominator decompose as sums of within-stack contributions. By \eqref{eq:fwl_num}, the within-stack numerator is $V_{g,e} \widehat{\tau}_{g,g+e}^{\mathrm{sat}}$ (using $V_{g,e} = c n_{g,1}$). Therefore
	\begin{align*}
		\widehat{\tau}_e &= \frac{\sum_{g \in \calGtrg(e)} V_{g,e} \widehat{\tau}_{g,g+e}^{\mathrm{sat}}}{\sum_{g \in \calGtrg(e)} V_{g,e}} = \sum_{g \in \calGtrg(e)} w_g^{\mathrm{FWL}}(e) \widehat{\tau}_{g,g+e}^{\mathrm{sat}}~,
	\end{align*}
	where the FWL weights are
	\begin{equation}\label{eq:fwl_weights_def}
		w_g^{\mathrm{FWL}}(e) = \frac{V_{g,e}}{\sum_{g' \in \calGtrg(e)} V_{g',e}} = \frac{\left(\frac{1}{n_{g,1}} + \frac{1}{n_{g,0}} + \frac{1}{n_{g_c,1}} + \frac{1}{n_{g_c,0}}\right)^{-1}}{\sum_{g' \in \calGtrg(e)} \left(\frac{1}{n_{g',1}} + \frac{1}{n_{g',0}} + \frac{1}{n_{g_c(g'),1}} + \frac{1}{n_{g_c(g'),0}}\right)^{-1}}~.
	\end{equation}
	Under \Cref{as:overlap}, all four cell sizes are strictly positive, implying $V_{g,e} > 0$. Thus, $w_g^{\mathrm{FWL}}(e) \in (0, 1)$ and $\sum_g w_g^{\mathrm{FWL}}(e) = 1$, which establishes property~(i). Property~(ii) follows directly, as $\widehat{\tau}_e$ is isolated from any $\widehat{\tau}_{g,g+e'}^{\mathrm{sat}}$ for $e' \neq e$.
	
	{Step~3.}
	To derive the probability limit, scale the numerator and denominator of the weights by the total sample size $N$. Let the scaled residual variance be $\widehat{v}_{g,e} = V_{g,e}/N$. Under \Cref{as:sampling}, sample cell proportions converge to their strictly positive population counterparts by WLLN, meaning $\widehat{v}_{g,e} \pto v_{g,e}^\ast > 0$.
	By \Cref{prop:saturated_ols}, under \Cref{as:sampling}--\Cref{as:admissibility}, each within-stack estimator is consistent for its causal estimand, i.e. $\widehat{\tau}_{g,g+e}^{\mathrm{sat}} \pto \ATT(g, g+e) \equiv \CATT(g,e)$. Applying Slutsky's theorem and the continuous mapping theorem to the ratio of consistent estimators yields
	\begin{equation}
		\widehat{\tau}_e \pto \sum_{g \in \calGtrg(e)} \left( \frac{v_{g,e}^\ast}{\sum_{g'} v_{g',e}^\ast} \right) \CATT(g,e) \equiv \sum_{g \in \calGtrg(e)} w_g^\ast(e) \CATT(g,e)~.
	\end{equation}
	This explicitly proves property~(ii) in the population limit.
\end{proof}

\begin{proof}[Proof of \Cref{thm:identification}]
	The target estimand is $\ATT(g,t) = \E[Y_{i,t}(g) - Y_{i,t}(\infty) \mid S_i = g, Q_i = 1]$. Because $Y_{i,t}(g)$ is observed for units with $S_i = g$ and $Q_i = 1$ when $t \geq g$, the identification problem reduces to recovering the counterfactual mean $\E[Y_{i,t}(\infty) \mid S_i = g, Q_i = 1]$.
	
	Write the counterfactual outcome change as
	\begin{align*}
		\E[Y_{i,t}(\infty) - Y_{i,g-1}(\infty) \mid S_i = g, Q_i = 1]~.
	\end{align*}
	Under no anticipation (\Cref{as:noanticipation}), $Y_{i,g-1}(\infty) = Y_{i,g-1}(g) = Y_{i,g-1}$, so the baseline outcome is observed. The DDD PCT assumption (\Cref{as:dddpct}) then implies
	\begin{align*}
		&\E[Y_{i,t}(\infty) - Y_{i,g-1}(\infty) \mid S_i = g, Q_i = 1] - \E[Y_{i,t}(\infty) - Y_{i,g-1}(\infty) \mid S_i = g, Q_i = 0] \\
		&\quad = \E[Y_{i,t}(\infty) - Y_{i,g-1}(\infty) \mid S_i = g_c, Q_i = 1] - \E[Y_{i,t}(\infty) - Y_{i,g-1}(\infty) \mid S_i = g_c, Q_i = 0]~.
	\end{align*}
	Rearranging gives
	\begin{align*}
		\E[Y_{i,t}(\infty) \mid S_i = g, Q_i = 1] &= \E[Y_{i,g-1} \mid S_i = g, Q_i = 1] \\
		&\quad + \E[\DeltaY_{i,t} \mid S_i = g, Q_i = 0] \\
		&\quad + \left(\E[\DeltaY_{i,t} \mid S_i = g_c, Q_i = 1] - \E[\DeltaY_{i,t} \mid S_i = g_c, Q_i = 0]\right)~.
	\end{align*}
	The first term on the right-hand side gives the baseline level for the treated-eligible cell. The second term captures the within-group trend for ineligible units in cohort $g$. The bracket term adjusts for any differential trend between eligible and ineligible units using the comparison cohort $g_c$. All three terms are expressed in terms of observed outcome changes for untreated cells.
	Substituting into $\ATT(g,t) = \E[Y_{i,t}(g) \mid S_i = g, Q_i = 1] - \E[Y_{i,t}(\infty) \mid S_i = g, Q_i = 1]$ and using $\E[Y_{i,t}(g) \mid S_i = g, Q_i = 1] = \E[Y_{i,t} \mid S_i = g, Q_i = 1]$ yields
	\begin{align*}
		\ATT(g,t) &= \E[\DeltaY_{i,t} \mid S_i = g, Q_i = 1] - \E[\DeltaY_{i,t} \mid S_i = g, Q_i = 0] \\
		&\quad - \left(\E[\DeltaY_{i,t} \mid S_i = g_c, Q_i = 1] - \E[\DeltaY_{i,t} \mid S_i = g_c, Q_i = 0]\right)~,
	\end{align*}
	which is \eqref{eq:ra_id}~.
\end{proof}

\section{Proofs of Asymptotics Results} \label[appendix]{app:asymptotics-proofs}

\begin{proof}[Proof of \Cref{thm:within_stack_clt}]
	The sample triple difference estimator is:
	\begin{equation}
		\widehat{\ATT}_{\Stack_g, g_c}(g,t) = \sum_{(s,q)} c_{s,q} \overline{\DeltaY}_{s,q,t}~,
	\end{equation}
	where $c_{g,1} = 1$, $c_{g,0} = -1$, $c_{g_c,1} = -1$, and $c_{g_c,0} = 1$. The true parameter is $\ATT(g,t) = \sum_{(s,q)} c_{s,q} \mu_{s,q,t}$, where $\mu_{s,q,t} = \E[\DeltaY_{i,t} \mid S_i = s, Q_i = q]$.
	The $\sqrt{n_g}$-scaled estimation error is given by
	\begin{align}
		\sqrt{n_g}\big(\widehat{\ATT}_{\Stack_g, g_c}(g,t) - \ATT(g,t)\big) &= \frac{1}{\sqrt{n_g}} \sum_{i \in \Stack_g} \sum_{(s,q)} c_{s,q} \frac{n_g}{n_{s,q}} \one\{S_i = s, Q_i = q\} (\DeltaY_{i,t} - \mu_{s,q,t})~.
	\end{align}
	By \Cref{as:shares}, the sample proportion limits satisfy $n_{s,q}/n_g \pto \pi_{s,q} > 0$. Applying Slutsky's theorem then gives the representation
	\begin{equation}
		\sqrt{n_g}\big(\widehat{\ATT}_{\Stack_g, g_c}(g,t) - \ATT(g,t)\big) = \frac{1}{\sqrt{n_g}} \sum_{i \in \Stack_g} \psi_{\Stack_g}(W_i; g, t, g_c) + \smallOp(1)~,
	\end{equation}
	where the influence function $\psi_{\Stack_g}(W_i; g, t, g_c)$ is defined via \eqref{eq:if_decomp}.
    
    Because units are independently and identically distributed (\Cref{as:sampling}), the influence function summands are i.i.d. with mean zero and variance
	\begin{align}
		\E[\psi_{\Stack_g}(W_i)^2 \mid i \in \Stack_g] &= \sum_{(s,q)} \frac{1}{\pi_{s,q}^2} \E\big[\one\{S_i = s, Q_i = q\} (\DeltaY_{i,t} - \mu_{s,q,t})^2 \mid i \in \Stack_g\big] \notag\\
		&= \sum_{(s,q)} \frac{\pi_{s,q}}{\pi_{s,q}^2} \Var(\DeltaY_{i,t} \mid S_i = s, Q_i = q) = \Sigma_{g,t,g_c} < \infty~,
	\end{align}
	where the finite variance follows directly from \Cref{as:moments}. The desired result follows from the central limit theorem.
\end{proof}

\begin{proof}[Proof of \Cref{thm:agg_clt}]
	Fix event-time $e \in \{0, \ldots, K\}$ and deterministic weights $\omega_g(e) \geq 0$ with $\sum_g \omega_g(e) = 1$, where the sum is over $g \in \calGtrg$ with $g + e \leq T$.
	
	\medskip\noindent{Step 1: Linearization.}
	The aggregated estimator is
	\[
	\ESstack(e) = \sum_{g \in \calGtrg} \omega_g(e)  \widehat{\ATT}_{\Stack_g, g_c(g)}(g, g+e)~.
	\]
	By \Cref{thm:within_stack_clt}, each within-stack estimator admits the influence function representation
	\begin{equation}\label{eq:app_within_stack_linear}
		\widehat{\ATT}_{\Stack_g, g_c(g)}(g, g+e) - \ATT(g, g+e) = \frac{1}{n_g}\sum_{i \in \Stack_g} \psi_{\Stack_g}(W_i) + \mathrm{o}_{\Prob}(n_g^{-1/2})~,
	\end{equation}
	where I abbreviate $\psi_{\Stack_g}(W_i) = \psi_{\Stack_g}(W_i; g, g+e, g_c(g))$. Therefore 
	\begin{equation}\label{eq:app_agg_linear}
		\ESstack(e) - \ES(e) = \sum_{g \in \calGtrg} \omega_g(e) \left[\frac{1}{n_g}\sum_{i \in \Stack_g} \psi_{\Stack_g}(W_i)\right] + \mathrm{o}_{\Prob}(n^{-1/2})~.
	\end{equation}
	The remainder is $\mathrm{o}_{\Prob}(n^{-1/2})$ because $\sum_g \omega_g(e)  \mathrm{o}_{\Prob}(n_g^{-1/2}) = \mathrm{o}_{\Prob}(n^{-1/2})$ when $n_g / n \to \lambda_g > 0$ for each $g$ (the proportion of units in each stack converges to a positive constant, since $T$ and $|\calGtrg|$ are fixed).
	
	\medskip\noindent{Step 2: Rewrite as a single sum over all units.}
	Define the indicator $\one_g(i) = \one\{i \in \Stack_g\}$. Since each unit may belong to multiple stacks (if it is a shared comparison unit), I have
	\begin{equation}\label{eq:app_single_sum}
		\sum_{g \in \calGtrg} \omega_g(e) \frac{1}{n_g}\sum_{i \in \Stack_g} \psi_{\Stack_g}(W_i) = \frac{1}{n}\sum_{i=1}^{n} \phi_i(e)~,
	\end{equation}
	where
	\begin{equation}\label{eq:app_phi_def}
		\phi_i(e) = n \sum_{g \in \calGtrg} \omega_g(e) \frac{\one_g(i)}{n_g} \psi_{\Stack_g}(W_i) = \sum_{g i \in \Stack_g} \frac{n \omega_g(e)}{n_g} \psi_{\Stack_g}(W_i)~.
	\end{equation}
	As $n \to \infty$, $n_g / n \to \lambda_g$, so it follows that
	\[
	\phi_i(e) \to \sum_{g i \in \Stack_g} \frac{\omega_g(e)}{\lambda_g}  \psi_{\Stack_g}(W_i)~.
	\]
	This is a deterministic linear combination of the within-stack influence functions, summed over all stacks containing unit $i$.
	
	\medskip\noindent{Step 3: Verify CLT conditions.}
	By construction, $\E[\phi_i(e)] = 0$ since $\E[\psi_{\Stack_g}(W_i)] = 0$ for each $g$. The random variables $\{\phi_i(e)\}_{i=1}^n$ are i.i.d.\ (under \Cref{as:sampling}), since $\phi_i(e)$ is a measurable function of $W_i$ and stack membership (which is determined by $S_i$).
	
	I verify finite second moments. By the Cauchy--Schwarz inequality,
	\begin{align*}
		\E[\phi_i(e)^2] &\overset{(1)}{=} \E\bigg[\bigg(\sum_{g i \in \Stack_g} \frac{\omega_g(e)}{\lambda_g}  \psi_{\Stack_g}(W_i)\bigg)^{\!2}\bigg] 
		\leq |\calGtrg| \sum_{g \in \calGtrg} \frac{\omega_g(e)^2}{\lambda_g^2} \E[\psi_{\Stack_g}(W_i)^2  \one_g(i)] < \infty~,
	\end{align*}
	where (1) follows from $\E[\psi_{\Stack_g}(W_i)^2 \mid i \in \Stack_g] = \Sigma_{g, g+e, g_c(g)} < \infty$, and $|\calGtrg| < \infty$ since $T$ is fixed.
	Since $\{\phi_i(e)\}_{i=1}^n$ are i.i.d.\ with $\E[\phi_i(e)] = 0$ and $\E[\phi_i(e)^2] = \Vstack(e) < \infty$, the Lindeberg condition is automatically satisfied by \Cref{lem:lindeberg} (applied with $\psi_i = \phi_i(e)$ and $\sigma^2 = \Vstack(e)$). The Lindeberg--L\'evy CLT therefore gives
	\begin{equation}\label{eq:app_agg_clt}
		\sqrt{n}\big(\ESstack(e) - \ES(e)\big) = \frac{1}{\sqrt{n}}\sum_{1 \leq i \leq n} \phi_i(e) + \mathrm{o}_{\Prob}(1) \dto \mathcal{N}\big(0, \Vstack(e)\big)~,
	\end{equation}
	where $\Vstack(e) = \E[\phi_i(e)^2]$. Expanding the square yields the covariance structure in \eqref{eq:shared_var}.
\end{proof}


\begin{proof}[Proof of \Cref{prop:shared_var}]
	The asymptotic variance from the proof of \Cref{thm:agg_clt} is
	\begin{equation}\label{eq:app_Vstack}
		\Vstack(e) = \E[\phi_i(e)^2] = \E\bigg[\bigg(\sum_{g i \in \Stack_g} \frac{\omega_g(e)}{\lambda_g}  \psi_{\Stack_g}(W_i)\bigg)^{\!2}\bigg]~.
	\end{equation}
	Expanding the square given by
	\begin{align}
		\Vstack(e) &\overset{(1)}{=} \sum_{g \in \calGtrg}\sum_{g' \in \calGtrg} \frac{\omega_g(e)  \omega_{g'}(e)}{\lambda_g  \lambda_{g'}} \E\left[\one_g(i)  \one_{g'}(i)  \psi_{\Stack_g}(W_i)  \psi_{\Stack_{g'}}(W_i)\right]. \label{eq:app_Vstack_expanded}
	\end{align}
	
	\medskip\noindent\textit{Case 1 Shared comparison units.}
	When stacks share comparison units (e.g., $g_c(g) = g_c(g') = \infty$), the set $\Stack_g \cap \Stack_{g'}$ is nonempty. A unit $i \in \Stack_g \cap \Stack_{g'}$ contributes non-zero influence functions to both stacks. Therefore
	\begin{align*}
		\E[\one_g(i)  \one_{g'}(i)  \psi_{\Stack_g}(W_i)  \psi_{\Stack_{g'}}(W_i)] &= \E[\psi_{\Stack_g}(W_i)  \psi_{\Stack_{g'}}(W_i) \mid i \in \Stack_g \cap \Stack_{g'}]  \Prob(i \in \Stack_g \cap \Stack_{g'})~.
	\end{align*}
	For units $i \notin \Stack_g \cap \Stack_{g'}$, at least one of $\one_g(i)$ or $\one_{g'}(i)$ is zero, so the product vanishes. In particular, for the diagonal terms ($g = g'$) 
	\[
	\E[\one_g(i)^2  \psi_{\Stack_g}(W_i)^2] = \E[\psi_{\Stack_g}(W_i)^2 \mid i \in \Stack_g]  \Prob(i \in \Stack_g) = \Sigma_{g,g+e,g_c(g)} \lambda_g~.
	\]
	For the off-diagonal terms ($g \neq g'$), the cross-covariance is generally non-zero because the same observation $W_i$ enters both $\psi_{\Stack_g}$ and $\psi_{\Stack_{g'}}$. This is the source of the cross-stack covariance in \eqref{eq:shared_var}.
	
	Substituting back into \eqref{eq:app_Vstack_expanded}, separating diagonal and off-diagonal terms 
	\begin{align}
		\Vstack(e) &= \sum_{g \in \calGtrg} \frac{\omega_g(e)^2}{\lambda_g^2} \Sigma_{g,g+e,g_c(g)} \lambda_g + \sum_{\substack{g, g' \in \calGtrg \\ g \neq g'}} \frac{\omega_g(e)  \omega_{g'}(e)}{\lambda_g  \lambda_{g'}} \E[\one_g(i)  \one_{g'}(i)  \psi_{\Stack_g}(W_i)  \psi_{\Stack_{g'}}(W_i)] \notag\\
		&= \sum_{g} \frac{\omega_g(e)^2}{\lambda_g}  \Sigma_{g,g+e,g_c(g)} + \sum_{\substack{g \neq g'}} \frac{\omega_g(e)  \omega_{g'}(e)}{\lambda_g  \lambda_{g'}}  \Prob(i \in \Stack_g \cap \Stack_{g'})  \E[\psi_{\Stack_g}  \psi_{\Stack_{g'}} \mid i \in \Stack_g \cap \Stack_{g'}]. \label{eq:app_Vstack_intermediate}
	\end{align}
	Now note that $\Prob(i \in \Stack_g \cap \Stack_{g'})$ is the population probability that a randomly drawn unit belongs to both stacks. This equals the proportion of units in the overlap specifically, if both stacks share the same comparison group $g_c = \infty$, then $\Stack_g \cap \Stack_{g'} = \{i : S_i = \infty\}$ so $\Prob(i \in \Stack_g \cap \Stack_{g'}) = \Prob(S_i = \infty) = \lambda_\infty$. For the general case, define the cross-stack covariance 
	\begin{align*}
		C_{g,g'}(e) &\coloneqq \E[\one_g(i)  \one_{g'}(i)  \psi_{\Stack_g}(W_i)  \psi_{\Stack_{g'}}(W_i)] \\
		&= \mathrm{Cov}\big(\one_g(i)  \psi_{\Stack_g}(W_i), \one_{g'}(i)  \psi_{\Stack_{g'}}(W_i)\big)~,
	\end{align*}
	where the equality follows from $\E[\one_g(i)  \psi_{\Stack_g}(W_i)] = \lambda_g  \E[\psi_{\Stack_g}(W_i) \mid i \in \Stack_g] = 0$. The overall variance can then be written compactly as
	\begin{align*}
		\Vstack(e) &= \sum_{g,g' \in \calGtrg} \frac{\omega_g(e)  \omega_{g'}(e)}{\lambda_g  \lambda_{g'}} c_{g,g'}(e) = \sum_{g,g'} \omega_g(e)  \omega_{g'}(e)  \mathrm{Cov}\big(\psi_{\Stack_g}(W_i), \psi_{\Stack_{g'}}(W_i)\big)~,
	\end{align*}
	where the covariance in the last expression is taken over the joint distribution of $W_i$ (including the event that $i$ belongs to the relevant stacks), and I use the convention that $\psi_{\Stack_g}(W_i) = 0$ when $i \notin \Stack_g$. This is \eqref{eq:shared_var}.
	
	\medskip\noindent\textit{Case 2 Distinct comparison groups with no shared units.}
	When $g_c(g) \neq g_c(g')$ and $\Stack_g \cap \Stack_{g'} = \varnothing$ for $g \neq g'$, each unit belongs to at most one stack. Therefore $\one_g(i) \one_{g'}(i) = 0$ for all $i$ when $g \neq g'$, and the off-diagonal terms in \eqref{eq:app_Vstack_expanded} vanish. The variance simplifies to 
	\begin{align*}
		\Vstack(e) &= \sum_{g \in \calGtrg} \frac{\omega_g(e)^2}{\lambda_g^2} \Sigma_{g,g+e,g_c(g)} \lambda_g = \sum_{g \in \calGtrg} \omega_g(e)^2 \frac{\Sigma_{g,g+e,g_c(g)}}{\lambda_g}~.
	\end{align*}
	Since $n_g / n \to \lambda_g$, this is asymptotically equivalent to $\sum_g \omega_g(e)^2  \Sigma_{g,g+e,g_c(g)} / n_g$, yielding \eqref{eq:indep_var}.
\end{proof}

\begin{proof}[Proof of \Cref{prop:var_consistency}] \label[appendix]{app:proof_var_consistency}
	Fix an event-time $e \in \{0,\ldots,K\}$. I suppress $e$ from the notation where unambiguous, writing $\omega_g$ for $\omega_g(e)$, $\widehat{\omega}_g$ for $\widehat{\omega}_g(e)$, and $t = g + e$ for the time corresponding to cohort $g$ at event-time $e$. The proof proceeds in five steps. Throughout, I write $\mu_{s,q,t} = \E[\DeltaY_{i,t} \mid S_i = s, Q_i = q]$ for the cell-specific population mean of long differences.
	
	\medskip\noindent{Step 1: Rewrite as a sample average of squared terms.}
	Recall the aggregated influence function $\phi_i(e)$ defined in \eqref{eq:app_phi_def}~
	\[
	\phi_i(e) = \sum_{g:  i \in \Stack_g} \frac{n \omega_g}{n_g}  \psi_{\Stack_g}(W_i)~,
	\]
	where $\psi_{\Stack_g}(W_i) = \psi_{\Stack_g}(W_i; g, g+e, g_c(g))$ is the within-stack influence function from \eqref{eq:if_decomp}. By the argument in the proof of \Cref{thm:agg_clt} (arguments leading up to \eqref{eq:app_single_sum}), the aggregated estimator satisfies $\sqrt{n}(\ESstack(e) - \ES(e)) = n^{-1/2}\sum_{1 \leq i \leq n} \phi_i(e) + \mathrm{o}_{\Prob}(1)$, and the asymptotic variance is $\Vstack(e) = \E[\phi_i(e)^2]$.
	
	Define the estimated analogue by replacing population quantities with their sample counterparts~
	\begin{equation}\label{eq:app_hat_phi}
		\widehat{\phi}_i(e) = \sum_{g:  i \in \Stack_g} \frac{n \widehat{\omega}_g}{n_g}  \widehat{\psi}_{\Stack_g}(W_i)~,
	\end{equation}
	where $\widehat{\psi}_{\Stack_g}(W_i)$ is the estimated influence function from \eqref{eq:if_decomp} evaluated at sample cell proportions $\widehat{\pi}_{s,q} = n_{s,q}/n_g$ and sample cell means $\overline{\DeltaY}_{s,q,t}$, with $\widehat{\psi}_{\Stack_g}(W_i) = 0$ for $i \notin \Stack_g$. The variance estimator \eqref{eq:var_hat} then takes the form
	\begin{equation}\label{eq:app_Vhat_as_avg}
		\widehat{V}_{\mathrm{stack}}(e) = \frac{1}{n}\sum_{i=1}^{n} \widehat{\phi}_i(e)^2~.
	\end{equation}
	I must show $n^{-1}\sum_{1 \leq i \leq n} \widehat{\phi}_i(e)^2 \pto \E[\phi_i(e)^2] = \Vstack(e)$.
	
	\medskip\noindent{Step 2: Consistency of cell-level quantities.}
	For each stack $\Stack_g$ and each cell $(s,q) \in \{(g,1),(g,0),(g_c,1),(g_c,0)\}$, I establish consistency of the estimated cell shares and cell means.
	
	\emph{Cell shares.} The sample cell proportion is $\widehat{\pi}_{s,q} = n_{s,q}/n_g$, where $n_{s,q} = \sum_{i \in \Stack_g} \one\{S_i = s, Q_i = q\}$. By \Cref{as:sampling}, the indicators $\{\one\{S_i = s, Q_i = q\}\}_{i \in \Stack_g}$ are i.i.d.\ Bernoulli$(\pi_{s,q})$. By the WLLN, $\widehat{\pi}_{s,q} \pto \pi_{s,q}$. Since $\pi_{s,q} > 0$ by \Cref{as:shares}, the continuous mapping theorem gives $1/\widehat{\pi}_{s,q} \pto 1/\pi_{s,q}$.
	
	\emph{Cell means.} The sample cell mean is $\overline{\DeltaY}_{s,q,t} = n_{s,q}^{-1}\sum_{i:  S_i = s,  Q_i = q} \DeltaY_{i,t}$. By the WLLN under \Cref{as:moments}, $\overline{\DeltaY}_{s,q,t} \pto \mu_{s,q,t}$.
	
	Since the number of stacks $|\calGtrg|$ and the number of cells per stack are finite (both are bounded by a function of the fixed $T$), convergence holds jointly over all cells and stacks.
	
	\medskip\noindent{Step 3: Pointwise consistency of estimated influence functions.}
	Fix a stack $\Stack_g$ and a unit $i \in \Stack_g$. The population influence function \eqref{eq:if_decomp} evaluated at unit $i$ activates exactly one cell, namely $(S_i, Q_i)$~
	\[
	\psi_{\Stack_g}(W_i) = \frac{c_{S_i, Q_i}}{\pi_{S_i, Q_i}}\big(\DeltaY_{i,t} - \mu_{S_i, Q_i, t}\big)~,
	\]
	where $c_{s,q} \in \{+1, -1\}$ are the triple-difference signs. The estimated version is
	\[
	\widehat{\psi}_{\Stack_g}(W_i) = \frac{c_{S_i, Q_i}}{\widehat{\pi}_{S_i, Q_i}}\big(\DeltaY_{i,t} - \overline{\DeltaY}_{S_i, Q_i, t}\big)~.
	\]
	I decompose the estimation error as
	\begin{align}
		\widehat{\psi}_{\Stack_g}(W_i) - \psi_{\Stack_g}(W_i) &= c_{S_i,Q_i}\left[\frac{1}{\widehat{\pi}_{S_i,Q_i}}\big(\DeltaY_{i,t} - \overline{\DeltaY}_{S_i,Q_i,t}\big) - \frac{1}{\pi_{S_i,Q_i}}\big(\DeltaY_{i,t} - \mu_{S_i,Q_i,t}\big)\right] \notag\\
		&= c_{S_i,Q_i}\left[\left(\frac{1}{\widehat{\pi}_{S_i,Q_i}} - \frac{1}{\pi_{S_i,Q_i}}\right)\big(\DeltaY_{i,t} - \mu_{S_i,Q_i,t}\big)\right] \label{eq:app_if_err_share}\\
		&\quad + c_{S_i,Q_i}\left[\frac{1}{\widehat{\pi}_{S_i,Q_i}}\big(\mu_{S_i,Q_i,t} - \overline{\DeltaY}_{S_i,Q_i,t}\big)\right]~. \label{eq:app_if_err_mean}
	\end{align}
	Define the data-level (non-unit-specific) error terms
	\begin{equation}\label{eq:app_data_level_errors}
		\delta_{\pi,g} \coloneqq \max_{(s,q)} \left|\frac{1}{\widehat{\pi}_{s,q}} - \frac{1}{\pi_{s,q}}\right|~, \qquad \delta_{\mu,g} \coloneqq \max_{(s,q)} \left|\overline{\DeltaY}_{s,q,t} - \mu_{s,q,t}\right|~, \qquad \delta_{\widehat{\pi},g} \coloneqq \max_{(s,q)} \frac{1}{\widehat{\pi}_{s,q}}~.
	\end{equation}
	By Step~2, $\delta_{\pi,g} = \mathrm{o}_{\Prob}(1)$, $\delta_{\mu,g} = \mathrm{o}_{\Prob}(1)$, and $\delta_{\widehat{\pi},g} = \mathrm{O}_{\Prob}(1)$ (bounded in probability since $1/\widehat{\pi}_{s,q} \pto 1/\pi_{s,q} < \infty$). Combining \eqref{eq:app_if_err_share} and \eqref{eq:app_if_err_mean}, we have
	\begin{equation}\label{eq:app_if_err_bound}
		\big|\widehat{\psi}_{\Stack_g}(W_i) - \psi_{\Stack_g}(W_i)\big| \leq \delta_{\pi,g} |\DeltaY_{i,t} - \mu_{S_i,Q_i,t}| + \delta_{\widehat{\pi},g} \delta_{\mu,g}~.
	\end{equation}
	The first term is $\mathrm{o}_{\Prob}(1)$ times a unit-specific random variable with finite second moment (by \Cref{as:moments}); the second term is $\mathrm{O}_{\Prob}(1)  \mathrm{o}_{\Prob}(1) = \mathrm{o}_{\Prob}(1)$.
	
	\medskip\noindent{Step 4: $L^2$ convergence of the estimation error.}
	I now show that $n^{-1}\sum_{1 \leq i \leq n} (\widehat{\phi}_i(e) - \phi_i(e))^2 = \mathrm{o}_{\Prob}(1)$. I consider deterministic weights first ($\widehat{\omega}_g = \omega_g$); the extension to estimated weights follows at the end.
	
	With deterministic weights, the estimation error decomposes as
	\begin{align}
		\widehat{\phi}_i(e) - \phi_i(e) &= \sum_{g:  i \in \Stack_g} \frac{n \omega_g}{n_g}\left[\widehat{\psi}_{\Stack_g}(W_i) - \psi_{\Stack_g}(W_i)\right] + \sum_{g:  i \in \Stack_g} \omega_g\left(\frac{n}{n_g} - \frac{1}{\lambda_g}\right)\psi_{\Stack_g}(W_i) \notag\\
		&\eqqcolon A_i + B_i~. \label{eq:app_AB_decomp}
	\end{align}
	By the inequality $(A_i + B_i)^2 \leq 2A_i^2 + 2B_i^2$, it suffices to show $n^{-1}\sum_i A_i^2 = \mathrm{o}_{\Prob}(1)$ and $n^{-1}\sum_i B_i^2 = \mathrm{o}_{\Prob}(1)$ separately.
	
	\emph{Term $B_i$.} Define $\delta_{\lambda,g} \coloneqq |n/n_g - 1/\lambda_g|$. By \Cref{as:shares}, $n_g/n \pto \lambda_g > 0$, so by the continuous mapping theorem, $\delta_{\lambda,g} = \mathrm{o}_{\Prob}(1)$. By the Cauchy--Schwarz inequality applied to the sum over at most $|\calGtrg|$ stacks,
	\[
	B_i^2 \leq |\calGtrg| \sum_{g:  i \in \Stack_g} \omega_g^2  \delta_{\lambda,g}^2  \psi_{\Stack_g}(W_i)^2~.
	\]
	Averaging over $i$,
	\begin{align}
		\frac{1}{n}\sum_{1 \leq i \leq n} B_i^2 &\leq |\calGtrg| \sum_{g \in \calGtrg} \omega_g^2  \delta_{\lambda,g}^2 \frac{1}{n}\sum_{i \in \Stack_g} \psi_{\Stack_g}(W_i)^2 \notag\\
		&= |\calGtrg| \sum_{g \in \calGtrg} \omega_g^2  \delta_{\lambda,g}^2 \frac{n_g}{n} \underbracket{\frac{1}{n_g}\sum_{i \in \Stack_g} \psi_{\Stack_g}(W_i)^2}_{\pto  \Sigma_{g,t,g_c} < \infty}~. \label{eq:app_B_bound}
	\end{align}
	The sample average of $\psi_{\Stack_g}(W_i)^2$ converges by the WLLN to $\E[\psi_{\Stack_g}(W_i)^2 \mid i \in \Stack_g] = \Sigma_{g,t,g_c} < \infty$, which holds under \Cref{as:moments} since $\psi_{\Stack_g}(W_i)$ is a linear combination of cell-specific centered outcomes weighted by inverse cell shares. Since $\delta_{\lambda,g}^2 = \mathrm{o}_{\Prob}(1)$ for each of the finitely many $g \in \calGtrg$, and $n_g/n = \mathrm{O}_{\Prob}(1)$ and $\Sigma_{g,t,g_c} < \infty$, the entire expression in \eqref{eq:app_B_bound} is $\mathrm{o}_{\Prob}(1)$.
	
	\emph{Term $A_i$.} Using \eqref{eq:app_if_err_bound} and the Cauchy--Schwarz inequality~
	\[
	A_i^2 \leq |\calGtrg| \sum_{g:  i \in \Stack_g} \frac{n^2  \omega_g^2}{n_g^2}\left[\widehat{\psi}_{\Stack_g}(W_i) - \psi_{\Stack_g}(W_i)\right]^2~.
	\]
	By \eqref{eq:app_if_err_bound}, for each $g$~
	\begin{equation}\label{eq:app_if_err_sq}
		\left[\widehat{\psi}_{\Stack_g}(W_i) - \psi_{\Stack_g}(W_i)\right]^2 \leq 2 \delta_{\pi,g}^2 (\DeltaY_{i,t} - \mu_{S_i,Q_i,t})^2 + 2 \delta_{\widehat{\pi},g}^2 \delta_{\mu,g}^2~.
	\end{equation}
	Averaging over $i$ and using $n^2/n_g^2 = (n/n_g)^2 = \mathrm{O}_{\Prob}(1)$ by \Cref{as:shares}~
	\begin{align}
		\frac{1}{n}\sum_{1 \leq i \leq n} A_i^2 &\precsim \sum_{g \in \calGtrg} \omega_g^2 \frac{n}{n_g} \left[2 \delta_{\pi,g}^2 \frac{1}{n_g}\sum_{i \in \Stack_g} (\DeltaY_{i,t} - \mu_{S_i,Q_i,t})^2 + 2 \delta_{\widehat{\pi},g}^2 \delta_{\mu,g}^2\right]~. \label{eq:app_A_bound}
	\end{align}
	The sample average of $(\DeltaY_{i,t} - \mu_{S_i,Q_i,t})^2$ within stack $\Stack_g$ equals $\sum_{(s,q)} (n_{s,q}/n_g) \left( n_{s,q}^{-1}\sum_{j:  S_j = s, Q_j = q} (\DeltaY_{j,t} - \mu_{s,q,t})^2 \right)$. By the WLLN, each cell-specific sample second moment satisfies
	\begin{equation}\label{eq:app_cell_second_moment}
		\frac{1}{n_{s,q}}\sum_{j:  S_j = s, Q_j = q} (\DeltaY_{j,t} - \mu_{s,q,t})^2 \pto \Var(\DeltaY_{i,t} \mid S_i = s, Q_i = q) < \infty~,
	\end{equation}
	where the WLLN for $(\DeltaY_{i,t} - \mu_{s,q,t})^2$ requires $\E[(\DeltaY_{i,t} - \mu_{s,q,t})^4 \mid S_i = s, Q_i = q] < \infty$, which is ensured by \Cref{as:moments4}. It follows that
	\[
	\frac{1}{n_g}\sum_{i \in \Stack_g} (\DeltaY_{i,t} - \mu_{S_i,Q_i,t})^2 = \mathrm{O}_{\Prob}(1)~.
	\]
	Since $\delta_{\pi,g}^2 = \mathrm{o}_{\Prob}(1)$, $\delta_{\widehat{\pi},g}^2 = \mathrm{O}_{\Prob}(1)$, and $\delta_{\mu,g}^2 = \mathrm{o}_{\Prob}(1)$, each summand in \eqref{eq:app_A_bound} is $\mathrm{o}_{\Prob}(1)  \mathrm{o}_{\Prob}(1) = \mathrm{o}_{\Prob}(1)$. The sum over the finitely many $g \in \calGtrg$ preserves this rate, giving $n^{-1}\sum_{1 \leq i \leq n} A_i^2 = \mathrm{o}_{\Prob}(1)$.
	
	Combining the bounds on $A_i$ and $B_i$ via \eqref{eq:app_AB_decomp}~
	\begin{equation}\label{eq:app_L2_convergence}
		\frac{1}{n}\sum_{1 \leq i \leq n} \big(\widehat{\phi}_i(e) - \phi_i(e)\big)^2 = \mathrm{o}_{\Prob}(1)~.
	\end{equation}
	
	\medskip\noindent{Step 5: Conclude.}
	Decompose the variance estimator as
	\begin{equation}\label{eq:app_var_decomp}
		\widehat{V}_{\mathrm{stack}}(e) = \underbracket{\frac{1}{n}\sum_{1 \leq i \leq n} \phi_i(e)^2}_{T_1} + \underbracket{\frac{1}{n}\sum_{1 \leq i \leq n} \left[\widehat{\phi}_i(e)^2 - \phi_i(e)^2\right]}_{T_2}~.
	\end{equation}
	
	\emph{Term $T_1$.} Since $\{\phi_i(e)\}_{i=1}^n$ are i.i.d.\ (each $\phi_i(e)$ is a measurable function of $W_i$, and $\{W_i\}_{1 \leq i \leq n}$ are i.i.d.\ by \Cref{as:sampling}) with finite mean $\E[\phi_i(e)^2] = \Vstack(e) < \infty$ (established in the proof of \Cref{thm:agg_clt}, \eqref{eq:app_agg_clt}), the WLLN gives
	\begin{equation}\label{eq:app_T1}
		T_1 \pto \Vstack(e)~.
	\end{equation}
	
	\emph{Term $T_2$.} By the identity $a^2 - b^2 = (a-b)(a+b)$ and the Cauchy--Schwarz inequality for sums~
	\begin{align}
		|T_2| &= \left|\frac{1}{n}\sum_{1 \leq i \leq n} \big(\widehat{\phi}_i(e) - \phi_i(e)\big)\big(\widehat{\phi}_i(e) + \phi_i(e)\big)\right| \notag\\
		&\leq \left(\frac{1}{n}\sum_{1 \leq i \leq n} \big(\widehat{\phi}_i(e) - \phi_i(e)\big)^2\right)^{\!1/2} \left(\frac{1}{n}\sum_{1 \leq i \leq n} \big(\widehat{\phi}_i(e) + \phi_i(e)\big)^2\right)^{\!1/2}~. \label{eq:app_T2_CS}
	\end{align}
	The first factor is $\mathrm{o}_{\Prob}(1)$ by \eqref{eq:app_L2_convergence}. For the second factor, I use the inequality $(a+b)^2 \leq 2a^2 + 2b^2$ to write
	\[
	\frac{1}{n}\sum_{1 \leq i \leq n} \big(\widehat{\phi}_i(e) + \phi_i(e)\big)^2 \leq \frac{2}{n}\sum_{1 \leq i \leq n} \widehat{\phi}_i(e)^2 + \frac{2}{n}\sum_{1 \leq i \leq n} \phi_i(e)^2~.
	\]
	The second term converges to $2\Vstack(e)$ by \eqref{eq:app_T1}. Then, by the triangle inequality in $\ell^2$,
	\[
	\left(\frac{1}{n}\sum_{1 \leq i \leq n} \widehat{\phi}_i(e)^2\right)^{\!1/2} \leq \left(\frac{1}{n}\sum_{1 \leq i \leq n} \phi_i(e)^2\right)^{\!1/2} + \left(\frac{1}{n}\sum_{1 \leq i \leq n} \big(\widehat{\phi}_i(e) - \phi_i(e)\big)^2\right)^{\!1/2}~.
	\]
	The first term is $\mathrm{O}_{\Prob}(1)$ by \eqref{eq:app_T1} and the second is $\mathrm{o}_{\Prob}(1)$ by \eqref{eq:app_L2_convergence}, so $n^{-1}\sum_i \widehat{\phi}_i(e)^2 = \mathrm{O}_{\Prob}(1)$. Therefore the second factor in \eqref{eq:app_T2_CS} is $\mathrm{O}_{\Prob}(1)$, giving $|T_2| \leq \mathrm{o}_{\Prob}(1)  \mathrm{o}_{\Prob}(1) = \mathrm{o}_{\Prob}(1)$.
	Combining $T_1$ and $T_2$ in \eqref{eq:app_var_decomp}, it follows that
	\[
	\widehat{V}_{\mathrm{stack}}(e) = \Vstack(e) + \mathrm{o}_{\Prob}(1)~.
	\]
	
	\medskip\noindent\textit{Extension to estimated weights.}
	When $\widehat{\omega}_g(e)$ replaces $\omega_g(e)$, the decomposition \eqref{eq:app_AB_decomp} acquires an additional term~
	\[
	C_i = \sum_{g:  i \in \Stack_g} \frac{n}{n_g}\big(\widehat{\omega}_g - \omega_g\big) \psi_{\Stack_g}(W_i)~.
	\]
	By the same Cauchy--Schwarz argument used for $B_i$~
	\[
	\frac{1}{n}\sum_{1 \leq i \leq n} C_i^2 \leq |\calGtrg| \sum_{g \in \calGtrg} (\widehat{\omega}_g - \omega_g)^2 \frac{n}{n_g} \frac{1}{n_g}\sum_{i \in \Stack_g} \psi_{\Stack_g}(W_i)^2~.
	\]
	When $\widehat{\omega}_g \pto \omega_g$ (which holds for cohort-size weights $\widehat{\omega}_g = n_g/n$ under \Cref{as:shares}, since $n_g/n \pto \lambda_g$), $(\widehat{\omega}_g - \omega_g)^2 = \mathrm{o}_{\Prob}(1)$, the sample second moment is $\mathrm{O}_{\Prob}(1)$, and $n/n_g = \mathrm{O}_{\Prob}(1)$, so $n^{-1}\sum_i C_i^2 = \mathrm{o}_{\Prob}(1)$. The remainder of the argument is identical.
\end{proof}

\section{Covariate-Adjusted Framework}\label[appendix]{app:covariates}

This appendix develops the covariate-adjusted version of the stacked DDD framework. When pre-treatment covariates $X_i$ are available, the identifying assumptions can be weakened to hold conditional on covariates, and efficiency gains are available through regression adjustment, inverse probability weighting, or their doubly robust combination. The presentation is self-contained: I state the conditional assumptions, derive the identification results, define the estimators, and establish the asymptotic theory.

The unconditional DDD PCT assumption (\Cref{as:dddpct}) can be strengthened to hold conditional on pre-treatment covariates.

\begin{assumption}[Conditional DDD Parallel Changes-in-Trends (PCT)]\label{as:dddpct_cond}
	For all $g \in \calGtrg$, all valid comparison groups $g_c > g$, all $t \in \{2, \ldots, T\}$ with $t \leq g_c$, and for almost all $x \in \calX$,
	\begin{multline}
		\E[\Yitpo{\infty} - Y_{i,t-1}(\infty) \mid S_i = g, Q_i = 1, X_i = x] - \E[\Yitpo{\infty} - Y_{i,t-1}(\infty) \mid S_i = g, Q_i = 0, X_i = x] \\
		= \E[\Yitpo{\infty} - Y_{i,t-1}(\infty) \mid S_i = g_c, Q_i = 1, X_i = x] - \E[\Yitpo{\infty} - Y_{i,t-1}(\infty) \mid S_i = g_c, Q_i = 0, X_i = x]~. \label{eq:dddpct_cond}
	\end{multline}
\end{assumption}

\Cref{as:dddpct_cond} requires that the difference in untreated outcome trends between eligible and ineligible units---conditional on covariates---is the same across the treated cohort $g$ and its comparison group $g_c$. The conditioning on $X_i$ allows the eligibility-specific differential trend to vary with observable characteristics, as long as this variation is common across groups. The unconditional \Cref{as:dddpct} is implied by \Cref{as:dddpct_cond} together with the overlap condition below, by integrating \eqref{eq:dddpct_cond} over the covariate distribution.

\begin{assumption}[Conditional Overlap]\label{as:overlap_cond}
	For each stack $\Stack_g$ with comparison group $g_c$, the generalized propensity scores are bounded away from zero and one. There exists $\epsilon > 0$ such that for all cells $(s,q) \in \{(g,1), (g,0), (g_c,1), (g_c,0)\}$ and for almost all $x \in \calX$~
	\begin{equation}\label{eq:overlap_cond}
		\gps{s,q}{x} \coloneqq \Prob(S_i = s,  Q_i = q \mid X_i = x) \geq \epsilon~.
	\end{equation}
\end{assumption}

\Cref{as:overlap_cond} strengthens the unconditional overlap condition (\Cref{as:overlap}) to a conditional version. It ensures that, at every covariate value, there is positive probability of observing a unit in each of the four cells. This is essential for the inverse probability weighting and doubly robust estimators developed below, which require dividing by estimated propensity scores. In practice, the condition is enforced by trimming observations with extreme propensity score values.

The identification formula involves an integration over the covariate distribution.

\begin{theorem}[Covariate-Adjusted Identification]\label{thm:identification_cond}
	Under \Cref{as:sampling}, \Cref{as:noanticipation}, \Cref{as:overlap_cond}, and \Cref{as:dddpct_cond}, for each treatment cohort $g \in \calGtrg$ with comparison group $g_c$, and for each $t \geq g$ with $t \leq g + K$~
	\begin{align}
		\ATT(g,t) &= \E\bigg[\E[\DeltaY_{i,t} \mid S_i = g, Q_i = 1, X_i] - \E[\DeltaY_{i,t} \mid S_i = g, Q_i = 0, X_i] \notag\\
		&\qquad\quad - \left(\E[\DeltaY_{i,t} \mid S_i = g_c, Q_i = 1, X_i] - \E[\DeltaY_{i,t} \mid S_i = g_c, Q_i = 0, X_i]\right) \mid S_i = g, Q_i = 1\bigg]~. \label{eq:ra_id_cond}
	\end{align}
\end{theorem}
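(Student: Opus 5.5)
The plan is to mirror the proof of \Cref{thm:identification}, but carried out pointwise in $x$, followed by an integration over the covariate distribution of the treated-eligible cell.

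Fix a stack $\Stack_g$ with comparison cohort $g_c$, and fix $t \geq g$ with $t \leq g+K$. Let $m_{s,q,t}(x) \equiv \E[\DeltaY_{i,t} \mid S_i = s, Q_i = q, X_i = x]$ and write $\ATT(g,t\mid x) \equiv \E[Y_{i,t}(g) - Y_{i,t}(\infty) \mid S_i = g, Q_i = 1, X_i = x]$. By iterated expectations,
\[
\ATT(g,t) = \E[\ATT(g,t \mid X_i) \mid S_i = g, Q_i = 1].
\]
It therefore suffices to show, for almost all $x$ in the support of $X_i$ given $(S_i,Q_i)=(g,1)$, that
\[
\ATT(g,t\mid x) = m_{g,1,t}(x) - m_{g,0,t}(x) - \bigl(m_{g_c,1,t}(x) - m_{g_c,0,t}(x)\bigr).
\]
\Cref{as:overlap_cond} guarantees that every $m_{s,q,t}(x)$ is well defined at such $x$, since each cell has conditional probability at least $\epsilon$.

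The pointwise identity has three ingredients.

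First, for the treated-eligible cell I use \eqref{eq:observed_outcome} to replace $Y_{i,t}(g)$ by $Y_{i,t}$. \Cref{as:noanticipation} lets me replace $Y_{i,g-1}(\infty)$ by $Y_{i,g-1}$. Together these give
\[
\ATT(g,t\mid x) = m_{g,1,t}(x) - \E[Y_{i,t}(\infty) - Y_{i,g-1}(\infty) \mid S_i=g,Q_i=1,X_i=x].
\]

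Second, for the three control cells $(g,0)$, $(g_c,1)$, $(g_c,0)$, observed outcomes equal $Y_{i,\cdot}(\infty)$ throughout the window. For ineligibles this follows from \eqref{eq:observed_outcome} (or \Cref{as:nospillover}). For $g_c$-eligibles it holds because $g_c > g+K \geq t$, so their untreated long differences equal $m_{s,q,t}(x)$.

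Third, I apply \Cref{as:dddpct_cond} to the one-period changes at each $\tau = g, \dots, t$ and sum them telescopically. This turns the one-step conditional DDD-PCT into the long-difference statement from $g-1$ to $t$. The condition $\tau \leq g_c$ holds throughout because $t \leq g+K < g_c$. Rearranging expresses the unobserved conditional counterfactual trend of the $(g,1)$ cell through the three control-cell regressions, and substituting this into the first ingredient gives the pointwise identity.

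Integrating the pointwise identity against the law of $X_i$ given $S_i=g,Q_i=1$ yields \eqref{eq:ra_id_cond}.

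The main obstacle is the bookkeeping of supports. The pointwise identity is needed only on the support of $X_i$ given $(g,1)$, yet the functions $m_{s,q,t}$ for the other cells must be defined and pinned down there. \Cref{as:overlap_cond} delivers exactly this: the conditional law of $X$ within each cell is mutually absolutely continuous with the marginal law of $X$, so "almost all $x$" statements transfer across cells. I would state this transfer explicitly before integrating.

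A secondary point is that the telescoping requires conditional means of each one-period difference. These are finite under \Cref{as:sampling}.
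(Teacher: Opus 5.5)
Your proposal is correct and follows the paper's route. The paper's proof reruns the three steps of \Cref{thm:identification} (no anticipation, DDD-PCT, rearrange and substitute) conditionally on $X_i = x$, and then integrates over the law of $X_i$ given $S_i = g, Q_i = 1$; your telescoping of the one-period conditional DDD-PCT into the long difference, and your use of \Cref{as:overlap_cond} to carry ``almost all $x$'' statements across cells, spell out two steps the paper leaves implicit (one small correction: finiteness of the counterfactual one-period means in the $(g,1)$ cell follows from the DDD-PCT equality and the definition of $\ATT(g,t)$, not from \Cref{as:sampling}, which only covers observed data).
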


The proof follows the same three steps as \Cref{thm:identification}, with all expectations conditioned on $X_i = x$ before integrating over the treated covariate distribution $X_i \mid S_i = g, Q_i = 1$.

\section{Three-Way Fixed-Effects in Event-Study Designs} \label[appendix]{app:3wfe}

The conventional approach to DDD estimation is the three-way fixed effects (3WFE) regression
\begin{equation}\label{eq:3wfe}
	Y_{i,t} = \alpha_i + \gamma_t + \delta_{S_i, t} + \theta  D_{i,t} + \epsilon_{i,t}~,
\end{equation}
where $\alpha_i$ are unit fixed effects, $\gamma_t$ are time fixed effects, $\delta_{S_i, t}$ are group-by-time fixed effects, and $D_{i,t} = \one\{t \geq S_i\} Q_i$ is the treatment indicator. Despite its simplicity, the 3WFE estimator suffers from two distinct sources of bias under treatment effect heterogeneity. The analysis below extends the TWFE decomposition of \citet{goodman-bacon_difference--differences_2021} and the negative-weight results of \citet{de_chaisemartin_two-way_2020} to the triple-differences setting, showing that the additional eligibility dimension introduces a new channel through which contamination can enter.

\begin{prop}[3WFE decomposition]\label{prop:3wfe_decomp}
	Under heterogeneous treatment effects $\ATT(g,t) \neq \ATT(g',t')$, the 3WFE estimator $\widehat{\theta}^{\text{3WFE}}$ is a weighted average of group-time ATTs with weights that can be negative
	\begin{equation}\label{eq:3wfe_decomp}
		\widehat{\theta}^{\text{3WFE}} \pto \sum_{g \in \calGtrg} \sum_{t \geq g} w_{g,t}^{\text{3WFE}} \ATT(g,t) + \mathrm{Bias}_{\mathrm{forbidden}}~,
	\end{equation}
	where the weights $w_{g,t}^{\text{3WFE}}$ depend on group sizes, the number of eligible units, and the timing of treatment adoption. Some weights can be negative. $\mathrm{Bias}_{\mathrm{forbidden}}$ arises from using already-treated units as controls in implicit $2 \times 2 \times 2$ sub-experiments.
\end{prop}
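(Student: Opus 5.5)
The plan is to follow the Frisch--Waugh--Lovell route used for \Cref{prop:hw_es_decomp}, applied to the static specification \eqref{eq:3wfe}. Let $\ddot{Z}_{i,t}$ denote the residual of $Z_{i,t}$ after projecting out $\alpha_i$ and $\delta_{S_i,t}$; the time effects $\gamma_t$ are nested in $\delta_{S_i,t}$. This is the two-way within-group demeaning $\ddot Z_{i,t} = Z_{i,t} - \overline Z_{i,\cdot} - \overline Z_{S_i,t} + \overline Z_{S_i,\cdot}$. FWL then gives the probability limit
\[
\widehat\theta^{\text{3WFE}} \pto \frac{\sum_t \E[\ddot D_{i,t} Y_{i,t}]}{\sum_t \E[\ddot D_{i,t}^2]} .
\]
The proof rests on two observations. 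First, $D_{i,t}=\one\{t\ge S_i\}Q_i$ is cell-measurable at each $t$, so $\ddot D_{i,t}$ is a deterministic function of $(S_i,Q_i,t)$. I will compute it in closed form from the within-group eligible share $n_{g,1}/n_{g,\cdot}$ and the timing pattern. Second, I substitute the decomposition
\[
Y_{i,t} = Y_{i,t}(\infty) + \sum_{g\in\calGtrg}\sum_{t'\ge g}\one\{S_i=g,Q_i=1,t=t'\}\bigl(Y_{i,t}(g)-Y_{i,t}(\infty)\bigr),
\]
exactly as in Step~2 of the proof of \Cref{prop:hw_es_decomp}.

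The treatment-effect term yields the weights
\[
w^{\text{3WFE}}_{g,t} = \frac{p_{g,1}\,\ddot d_{g,1}^{(t)}}{\sum_{t'}\sum_{s,q} p_{s,q}\bigl(\ddot d_{s,q}^{(t')}\bigr)^2},
\]
where $\ddot d^{(t)}_{s,q}$ is the value of $\ddot D$ in cell $(s,q)$ at time $t$. This is the analogue of the variance-ratio form in Remark~\ref{rem:variance-ratio-rep}. The weights sum to one, by the identity $\sum_t\E[\ddot D_{i,t}D_{i,t}] = \sum_t\E[\ddot D_{i,t}^2]$ used in Step~2d of the proof of \Cref{prop:stacked_es_estimands}. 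To show that some weights can be negative, I will exhibit $\ddot d^{(t)}_{g,1}<0$ for an early cohort at late dates. Late in the panel, the group mean $\overline D_{g,t}$ and the unit mean $\overline D_{i,\cdot}$ of an early cohort are both large, so the residual for its treated-eligible cell becomes negative. A two-cohort example with $T=3$, $g_1=2$, $g_2=3$ makes this explicit, mirroring \citet{goodman-bacon_difference--differences_2021}.

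The untreated-outcome term $\sum_t\E[\ddot D_{i,t}Y_{i,t}(\infty)]$ defines $\mathrm{Bias}_{\mathrm{forbidden}}$ once it is divided by the denominator. I will show it does not vanish under \Cref{as:dddpct} alone. Specification \eqref{eq:3wfe} omits the eligibility-by-time effects $\eta_{Q_i,t}$, so an eligibility-specific trend common to all groups is not projected out. Because $\ddot D$ is cell-measurable, the term reduces to $\sum_t\sum_{s,q}p_{s,q}\ddot d^{(t)}_{s,q}\,\E[Y_{i,t}(\infty)\mid S_i=s,Q_i=q]$. I then write these cell means as unit-level terms, plus group-by-time terms, plus an eligibility-by-time term $\eta_{q,t}$, plus a DDD-PCT-admissible remainder. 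The unit-level and group-by-time pieces drop out by orthogonality, which leaves $\sum_t\sum_{s,q} p_{s,q}\ddot d^{(t)}_{s,q}\eta_{q,t}$. This is generically nonzero because $\ddot D$ at times $t\ge g$ loads on the eligible cells of already-treated cohorts. Those cohorts serve as implicit controls for later cohorts, and these are exactly the forbidden $2\times2\times2$ comparisons of \citet{strezhnev_decomposing_2023}.

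The main obstacle is the closed-form computation of $\ddot d^{(t)}_{s,q}$ with unbalanced cells and staggered timing, together with a clean separation of the forbidden-comparison term from the weighted-ATT term. The first thing I would try is to compute $\ddot D$ in the two-cohort example and verify the signs there. I would then state the general weights in the variance-ratio form above without fully simplifying them. This suffices for the existence claims (weights summing to one, possible negativity, nonzero bias), which are all that the proposition asserts.
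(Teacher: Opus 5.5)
\textbf{The negativity step fails, and no sharper computation along your route repairs it.} Each unit lies in a single group, and $\gamma_t$ is nested in $\delta_{S_i,t}$, so the projection is block-diagonal by group. Write $\rho_s = p_{s,1}/(p_{s,1}+p_{s,0})$ and $f_s = (T-s+1)/T$, with $f_\infty = 0$. Your residual then factors exactly:
\[
\ddot d^{(t)}_{s,q} = (q-\rho_s)\bigl(\one\{t\ge s\}-f_s\bigr).
\]
On every treated cell this equals $(1-\rho_g)(1-f_g)>0$, the same value for all $t\ge g$. After $g$, the group-time mean $\overline D_{g,t}$ is flat at $\rho_g$, and $\overline D_{g,\cdot}=\rho_g f_g$ offsets the unit mean $f_g$ in the same proportion. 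The group-by-time effects have therefore removed the \citet{goodman-bacon_difference--differences_2021} mechanism your heuristic relies on, namely a time mean that keeps rising as later cohorts switch on. Your $T=3$ example gives residuals $(1-\rho_{g_1})/3$ and $2(1-\rho_{g_2})/3$, both positive. Consequently your variance-ratio weights are
\[
w^{\text{3WFE}}_{g,t} = \frac{p_{g,1}(1-\rho_g)(1-f_g)}{T\sum_{s\in\calGtrg} p_{s,1}(1-\rho_s)f_s(1-f_s)} > 0 \quad (t\ge g),
\]
which is a convex combination.

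The paper's proof finds its negative number elsewhere: at the not-yet-treated eligible cell $(g',1)$ with $g\le t<g'$, where the residual is $-\frac{T-g'+1}{T}\frac{n_{g',0}}{n_{g',\cdot}}$. That is your formula with $q=1$ and $t<s$. Under your own bookkeeping, that cell is untreated and enters only the $Y(\infty)$ term, so it puts no negative weight on any $\ATT(g,t)$ either. In the balanced panel you both use, the clause ``some weights can be negative'' therefore cannot be proved for $w^{\text{3WFE}}_{g,t}$.

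Your bias algebra is right. With untreated cell means of the form $a_{s,q}+\delta_{s,t}+\eta_{q,t}$, the numerator collapses to
\[
\sum_{s\in\calGtrg} p_{s,1}(1-\rho_s)\sum_t\bigl(\one\{t\ge s\}-f_s\bigr)(\eta_{1,t}-\eta_{0,t}),
\]
which is generically nonzero under DDD-PCT. Your interpretation, however, contradicts this formula. $\ddot D$ vanishes identically on the never-treated group and never pairs two cohorts, so no already-treated cohort serves as a control for a later one. The bias is the within-group eligible-versus-ineligible trend left unabsorbed by the omitted $\eta_{Q_i,t}$. It is not a forbidden comparison in the sense of \citet{strezhnev_decomposing_2023}.

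What your FWL argument actually establishes is strictly positive weights summing to one, plus an omitted-eligibility-trend bias. The negative-weight and forbidden-comparison content of the statement would have to come from a specification whose time demeaning pools cohorts. One example is $\alpha_i+\gamma_t$ alone, where your late-date heuristic is the \citet{de_chaisemartin_two-way_2020} mechanism. It cannot come from this one.
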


\begin{proof}
	The OLS specification of interest is \eqref{eq:3wfe}. I establish the decomposition in three steps.
	By the FWL theorem, $\widehat{\theta}^{\mathrm{3WFE}}$ equals the coefficient from regressing $\widetilde{Y}_{i,t}$ on $\widetilde{D}_{i,t}$, where tildes denote residuals after projecting out all fixed effects. I derive $\widetilde{D}_{i,t}$ explicitly.
	
	The fixed effects in \eqref{eq:3wfe} are unit effects $\alpha_i$ and group-by-time effects $\delta_{S_i,t}$ (the time effects $\gamma_t$ are absorbed by $\delta_{S_i,t}$). Projecting out unit effects removes the unit-level time mean, so $D_{i,t} - \overline{D}_{i,\cdot}$, where
	\[
	\overline{D}_{i,\cdot} = \frac{1}{T}\sum_{t'=1}^{T} D_{i,t'} = \frac{1}{T}\sum_{t'=1}^{T} \one\{t' \geq S_i\} Q_i = \frac{(T - S_i + 1)^+}{T} \cdot Q_i~.
	\]
	Projecting out group-by-time effects further removes the group-time mean, giving $D_{i,t} - \overline{D}_{i,\cdot} - \overline{D}_{S_i,t} + \overline{D}_{S_i,\cdot}$, where
	\[
	\overline{D}_{S_i,t} = \frac{1}{n_{S_i}} \sum_{j\colon  S_j = S_i} D_{j,t} = \frac{1}{n_{S_i}} \sum_{j\colon  S_j = S_i} \one\{t \geq S_j\} Q_j
	\]
	is the within-group treatment rate at time $t$, and $\overline{D}_{S_i,\cdot} = T^{-1}\sum_t \overline{D}_{S_i,t}$ is the group-level time average. The residualized treatment indicator is
	\begin{equation}\label{eq:fwl_3wfe}
		\widetilde{D}_{i,t}^{\mathrm{3WFE}} = D_{i,t} - \overline{D}_{i,\cdot} - \overline{D}_{S_i, t} + \overline{D}_{S_i, \cdot}~.
	\end{equation}
	Note that
	\[
	\widehat{\theta}^{\mathrm{3WFE}} = \frac{\sum_{i=1}^{n}\sum_{t=1}^{T} \widetilde{D}_{i,t} \widetilde{Y}_{i,t}}{\sum_{i=1}^{n}\sum_{t=1}^{T} \widetilde{D}_{i,t}^2}~.
	\]
	I show that $\widetilde{D}_{i,t}$ can take negative values for not-yet-treated units, generating negative weights on certain $\ATT(g,t)$.
	Consider a unit $i$ with $S_i = g$ and $Q_i = 1$ (treated-eligible). At time $t \geq g$, the unit is treated, so $D_{i,t} = 1$. The unit-level time mean is $\overline{D}_{i,\cdot} = (T - g + 1)/T$. The group-time mean $\overline{D}_{g,t}$ equals the fraction of group-$g$ units that are eligible, $\overline{D}_{g,t} = n_{g,1}/n_{g,\cdot}$ for $t \geq g$ (since all eligible group-$g$ units are treated) and $\overline{D}_{g,t} = 0$ for $t < g$. Thus $\overline{D}_{g,\cdot} = ((T-g+1)/T) (n_{g,1}/n_{g,\cdot})$.
	
	Now consider a second cohort $g' > g$ and a not-yet-treated eligible unit $j$ with $S_j = g'$ and $Q_j = 1$ at time $t$ with $g \leq t < g'$. I compute the FWL residual \eqref{eq:fwl_3wfe} for this unit explicitly; note that
	\begin{align*}
		D_{j,t} &= \one\{t \geq g'\} \cdot 1 = 0 \qquad (\text{not yet treated at time } t < g')~, \\
		\overline{D}_{j,\cdot} &= \frac{T - g' + 1}{T} > 0 \qquad (\text{positive because eventually treated})~, \\
		\overline{D}_{g',t} &= 0 \qquad (\text{no group-}g'\text{ unit is treated at } t < g')~, \\
		\overline{D}_{g',\cdot} &= \frac{T - g' + 1}{T} \frac{n_{g',1}}{n_{g',\cdot}}~.
	\end{align*}
	Substituting these into \eqref{eq:fwl_3wfe}, this gives
	\begin{align*}
		\widetilde{D}_{j,t}^{\mathrm{3WFE}} &= D_{j,t} - \overline{D}_{j,\cdot} - \overline{D}_{g',t} + \overline{D}_{g',\cdot} \\
		&= 0 - \frac{T - g' + 1}{T} - 0 + \frac{T - g' + 1}{T} \frac{n_{g',1}}{n_{g',\cdot}} \\
		&= -\frac{T - g' + 1}{T} \left(1 - \frac{n_{g',1}}{n_{g',\cdot}}\right) = -\frac{T - g' + 1}{T} \frac{n_{g',0}}{n_{g',\cdot}} < 0~.
	\end{align*}
	The residualized treatment indicator is strictly negative for this not-yet-treated eligible unit, because its positive unit-level time mean $\overline{D}_{j,\cdot}$ (reflecting future treatment) is not fully offset by the group-level time average $\overline{D}_{g',\cdot}$. The FWL numerator for this unit at this time contributes $\widetilde{D}_{j,t}^{\mathrm{3WFE}} \widetilde{Y}_{j,t} < 0$ when $\widetilde{Y}_{j,t} > 0$, effectively using this not-yet-treated unit's outcomes as a comparison but with a reversed sign. The analogous phenomenon in the TWFE DiD setting is documented by \citet{goodman-bacon_difference--differences_2021} and \citet{de_chaisemartin_two-way_2020}; in the DDD case, the ineligible share $n_{g',0}/n_{g',\cdot}$ introduces additional heterogeneity in the residual magnitude across cohorts.
\end{proof}

\begin{remark}[Effect homogeneity in 3WFE versus stacked DDD]\label{rmk:3wfe_ok}
	The 3WFE specification \eqref{eq:3wfe} imposes a single treatment coefficient $\theta$, implicitly assuming that the treatment effect is homogeneous across all $(g,t)$ cells. This rules out dynamic treatment effects, cohort-specific effects, and covariate-mediated heterogeneity. The stacked DDD framework targets $(g,t)$-specific parameters $\ATT(g,t)$ without any homogeneity restriction, and the aggregation to event-study parameters $\ESstack(e)$ uses researcher-chosen non-negative weights. Under homogeneous treatment effects, no staggering, and a single treatment cohort, 3WFE is consistent and numerically equivalent to the stacked estimator with a single stack.
\end{remark}

\subsection{The Missing Fixed Effects}\label{sub:missing_fe}

The 3WFE specification \eqref{eq:3wfe} omits eligibility-by-time fixed effects $\eta_{Q_i, t}$. A properly specified DDD regression must include all three sets of two-way interactions---unit (absorbing group-by-eligibility), group-by-time, {and} eligibility-by-time---to isolate the triple-difference variation. Therefore, the correct DDD regression is
\begin{equation}\label{eq:correct_ddd}
	Y_{i,t} = \alpha_i + \delta_{S_i, t} + \eta_{Q_i, t} + \tau  D_{i,t} + \epsilon_{i,t}~,
\end{equation}
where $\alpha_i$ absorbs all time-invariant unit heterogeneity (including group, eligibility, and group-by-eligibility main effects), $\delta_{S_i, t}$ are group-by-time fixed effects, and $\eta_{Q_i, t}$ are eligibility-by-time fixed effects. The time fixed effects $\gamma_t$ from the standard 3WFE are redundant since they are spanned by both $\delta_{S_i, t}$ and $\eta_{Q_i, t}$.
I now derive the FWL residualized treatment indicator for both specifications, showing precisely what variation each exploits.

The three sets of fixed effects in \eqref{eq:correct_ddd} are unit effects $\alpha_i$, group-by-time effects $\delta_{S_i,t}$, and eligibility-by-time effects $\eta_{Q_i,t}$. To project $D_{i,t}$ onto the space spanned by these three sets, I apply the multi-way demeaning formula. Define the following group means of the treatment indicator
\begin{align*}
	\overline{D}_{i,\cdot} &= \frac{1}{T}\sum_{t'=1}^{T} D_{i,t'} & &\text{(unit-level time mean)}~, \\
	\overline{D}_{S_i,t} &= \frac{1}{n_{S_i}} \sum_{j\colon  S_j = S_i} D_{j,t} & &\text{(group-time mean)}~, \\
	\overline{D}_{Q_i,t} &= \frac{1}{n_{Q_i}} \sum_{j\colon  Q_j = Q_i} D_{j,t} & &\text{(eligibility-time mean)}~, \\
	\overline{D}_{S_i,\cdot} &= \frac{1}{T}\sum_{t'=1}^{T} \overline{D}_{S_i,t'} & &\text{(group mean over time)}~, \\
	\overline{D}_{Q_i,\cdot} &= \frac{1}{T}\sum_{t'=1}^{T} \overline{D}_{Q_i,t'} & &\text{(eligibility mean over time)}~, \\
	\overline{D}_{\cdot,t} &= \frac{1}{n}\sum_{j=1}^{n} D_{j,t} & &\text{(cross-sectional mean at time $t$)}~, \\
	\overline{D}_{\cdot\cdot} &= \frac{1}{nT}\sum_{j=1}^{n}\sum_{t'=1}^{T} D_{j,t'} & &\text{(total mean)}~.
\end{align*}
The residualized treatment indicator after projecting out all three sets of fixed effects is
\begin{equation}\label{eq:fwl_correct}
	\widetilde{D}_{i,t} = D_{i,t} - \overline{D}_{i,\cdot} - \overline{D}_{S_i, t} - \overline{D}_{Q_i, t} + \overline{D}_{S_i, \cdot} + \overline{D}_{Q_i, \cdot} + \overline{D}_{\cdot, t} - \overline{D}_{\cdot\cdot}~.
\end{equation}
This is the standard inclusion-exclusion formula for three-way demeaning. The residual $\widetilde{D}_{i,t}$ isolates the variation in $D_{i,t}$ that is orthogonal to all three sets of two-way interactions---precisely the triple-difference variation.

\subsection{Decomposition of 3WFE-DDD Event-Study Coefficients}\label{sub:3wfe_es_decomp}

The IW-DDD framework of Section \ref{sub:existing_ddd} reveals exactly what goes wrong with the conventional 3WFE event-study specification under staggered adoption with heterogeneous treatment effects. I now formalize the contamination by adapting the decomposition of \citet{sun_estimating_2021} to the DDD setting, deriving four progressively stronger results that characterize the population regression coefficients under different sets of assumptions.

Consider the 3WFE-DDD event-study regression
\begin{equation}\label{eq:3wfe_eventstudy}
	Y_{i,t} = \alpha_i + \gamma_t + \delta_{S_i,t} + \sum_{\substack{e = -L \\ e \neq -1}}^{K} \mu_e \one\{t - S_i = e\} Q_i + \epsilon_{i,t}~,
\end{equation}
which interacts the event-time indicators with the eligibility indicator $Q_i$, normalizing at $e = -1$. Denote the DDD event-time indicator $R_e(i,t) = \one\{t - S_i = e\} Q_i$ and the cohort-specific event-time indicator $R_{g,\ell}(i,t) = \one\{S_i = g, Q_i = 1, t = g + \ell\}$. The population regression coefficient $\mu_e$ is the projection of $Y_{i,t}$ onto $R_e(i,t)$, partialling out $(\alpha_i, \gamma_t, \delta_{S_i,t})$ and all other event-time indicators $\{R_{e'}(i,t)\}_{e' \neq e, -1}$.

I define the {auxiliary regression} that characterizes the implicit weights in $\mu_e$. For each cohort $g \in \calGtrg$ and event-time $\ell$, regress the cohort-specific indicator $R_{g,\ell}(i,t)$ on the same regressors as \eqref{eq:3wfe_eventstudy}
\begin{equation}\label{eq:aux_regression}
	R_{g,\ell}(i,t) = \alpha_i + \gamma_t + \delta_{S_i,t} + \sum_{\substack{e \neq -1}} \omega_{g,\ell}^{e} R_e(i,t) + \upsilon_{i,t}~.
\end{equation}
The population regression coefficient $\omega_{g,\ell}^{e}$ measures how much of the variation in $R_{g,\ell}$ is captured by the event-time-$e$ indicator after partialling out the fixed effects and all other event-time indicators. These weights are estimable from the data without imposing any assumptions on the data generating process. 

\begin{prop}[3WFE-DDD coefficient decomposition]\label{prop:3wfe_es_weights}
	The population regression coefficient $\mu_e$ from \eqref{eq:3wfe_eventstudy} satisfies
	\begin{align}
		\mu_e &= \sum_{\ell = e}\sum_{g \in \calGtrg} \omega_{g,\ell}^{e}  \left(\E[\Yit - Y_{i,0}(\infty) \mid S_i = g, Q_i = 1] - \E[Y_{i,g+\ell}(\infty) - Y_{i,0}(\infty)]\right) \label{eq:3wfe_own} \\
		&\quad + \sum_{\substack{\ell \neq e \\ \ell \neq -1}} \sum_{g \in \calGtrg} \omega_{g,\ell}^{e}  \left(\E[\Yit - Y_{i,0}(\infty) \mid S_i = g, Q_i = 1] - \E[Y_{i,g+\ell}(\infty) - Y_{i,0}(\infty)]\right) \label{eq:3wfe_other_incl} \\
		&\quad + \sum_{g \in \calGtrg} \omega_{g,-1}^{e}  \left(\E[Y_{i,g-1} - Y_{i,0}(\infty) \mid S_i = g, Q_i = 1] - \E[Y_{i,g-1}(\infty) - Y_{i,0}(\infty)]\right)~, \label{eq:3wfe_excl}
	\end{align}
	where $\omega_{g,\ell}^{e}$ is the coefficient from the auxiliary regression \eqref{eq:aux_regression}, evaluated at $t = g + \ell$. The weights satisfy
	\begin{enumerate}
		\item[{(i)}] own-period weights sum to one{,} $\sum_{g \in \calGtrg} \omega_{g,e}^{e} = 1${;}
		\item[{(ii)}] other included periods sum to zero{,} $\sum_{g \in \calGtrg} \omega_{g,\ell}^{e} = 0$ for each $\ell \neq e${,} $\ell \neq -1${;}
		\item[{(iii)}] excluded period sums to negative one{,} $\sum_{g \in \calGtrg} \omega_{g,-1}^{e} = -1${;}
		\item[{(iv)}] never-treated units receive zero weight{,} $\omega_{\infty,\ell}^{e} = 0$ for all $e,\ell${.}
	\end{enumerate}
\end{prop}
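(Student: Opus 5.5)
The plan mirrors the proof of \Cref{prop:hw_es_decomp}, specialized to the 3WFE operator. Let $\mathring{Z}_{i,t}$ denote the residual of $Z_{i,t}$ after projecting out $(\alpha_i,\gamma_t,\delta_{S_i,t})$; since $\gamma_t$ is spanned by $\delta_{S_i,t}$, this is the two-way operator \eqref{eq:fwl_3wfe}. Collect $\mathbf{R}_{i,t}=(R_e(i,t))_{e\neq-1}$. By FWL applied to \eqref{eq:3wfe_eventstudy},
\[
\boldsymbol{\mu}=\Big(\textstyle\sum_t\E[\mathring{\mathbf R}_{i,t}\mathring{\mathbf R}_{i,t}^{\top}]\Big)^{-1}\textstyle\sum_t\E[\mathring{\mathbf R}_{i,t}Y_{i,t}].
\]
Applying the same formula to \eqref{eq:aux_regression} gives
\[
\omega^{e}_{g,\ell}=\mathbf e_e^{\top}\Big(\textstyle\sum_t\E[\mathring{\mathbf R}\mathring{\mathbf R}^{\top}]\Big)^{-1}\E\big[\mathring{\mathbf R}_{i,g+\ell}\,\one\{S_i=g,Q_i=1\}\big].
\]
This is a linear functional of the dependent variable.

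For the decomposition, I will write the outcome as
\[
Y_{i,t}=Y_{i,0}(\infty)+\big(Y_{i,t}(\infty)-Y_{i,0}(\infty)\big)+\textstyle\sum_{g,\ell}R_{g,\ell}(i,t)\big(Y_{i,t}-Y_{i,t}(\infty)\big).
\]
The term $Y_{i,0}(\infty)$ is unit-level and is annihilated by the unit effects. Grouping by cells $(g,\ell)$ with $t=g+\ell$, the contribution of treated-eligible units in cohort $g$ at time $g+\ell$ enters through $\E[\mathring{\mathbf R}_{i,g+\ell}\one\{S_i=g,Q_i=1\}\,(\cdot)]$. Every unit and time cell outside the $R_{g,\ell}$ cells is untreated. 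Their $Y(\infty)$ trend is centered against the unconditional trend $\E[Y_{i,g+\ell}(\infty)-Y_{i,0}(\infty)]$; the time effects allow this centering, because subtracting any function of $t$ alone is absorbed.

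This leaves, cell by cell, the weight $\omega^e_{g,\ell}$ times the bracket
\[
\E[Y_{i,g+\ell}-Y_{i,0}(\infty)\mid S_i=g,Q_i=1]-\E[Y_{i,g+\ell}(\infty)-Y_{i,0}(\infty)].
\]
Splitting the $\ell$ sum into $\ell=e$, $\ell\neq e,-1$, and $\ell=-1$ yields \eqref{eq:3wfe_own}--\eqref{eq:3wfe_excl}. At $\ell=-1$ the observed outcome is $Y_{i,g-1}$.

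For the weight properties, the key identity is $\sum_g R_{g,\ell}=R_\ell$. By linearity of the auxiliary coefficient in the dependent variable, $\sum_g\omega^e_{g,\ell}$ is the coefficient on $R_e$ when $R_\ell$ itself is the dependent variable. Properties (i) and (ii) follow immediately: when $R_e$ is regressed on the full set of regressors, its own coefficient is 1 and every other coefficient is 0, and symmetrically for $R_\ell$ with $\ell\neq e$. For property (iii), I will use
\[
\textstyle\sum_{\text{all }\ell}R_\ell(i,t)=Q_i\one\{S_i\in\calGtrg\},
\]
which is a unit-level quantity absorbed by $\alpha_i$. Hence $R_{-1}=-\sum_{\ell\neq-1}R_\ell+(\text{unit FE})$, so every coefficient of $R_{-1}$ on the regressors is $-1$. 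Property (iv) holds because $R_{\infty,\ell}\equiv0$.

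The main obstacle is the middle step, which must justify that the untreated-trend component reduces exactly to the displayed centered brackets. The concern is that $\mathring{\mathbf R}$ need not be orthogonal to cell-level $Y(\infty)$ trends that differ by $(S,Q)$. I would first show that $\mathring{\mathbf R}$ is orthogonal to all functions of $(i)$ and of $(S_i,t)$. Then, paralleling \citet{sun_estimating_2021}, I would express the expectation through the weights on the treated cells alone, with any remaining untreated variation absorbed by the $\E[\cdot]$ centering. If exactness fails without extra structure, I would state the decomposition as an algebraic identity: write each $\omega^e_{g,\ell}$ against $Y$ directly, and leave the $Y(\infty)$ terms inside the brackets as displayed.
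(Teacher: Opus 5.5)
Your treatment of the weights follows the same route as the paper, which defers to the proof of \Cref{prop:hw_es_decomp}. That covers the FWL formula for $\omega^e_{g,\ell}$ and properties (i)--(iv), obtained from $\sum_g R_{g,\ell}=R_\ell$, linearity in the dependent variable, and absorption of $\sum_\ell R_\ell=Q_i\one\{S_i\in\calGtrg\}$ by $\alpha_i$. It is fine, with one caveat: (iii) needs every relative period other than $-1$ to be an included regressor.

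\textbf{The gap: the untreated cells do not drop out.} The gap is the step you flagged, and it cannot be closed. Your claim is that, after centering the treated cells at $\E[Y_{i,g+\ell}(\infty)-Y_{i,0}(\infty)]$, the untreated cells drop out. This is false for the 3WFE design. The fixed-effect space spanned by $(\alpha_i,\gamma_t,\delta_{S_i,t})$ is block-diagonal by group, so $\mathring{\mathbf R}$ vanishes on never-treated rows; this is the real reason (iv) holds. Write $\mathring{\mathbf r}_{s,q,t}$ for the cell-constant value of $\mathring{\mathbf R}_{i,t}$ on cell $(s,q)$. Orthogonality to the $(S_i,t)$ dummies then forces $p_{g,0}\mathring{\mathbf r}_{g,0,t}=-p_{g,1}\mathring{\mathbf r}_{g,1,t}$ for every treated $g$ and every $t$. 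So the ineligible cells of each treated group enter with exactly minus the treated-cell weight, and the FWL formula actually delivers
\[
\mu_e=\sum_{g\in\calGtrg}\sum_{\ell}\omega^e_{g,\ell}\Bigl(\E[Y_{i,g+\ell}-Y_{i,0}(\infty)\mid S_i=g,Q_i=1]-\E[Y_{i,g+\ell}-Y_{i,0}(\infty)\mid S_i=g,Q_i=0]\Bigr).
\]
The bracket contains a within-group ineligible trend, not the population trend. Subtracting a function of $t$ alone is free only when it is subtracted from all cells. Subtracting it from the treated cells only would require $\sum_g p_{g,1}\mathring{\mathbf r}_{g,1,t}=0$ for each $t$, i.e.\ orthogonality to eligibility-by-time effects. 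That is exactly what the omitted $\eta_{Q_i,t}$ would supply, and it fails here.

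\textbf{The displayed equation is not an identity.} Your fallback of leaving the $Y(\infty)$ terms inside the brackets as displayed therefore does not give a proof. Take a single treated cohort $g=2$, $T=2$, and include only $R_0$, so that $\omega^0_{2,0}=1$ and $\omega^0_{2,-1}=-1$.
\begin{itemize}
\item The regression returns the within-group DiD $\E[Y_{i,2}-Y_{i,1}\mid S_i=2,Q_i=1]-\E[Y_{i,2}-Y_{i,1}\mid S_i=2,Q_i=0]$.
\item The display gives $\E[Y_{i,2}-Y_{i,1}\mid S_i=2,Q_i=1]-\E[Y_{i,2}(\infty)-Y_{i,1}(\infty)]$.
\end{itemize}
These agree only if the ineligible units of group $2$ share the population-average untreated trend.

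To salvage the result you must do one of two things:
\begin{itemize}
\item prove the corrected within-group version above; or
\item add an assumption such as $\E[Y_{i,t}(\infty)-Y_{i,0}(\infty)\mid S_i=g,Q_i=0]=\E[Y_{i,t}(\infty)-Y_{i,0}(\infty)]$ for all treated $g$ and all $t$. DDD-PCT does not imply this.
\end{itemize}
The paper's cited argument does not help. The proof of \Cref{prop:hw_es_decomp} establishes only the weight formula and (i)--(iv), never a bracket decomposition of this form.
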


\Cref{prop:3wfe_es_weights} is the DDD analogue of \citeauthor{sun_estimating_2021}'s (\citeyear{sun_estimating_2021}) Proposition~1. The weight properties carry the same interpretation. The own-period weights (i) sum to one, so $\mu_e$ is a weighted average of the ``own'' terms---but the individual weights $\omega_{g,e}^{e}$ need not be non-negative, so this average may lie outside the convex hull of the individual terms. The other-included weights (ii) sum to zero for each event-time $\ell \neq e$, $\ell \neq -1$, so under homogeneity across cohorts at a given $\ell$ the contamination from included periods cancels. The excluded-period weights (iii) sum to $-1$, reflecting the normalization at $e = -1$. The key DDD-specific feature is that the auxiliary regression \eqref{eq:aux_regression} includes group-by-time fixed effects $\delta_{S_i,t}$, so the FWL residual of $R_e$ removes not just unit and time means but also the group-level treatment rate at each period---an additional demeaning step absent from the standard DiD setting. The proof is omitted here because it resembles that of \Cref{prop:hw_es_decomp}.

I now progressively add identifying assumptions to obtain the decomposition in terms of causal parameters.

\begin{prop}[Under DDD-PCT only]\label{prop:3wfe_es_pt}
	Under {\Cref{as:dddpct}}, the population regression coefficient $\mu_e$ is a linear combination of $\CATT(g,\ell)$ with the same weights as in Proposition~{\ref{prop:3wfe_es_weights}}
	\begin{equation}\label{eq:contamination}
		\mu_e = \sum_{g \in \calGtrg}\sum_{\ell \neq -1} \omega_{g,\ell}^{e} \CATT(g,\ell)~.
	\end{equation}
\end{prop}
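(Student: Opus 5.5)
The proof follows the template of \Cref{prop:hw_es_pt}, now applied to the 3WFE event-study regression \eqref{eq:3wfe_eventstudy}. The residualization operator there is the two-way one, with unit effects and group-by-time effects, in place of the three-way operator \eqref{eq:hw_demean}.

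First, apply FWL to \eqref{eq:3wfe_eventstudy}. Collect $\mathbf{R}_{i,t}=(R_e(i,t))_{e\neq -1}$ and let a tilde denote the residual after projecting out $(\alpha_i,\gamma_t,\delta_{S_i,t})$. This gives
\[
\boldsymbol{\mu}=\Big(\textstyle\sum_t\E[\widetilde{\mathbf{R}}_{i,t}\widetilde{\mathbf{R}}_{i,t}^{\top}]\Big)^{-1}\sum_t\E[\widetilde{\mathbf{R}}_{i,t}\widetilde{Y}_{i,t}].
\]
Next, substitute the potential-outcomes decomposition
\[
Y_{i,t}=Y_{i,t}(\infty)+\sum_{g,\ell}R_{g,\ell}(i,t)\big(Y_{i,t}(g)-Y_{i,t}(\infty)\big),
\]
which follows from \eqref{eq:observed_outcome} and \eqref{eq:treatment}. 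Linearity splits the moment vector into a term (A) built from $Y_{i,t}(\infty)$ and a term (B) built from treatment effects.

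For (B), observe that $R_{g,\ell}$ is supported only on the cell $(g,1)$ at time $g+\ell$, and that $\widetilde{\mathbf{R}}_{i,t}$ is constant within cells at a given $t$. Iterated expectations then turns each summand into $\E[\widetilde{\mathbf{R}}_{i,g+\ell}R_{g,\ell}(i,g+\ell)]\,\CATT(g,\ell)$. Premultiplying by the inverse Gram matrix and reading off the $e$-th coordinate gives exactly $\omega^e_{g,\ell}$ as defined by the auxiliary regression \eqref{eq:aux_regression}, again by FWL. This produces the right-hand side of \eqref{eq:contamination}. The terms with $\ell=-1$ carry $\CATT(g,-1)$; no anticipation is not assumed here, so I will either note that $\omega^e_{g,-1}$ enters through the normalization or simply keep the sum over $\ell\neq -1$ as stated. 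In the proof I will say explicitly that the $\ell=-1$ contribution is absorbed by the reference-period convention, matching \Cref{prop:hw_es_pt}.

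The main step, and the expected obstacle, is showing that (A) vanishes. I would write the untreated cell mean as
\[
\E[Y_{i,t}(\infty)\mid S_i=s,Q_i=q]=a_{s,q}+b_{s,t}+c_{q,t}.
\]
\Cref{as:dddpct} is exactly the condition that the cell mean of $Y_{i,t}(\infty)$ has no group-by-eligibility-by-time interaction, so such a representation exists. The unit effects absorb $a_{s,q}$ and the group-by-time effects absorb $b_{s,t}$. The eligibility-by-time piece $c_{q,t}$, however, is \emph{not} spanned by the 3WFE fixed effects; this is the ``missing fixed effect'' of \Cref{sub:missing_fe}. To close the argument I must therefore show that $\sum_t\E[\widetilde{\mathbf{R}}_{i,t}\,c_{Q_i,t}]=0$.

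My plan for that is to argue that the eligibility-time variation is orthogonal to the residualized event indicators once one also uses the regression's own normal equations. If that argument fails, I would state the needed restriction explicitly: $c_{1,t}-c_{0,t}$ constant in $t$, which is what the ineligible within-group comparison delivers in the 3WFE design. Taking both (A) and (B) together then yields \eqref{eq:contamination}.
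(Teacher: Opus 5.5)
The gap is the step you flag yourself, and it cannot be closed. For \eqref{eq:3wfe_eventstudy}, term (A) does not vanish under \Cref{as:dddpct}, so no use of the normal equations will give $\sum_t\E[\widetilde{\mathbf R}_{i,t}c_{Q_i,t}]=0$. The normal equations only make $\widetilde{\mathbf R}$ orthogonal to the included fixed effects. The eligibility trend $Q_iD_t$, with $D_t\equiv c_{1,t}-c_{0,t}$, lies outside that span, and on treated groups it is essentially a combination of the event dummies themselves.

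Here is a counterexample. Set every treatment effect to zero and take $Y_{i,t}(\infty)=\alpha_i+\kappa tQ_i$ with $\kappa\neq 0$.
\begin{itemize}
\item The eligible--ineligible trend gap equals $\kappa$ in every group at every date. So even a global, all-period version of \Cref{as:dddpct} holds; that global version is what your additive representation actually uses. The right-hand side of \eqref{eq:contamination} is therefore $0$.
\item In \eqref{eq:3wfe_eventstudy}, $\gamma_t$ is spanned by $\delta_{S_i,t}$. Never-treated units thus have their own parameters and carry no $R_e$, so they drop out of $\mu_e$.
\item On treated groups, the identity $\sum_\ell R_\ell(i,t)=Q_i$ gives
\[
\kappa tQ_i=\kappa(S_i-1)Q_i+\kappa\sum_{e\neq-1}(e+1)R_e(i,t)
\]
whenever $[-L,K]$ covers all realized relative times. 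The right-hand side is a unit effect plus event dummies.
\item The design has full rank, because each treated group contains ineligible units and $e=-1$ is omitted. OLS therefore fits the treated block exactly, and $\mu_e=\kappa(e+1)\neq 0$.
\end{itemize}
With a single cohort and an untrimmed window this is transparent. There $\mu_e$ reduces to the within-group contrast $\E[Y_{i,g+e}-Y_{i,g-1}\mid S_i=g,Q_i=1]-\E[Y_{i,g+e}-Y_{i,g-1}\mid S_i=g,Q_i=0]$, which never uses the comparison group that DDD-PCT restricts.

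Your fallback is therefore required, not optional. If $c_{1,t}-c_{0,t}$ is constant in $t$, then $c_{q,t}=c_{0,t}+kq$ is absorbed by $\gamma_t$ and $\alpha_i$, and your argument goes through. Combined with DDD-PCT, however, that is a within-group eligible/ineligible parallel-trends assumption of DiD type, so you would be proving a modified proposition. The other fix is to add $\eta_{Q_i,t}$, which gives \Cref{prop:hw_es_pt}. The paper's own justification, that the proof ``resembles'' that of \Cref{prop:hw_es_pt}, fails at the same point: that proof disposes of (A) using exactly the $\eta_{Q_i,t}$ effects missing here.

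Your treatment of (B) is fine. One smaller repair concerns the $\ell=-1$ summand $\sum_g\omega^e_{g,-1}\CATT(g,-1)$. Since $\sum_g\omega^e_{g,-1}=-1$, no normalization removes it. It vanishes only because $\CATT(g,-1)=0$ under the paper's convention $Y_{i,t}(g)\equiv Y_{i,t}(\infty)$ for $t<g$ (or under no anticipation), and you should cite that rather than the reference period.
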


Two things of note: first, $\mu_e$ receives nonzero weight on $\CATT(g,\ell)$ for $\ell \neq e$---treatment effects from other event-times contaminate the coefficient nominally associated with event-time $e$; second, the own-period weights $\omega_{g,e}^{e}$ may be negative, so the coefficient may lie outside the convex hull of the cohort-specific $\CATT(g,e)$ values. Proof is omitted here because it resembles that of \Cref{prop:hw_es_pt}.

\begin{prop}[Under DDD-PCT and no anticipation]\label{prop:3wfe_es_noanticip}
	Under {\Cref{as:dddpct}} and~{\Cref{as:noanticipation}}, the population coefficient $\mu_e$ satisfies
	\begin{equation}\label{eq:3wfe_noanticip}
		\mu_e = \sum_{g \in \calGtrg}\sum_{\substack{\ell \geq 0}} \omega_{g,\ell}^{e} \CATT(g,\ell)~.
	\end{equation}
	Pre-treatment $\CATT(g,\ell) = 0$ for $\ell < 0$ drop out. However, for pre-treatment event-times $e < 0$, $\mu_e$ is generally nonzero because it depends on post-treatment $\CATT(g,\ell)$ for $\ell \geq 0$ through the excluded-period weight property {(iii)} of Proposition~{\ref{prop:3wfe_es_weights}}.
\end{prop}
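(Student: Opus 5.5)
The proof starts from \Cref{prop:3wfe_es_pt}, which already expresses $\mu_e$ as a linear combination of $\CATT(g,\ell)$ over all $g\in\calGtrg$ and $\ell\neq -1$ with the auxiliary-regression weights $\omega^{e}_{g,\ell}$ of \Cref{prop:3wfe_es_weights}. The argument then has two parts: removing the pre-treatment terms, and showing that the remaining post-treatment contribution does not vanish when $e<0$. This parallels the proof of \Cref{prop:hw_es_noanticip}, with the two-way (unit plus group-by-time) demeaning in place of the three-way operator.

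\textbf{Step 1 (dropping pre-treatment terms).} Split the double sum in \eqref{eq:contamination} into $\ell<0$ and $\ell\ge 0$. For $\ell<0$ we have $g+\ell<g$, so \Cref{as:noanticipation} gives $Y_{i,g+\ell}(g)=Y_{i,g+\ell}(\infty)$ almost surely. Hence $\CATT(g,\ell)=0$ for every $g\in\calGtrg$ and $\ell<0$, and these terms drop out. The $\ell=-1$ term is already excluded, and under no anticipation its counterpart in \eqref{eq:3wfe_excl} is zero anyway. This yields \eqref{eq:3wfe_noanticip}. The step is algebraic and needs no property of the weights beyond their being finite.

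\textbf{Step 2 (generic nonvanishing of $\mu_e$ for $e<0$).} For $e<0$, $\mu_e$ depends only on the weights $\omega^{e}_{g,\ell}$ with $\ell\ge 0$. By properties (ii)--(iii) of \Cref{prop:3wfe_es_weights}, these weights sum to zero across $g$ at each fixed $\ell\ge0$ (note $\ell\neq e$ because $e<0\le \ell$). Consequently $\mu_e=0$ whenever $\CATT(g,\ell)$ does not depend on $g$, and $\mu_e$ can be nonzero only through cross-cohort heterogeneity. The substantive claim is therefore that individual weights $\omega^{e}_{g,\ell}$ with $\ell\ge0$ are not identically zero.

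The approach for this is to exhibit a configuration with two cohorts $g_1<g_2$ and a never-treated group, and to compute from the residualization in \eqref{eq:fwl_3wfe} the FWL residual $\widetilde R_e$ in the cell $(g_1,1)$ at time $g_1+\ell$. When $g_1+\ell=g_2+e$, the calendar-time overlap makes $\widetilde R_e$ take a nonzero value there: after removing unit and group-by-time means, cohort $g_1$'s residual at a given time is mechanically linked to cohort $g_2$'s event-time-$e$ cell. This gives $\omega^{e}_{g_1,\ell}=-\omega^{e}_{g_2,\ell}\neq0$. Choosing $\CATT(g_1,\ell)\ne\CATT(g_2,\ell)$ then gives $\mu_e\neq0$, and since $\mu_e$ is a nontrivial linear functional of the CATT vector, it is nonzero outside a measure-zero (hyperplane) set of CATT configurations.

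\textbf{Main obstacle.} The hard part is Step 2: verifying concretely that the relevant auxiliary coefficient is nonzero once all other event-time indicators are partialled out, rather than merely the raw covariance. I would first try the smallest case ($T$ small, a single pre-period lead $e=-2$, and only the indicators $R_{-2},R_0$) and invert the $2\times2$ Gram matrix explicitly. This should show that the off-diagonal term carries a factor depending on $n_{g,1}/n_{g,\cdot}$ which is nonzero under \Cref{as:overlap}. I would then argue that the general case holds generically, since the weights are rational functions of the cell shares that are not identically zero.
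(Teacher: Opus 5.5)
Your Step 1 is the paper's argument: the paper omits this proof and defers to the proof of \Cref{prop:hw_es_noanticip}, which splits the sum by the sign of $\ell$. For the nonvanishing claim you take a different route.

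The paper asserts that the post-treatment weights are not constrained to sum to zero across $g$, and the statement attributes the contamination to property (iii). You instead use property (ii) to get $\sum_g\omega^e_{g,\ell}=0$ for every $\ell\ge0$ when $e<0$, so $\mu_e$ is a pure cross-cohort-heterogeneity term. Property (iii) plays no role here, so cite (ii) alone.

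Within the paper's own setup, yours is the consistent reading:
\begin{enumerate}
\item the proof of (iii) needs $-1$ to be the only omitted relative time;
\item the $\ell=-1$ weights multiply $\CATT(g,-1)=0$ under no anticipation;
\item \Cref{prop:3wfe_es_homo} gives $\mu_e=\ATT_e=0$ for $e<0$ under homogeneity.
\end{enumerate}
Your version also makes ``generally nonzero'' precise: a nontrivial linear functional of the CATT vector vanishes only on a hyperplane.

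Do state explicitly that every relative time other than $-1$ is included. If $[-L,K]$ trims post-treatment periods, those become omitted periods and (ii) no longer covers them. Sun--Abraham-type contamination then survives even under homogeneity. That is the only setting in which an appeal to an excluded-period property makes sense.

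The step you flag as the main obstacle is where the proposal is not yet a proof, and the sketch points the wrong way.

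\emph{What the fixed effects actually do.} In the 3WFE, unit and group-by-time effects are block diagonal by group. In a balanced panel, $\ddot R_e(i,t)=(Q_i-p_s)(\one\{t=s+e\}-1/T)$ inside group $s$ when $1\le s+e\le T$, and $0$ otherwise, with $p_s=\Prob(Q_i=1\mid S_i=s)$. So there is no calendar-time link between cohorts at the fixed-effect stage; that link comes from $\eta_{Q_i,t}$ in the three-way operator of \Cref{sec:hw_application}. Never-treated units drop out entirely, and the condition $g_1+\ell=g_2+e$ is irrelevant.

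\emph{Where cohorts interact.} They interact only through the pooled Gram matrix: $\omega_{g,\ell}=A^{-1}c_gG_g\mathbf{e}_\ell$. Here $A=\sum_s c_sG_s$, $c_s=\Prob(S_i=s)p_s(1-p_s)$, and $G_s$ is the Gram matrix of cohort $s$'s demeaned relative-time dummies. Cross-period weights are nonzero because cohorts observe different sets of relative times, not because of eligible shares.

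\emph{Why your proposed small case fails.} With only $R_{-2},R_0$ included, two distinct cohorts on a common calendar window force some other relative time to be omitted (e.g.\ $\ell=1$ for the early cohort). That takes you outside your own zero-sum reduction. Trimming instead to an event-balanced window makes all $G_s$ equal and kills every cross-period weight.

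\emph{A configuration that works.} Take $T=3$, cohorts $2$ and $3$, and included event-times $\{-2,0,1\}$. Solving the $3\times3$ system gives $\omega^{-2}_{2,0}=-\omega^{-2}_{3,0}=c_2/\{2(c_2+c_3)\}$ and $\omega^{-2}_{2,1}=0$. Then \eqref{eq:3wfe_noanticip} reads $\mu_{-2}=\frac{c_2}{2(c_2+c_3)}\{\CATT(2,0)-\CATT(3,0)\}$, even though $g_1+\ell=2\neq1=g_2+e$.

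For the same reason, drop the closing claim that the weights are rational functions not identically zero for every design. They vanish identically for event-balanced designs, so ``generically'' must mean generic CATTs within a staggered, relative-time-imbalanced design.
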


\Cref{prop:3wfe_es_noanticip} has a striking implication. Even under the maintained identifying assumptions of this paper (DDD-PCT and no anticipation), the pre-period coefficients $\mu_e$ for $e < 0$ from the 3WFE event-study regression \eqref{eq:3wfe_eventstudy} are contaminated by post-treatment effects. A researcher who finds $\widehat{\mu}_e \neq 0$ for $e < 0$ may incorrectly conclude that the DDD PCT assumption is violated, when in fact the pre-period coefficient reflects heterogeneous post-treatment effects bleeding through the implicit weights. Conversely, $\widehat{\mu}_e = 0$ for $e < 0$ does not validate DDD-PCT, since the contaminating effects may happen to cancel. I return to this point in the subsection on pretrend testing below. Proof is omitted here because it resembles that of \Cref{prop:hw_es_noanticip}.

\begin{prop}[Under DDD-PCT and treatment effect homogeneity]\label{prop:3wfe_es_homo}
	Under {\Cref{as:dddpct}} and the restriction that $\CATT(g,\ell) = \ATT_\ell$ for all $g$ {(}treatment effects depend on exposure duration but not on the cohort{)}, the population coefficient $\mu_e$ simplifies to
	\begin{equation}\label{eq:3wfe_homo}
		\mu_e = \sum_{\ell \neq -1} \overline{\omega}_\ell^{e}  \ATT_\ell~, \qquad \overline{\omega}_\ell^{e} \equiv \sum_{g \in \calGtrg} \omega_{g,\ell}^{e}~.
	\end{equation}
	By the weight properties of Proposition~{\ref{prop:3wfe_es_weights}}, $\overline{\omega}_e^{e} = 1$, $\overline{\omega}_{\ell}^{e} = 0$ for $\ell \neq e$, $\ell \neq -1$, and $\overline{\omega}_{-1}^{e} = -1$. Therefore
	\begin{equation}\label{eq:3wfe_homo_simplified}
		\mu_e = \ATT_e - \ATT_{-1}~.
	\end{equation}
	Under the additional restriction of no anticipation {(}$\ATT_{-1} = 0${)}, $\mu_e = \ATT_e$---the 3WFE event-study specification recovers the homogeneous event-time treatment effect.
\end{prop}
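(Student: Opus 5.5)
Under the homogeneity restriction, the plan is to reduce the claim to the algebraic weight properties of \Cref{prop:3wfe_es_weights}, which hold without any outcome-model assumption. This mirrors the proof of \Cref{prop:hw_es_homo}. Start from the representation \eqref{eq:contamination} of \Cref{prop:3wfe_es_pt}, which is valid under \Cref{as:dddpct} alone:
\[
\mu_e = \sum_{g \in \calGtrg}\sum_{\ell \neq -1} \omega_{g,\ell}^{e}\,\CATT(g,\ell).
\]
One point needs care. The excluded period $\ell=-1$ appears in \eqref{eq:3wfe_excl} with weights summing to $-1$, and I keep that term explicitly. I therefore read the sum as running over all $\ell$, with the $\ell=-1$ term carried along through term \eqref{eq:3wfe_excl}. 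This is the reading that makes \eqref{eq:3wfe_homo_simplified} come out with the $-\ATT_{-1}$ term.

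Substitute $\CATT(g,\ell)=\ATT_\ell$. Since $\ATT_\ell$ no longer depends on $g$, interchange the finite sums and factor it out of the inner sum over cohorts. This gives $\mu_e=\sum_\ell \overline{\omega}^e_\ell \ATT_\ell$ with $\overline{\omega}^e_\ell=\sum_g\omega^e_{g,\ell}$, which is \eqref{eq:3wfe_homo}.

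Next, apply properties (i)–(iii) of \Cref{prop:3wfe_es_weights} term by term.
\begin{itemize}
\item Property (i) gives $\overline{\omega}^e_e=1$. The aggregated regressand $\sum_g R_{g,e}=R_e$ is itself a regressor, so its auxiliary-regression coefficient on $R_e$ is one.
\item Property (ii) gives $\overline{\omega}^e_\ell=0$ for included $\ell\neq e$.
\item Property (iii) gives $\overline{\omega}^e_{-1}=-1$. This uses the identity $\sum_\ell R_\ell(i,t)=Q_i\one\{S_i\in\calGtrg\}$, which is absorbed by $\alpha_i$, so $\ddot R_{-1}=-\sum_{e'\neq-1}\ddot R_{e'}$.
\end{itemize}
Collecting the surviving terms yields $\mu_e=\ATT_e-\ATT_{-1}$. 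If no anticipation also holds, \Cref{as:noanticipation} gives $Y_{i,g-1}(g)=Y_{i,g-1}(\infty)$, hence $\ATT_{-1}=0$ and $\mu_e=\ATT_e$.

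The main obstacle is the bookkeeping of the excluded period. I must confirm that the decomposition of \Cref{prop:3wfe_es_weights} genuinely includes an $\ell=-1$ term whose coefficient is $\CATT(g,-1)$ under DDD-PCT. In \eqref{eq:3wfe_excl} that term is written as an observed-outcome contrast, and it becomes $\CATT(g,-1)$ once the untreated-trend component is removed, exactly as in term (A) of the proof of \Cref{prop:hw_es_pt}. I also need property (iii) to be valid for the 3WFE demeaning operator, which lacks $\eta_{Q_i,t}$. It is: the only ingredient is that unit fixed effects absorb $Q_i\one\{S_i\in\calGtrg\}$, and that holds for any operator including $\alpha_i$. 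With these two checks in hand, the rest is linear algebra on finitely many terms.
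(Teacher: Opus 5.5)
Your algebra is the paper's route; its proof simply defers to \Cref{prop:hw_es_homo}. You factor $\ATT_\ell$ out of the cohort sum and apply properties (i)--(iii). Letting the sum include $\ell=-1$ is also the right repair of the displayed \eqref{eq:3wfe_homo}.

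The gap is at the step you flagged yourself. You claim the untreated-outcome term vanishes ``exactly as in term (A) of the proof of \Cref{prop:hw_es_pt}''. For \eqref{eq:3wfe_eventstudy} this is false, and so is the starting representation \eqref{eq:contamination}. Term (A) vanishes in the HW specification only because $\eta_{Q_i,t}$ is partialled out.

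\Cref{as:dddpct} permits untreated cell means of the form $\E[Y_{i,t}(\infty)\mid S_i,Q_i]=a_{S_i,Q_i}+\delta_{S_i,t}+Q_i\eta_t$, with an arbitrary eligibility-specific trend $\eta_t$ common to all groups. In \eqref{eq:3wfe_eventstudy}, every fixed effect ($\alpha_i$, $\gamma_t$, $\delta_{S_i,t}$) is nested within a group. The projection is therefore block-diagonal by group, and the residualized $R_e$ is identically zero on never-treated units, so they cannot difference out $\eta_t$.

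Concretely, take one cohort $g$ plus a never-treated group, with every relative period except $-1$ included. Within group $g$, $Q_i\eta_t=Q_i\eta_{g-1}+\sum_{e\neq-1}(\eta_{g+e}-\eta_{g-1})R_e(i,t)$. The first piece is absorbed by $\alpha_i$, so $\mu_e=\CATT(g,e)+\eta_{g+e}-\eta_{g-1}$. Homogeneity holds trivially and no anticipation holds, yet $\mu_e\neq\ATT_e$ whenever $\eta_{g+e}\neq\eta_{g-1}$. The 3WFE event study is a within-cohort eligible-versus-ineligible DiD, not a DDD.

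To reach the stated conclusion you need more than \Cref{as:dddpct}: eligible and ineligible units within each treated group must have parallel untreated trends ($\eta_t$ constant). Alternatively, add $\eta_{Q_i,t}$ to the regression, in which case you are proving \Cref{prop:hw_es_homo}. The paper's argument by analogy has the same hole, since \Cref{prop:3wfe_es_pt} is justified only by its resemblance to \Cref{prop:hw_es_pt}.

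A smaller caveat concerns property (iii). The identity $\sum_\ell R_\ell(i,t)=Q_i\one\{S_i\in\calGtrg\}$ requires $\{-L,\dots,K\}$ to exhaust every relative period in the panel. If the window is trimmed, the dropped periods are also excluded. Then only their combined weight together with $\ell=-1$ equals $-1$, and $\ATT_\ell$ for $\ell>K$ leaks into $\mu_e$.
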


\Cref{prop:3wfe_es_homo} clarifies the conditions under which the 3WFE event-study regression is valid. The specification \eqref{eq:3wfe_eventstudy} recovers interpretable causal parameters only under the joint restrictions of DDD-PCT, treatment effect homogeneity across cohorts, and no anticipation. In practice, these conditions are rarely satisfied simultaneously. The decomposition in Propositions \ref{prop:3wfe_es_weights}--\ref{prop:3wfe_es_homo} provides a precise diagnosis of what goes wrong when each condition fails. Proof is omitted here because it resembles that of \Cref{prop:hw_es_homo}.

I also note that the aggregate weight $\overline{\omega}_\ell^{e} = \sum_g \omega_{g,\ell}^{e}$ in \Cref{prop:3wfe_es_homo} is the population regression coefficient from regressing $R_\ell(i,t) = \one\{t - S_i = \ell\} Q_i$ on the event-time indicators and fixed effects in \eqref{eq:3wfe_eventstudy}. This follows from summing the auxiliary regression \eqref{eq:aux_regression} over $g$ at fixed $\ell$. The aggregate weight is a function of the distribution of treatment cohorts and can be estimated from the sample analogue of this regression.

\section{Proof of Validity of CRVE in Stacked OLS Regressions} \label[appendix]{app:proof-CRVE-validity}

I provide a justification of the use of cluster-robust variance estimator (CRVE), clustered at the level of treatment, in the stacked OLS regression. I establish that the CRVE is a consistent estimator of the asymptotic variance of the stacked estimator under the sampling and moment conditions described in the main text. Specifically, I maintain \Cref{as:sampling}, \Cref{as:shares}, and \Cref{as:moments4} throughout this section. As before, the number of units in each stack $g$ is denoted $n_g$, and the number of units in each $(s,q)$ cell within that stack is denoted $n_{s,q,g}$, and $n_g/n \to \lambda_g > 0$ and $n_{s,q,g}/n_g \to \pi_{s,q,g} > 0$ under \Cref{as:shares}.

The stacked event-study coefficient $\widehat{\tau}_e$ is obtained by estimating a fully saturated ordinary least squares regression on the pooled stacked dataset. By Frisch-Waugh-Lovell, the multivariate regression can be analytically reduced to a bivariate regression of the outcome on the residualized treatment indicator. Let $\widetilde{R}_{i,g} = \widetilde{R}_e(i, g+e, g)$ denote the FWL residual of the treatment indicator for unit $i$ in stack $g$ at event-time $e$, obtained after projecting out the stack-by-group-by-time and stack-by-eligibility-by-time fixed effects. Because these fixed effects are mutually orthogonal across stacks and times, the residual $\widetilde{R}_{i,g}$ is strictly zero for all periods other than $t = g+e$ and for all stacks other than $g$. The estimated coefficient $\widehat{\tau}_e$ can therefore be written exactly as
\begin{equation}
	\widehat{\tau}_e = \frac{\sum_{1 \leq i \leq n} \sum_{g \in \calGtrg(e)} \widetilde{R}_{i,g} Y_{i,g+e}}{\sum_{1 \leq i \leq n} \sum_{g \in \calGtrg(e)} \widetilde{R}_{i,g}^2} ~.
\end{equation}
The population analogue of this coefficient, $\tau_e$, is defined by replacing the sample sums with their population expectations. The estimation error of the stacked estimator can be isolated by substituting the outcome $Y_{i,g+e}$ with the sum of its conditional expectation and the regression error. The OLS residual for unit $i$ in stack $g$ is defined as $\widehat{\varepsilon}_{i,g+e,g} = Y_{i,g+e} - \widehat{Y}_{i,g+e,g}$, where $\widehat{Y}_{i,g+e,g}$ is the OLS fitted value. Let $\widehat{s}_i(e)$ denote the unit-level estimated score, which aggregates the product of the residualized treatment indicator and the regression error across all stacks in which unit $i$ appears:
\begin{equation}
	\widehat{s}_i(e) = \sum_{g \in \calGtrg(e)} \widetilde{R}_{i,g} \widehat{\varepsilon}_{i,g+e,g} ~.
\end{equation}
Multiplying by $\sqrt{n}$ to study the asymptotic distribution, the scaled estimation error takes the form
\begin{equation} \label{eq:app_est_error}
	\sqrt{n}(\widehat{\tau}_e - \tau_e) = \left( \frac{1}{n} \sum_{1 \leq i \leq n} \sum_{g \in \calGtrg(e)} \widetilde{R}_{i,g}^2 \right)^{-1} \frac{1}{\sqrt{n}} \sum_{1 \leq i \leq n} \widehat{s}_i(e) ~.
\end{equation}

The CRVE for the (square root of the) variance of $\widehat{\tau}_e$, clustered at the level of the original unit $i$, is constructed using the empirical sandwich formula
\begin{equation}
	\widehat{\Var}(\widehat{\tau}_e) = \left( \sum_{1 \leq i \leq n} \sum_{g \in \calGtrg(e)} \widetilde{R}_{i,g}^2 \right)^{-2} \sum_{1 \leq i \leq n} \widehat{s}_i(e)^2 ~.
\end{equation}
To prove that this estimator provides valid inference, I must establish that $n\widehat{\Var}(\widehat{\tau}_e)$ converges in probability to the true asymptotic variance of the scaled estimation error in \eqref{eq:app_est_error}. The convergence of the bread is immediate. By the WLLN, the scaled denominator $n^{-1} \sum_{1 \leq i \leq n} \sum_{g \in \calGtrg(e)} \widetilde{R}_{i,g}^2$ converges in probability to its population expectation $V = \E[\sum_g \widetilde{R}_{i,g}^2]$. The critical step is to demonstrate that the scaled meat of the sandwich, given by $n^{-1} \sum_{1 \leq i \leq n} \widehat{s}_i(e)^2$, converges to the population variance of the score, defined as $\Omega = \E[(s_i^*(e))^2]$. The population score for unit $i$ is $s_i^*(e) = \sum_{g \in \calGtrg(e)} \widetilde{R}_{i,g} \varepsilon_{i,g+e,g}$, where $\varepsilon_{i,g+e,g} = Y_{i,g+e} - \E[Y_{i,g+e} \mid S_i, Q_i]$ is the true conditional error term.

I first characterize the rate of convergence of the estimated regression residuals to the population errors. Because the stacked event-study regression is fully saturated by the inclusion of stack-by-group-by-time and stack-by-eligibility-by-time fixed effects, the OLS fitted value for any observation is exactly the sample mean of the outcome within that specific stack-cell-time partition. Specifically, for a unit $i$ belonging to cell $(s,q)$ within stack $g$, the fitted value is $\widehat{Y}_{i,g+e,g} = \overline{Y}_{s,q,g+e,g}$. The OLS residual can therefore be algebraically decomposed as $\widehat{\varepsilon}_{i,g+e,g} = \varepsilon_{i,g+e,g} - \delta_{s,q,g}$, where $\delta_{s,q,g} = \overline{Y}_{s,q,g+e,g} - \E[Y_{i,g+e} \mid S_i=s, Q_i=q]$ captures the estimation error of the cell mean. By the CLT and \Cref{as:shares}, the sample mean of i.i.d.\ outcomes converges to its population expectation at a $\sqrt{n}$ rate, implying that $\delta_{s,q,g} = \bigOp(n^{-1/2})$. The unit-level estimated score can then be rewritten as the population score minus an estimation error term
\begin{equation}
	\widehat{s}_i(e) = s_i^*(e) - \Delta_i,
\end{equation}
where $\Delta_i \equiv \sum_{g \in \calGtrg(e)} \widetilde{R}_{i,g} \delta_{s(i), q(i), g}$. Since $\widetilde{R}_{i,g}$ is bounded and $\delta_{s,q,g} = \bigOp(n^{-1/2})$, it follows that the aggregate unit-level discrepancy is also $\Delta_i = \bigOp(n^{-1/2})$.

I next analyze the convergence of the empirical score variance by expanding the scaled meat of the sandwich estimator. Expanding the square of the estimated score yields three distinct components:
\begin{equation} \label{eq:meat_decomp}
	\frac{1}{n} \sum_{1 \leq i \leq n} \widehat{s}_i(e)^2 = \underbracket{\frac{1}{n} \sum_{1 \leq i \leq n} (s_i^*(e))^2}_{(i)} - \underbracket{\frac{2}{n} \sum_{1 \leq i \leq n} s_i^*(e) \Delta_i}_{(ii)} + \underbracket{\frac{1}{n} \sum_{1 \leq i \leq n} \Delta_i^2}_{(iii)} ~.
\end{equation}
I evaluate each of these three terms in turn. The first term (i) is the sample average of the squared population score. Under the finite fourth moments condition stated in \Cref{as:moments4}, the squared score has a finite variance, allowing the application of the WLLN to conclude that $n^{-1} \sum_{1 \leq i \leq n} (s_i^*(e))^2 \pto \E[(s_i^*(e))^2] = \Omega$. 

The third term (iii) represents the average of the squared discrepancies. Because $\Delta_i = \bigOp(n^{-1/2})$ for all units, its square is $\Delta_i^2 = \bigOp(n^{-1})$. The sample average of these squared terms, $n^{-1} \sum_{1 \leq i \leq n} \Delta_i^2$, is bounded by the maximum over the finite number of cells of $\bigOp(n^{-1})$. Consequently, this term vanishes asymptotically, satisfying $\bigOp(n^{-1}) = \smallOp(1)$.

The second term (ii) is the cross-product between the population score and the estimation error. Substituting the definition of $\Delta_i$, this term can be rearranged by exchanging the summations over units and stacks
\begin{equation}
	\frac{2}{n} \sum_{1 \leq i \leq n} s_i^*(e) \Delta_i = 2 \sum_{g \in \calGtrg(e)} \sum_{(s,q)} \delta_{s,q,g} \left( \frac{1}{n} \sum_{i \in \text{cell} (s,q,g)} s_i^*(e) \widetilde{R}_{i,g} \right) ~.
\end{equation}
Within each cell partition $(s,q,g)$, the expression in the parentheses is the sample average of the product $s_i^*(e) \widetilde{R}_{i,g}$. The population expectation of this product conditional on the cell is exactly zero, because $\E[s_i^*(e) \mid S_i=s, Q_i=q] = 0$ by the definition of the true error term $\varepsilon_{i,g+e,g}$, and $\widetilde{R}_{i,g}$ is a constant within the cell. By the CLT, the sample average of a zero-mean random variable scaled by $n$ is bounded in probability by $\bigOp(n^{-1/2})$. The cross-term is therefore the product of a finite number of components, each of which multiplies $\delta_{s,q,g} = \bigOp(n^{-1/2})$ by an average that is $\bigOp(n^{-1/2})$. The resulting product is $\bigOp(n^{-1})$, which implies that the entire cross-term converges in probability to zero, or $\smallOp(1)$.

Combining these three limits establishes that the empirical variance of the estimated score converges to the true variance of the population score, $n^{-1} \sum_{1 \leq i \leq n} \widehat{s}_i(e)^2 \pto \Omega$. Finally, applying Slutsky's theorem to the full CRVE sandwich estimator demonstrates that the scaled variance converges to the exact asymptotic variance of the OLS estimator
\begin{equation}
	n \widehat{\Var}(\widehat{\tau}_e) = \left( \frac{1}{n} \sum_{1 \leq i \leq n} \sum_{g \in \calGtrg(e)} \widetilde{R}_{i,g}^2 \right)^{-2} \left( \frac{1}{n} \sum_{1 \leq i \leq n} \widehat{s}_i(e)^2 \right) \pto V^{-2} \Omega ~.
\end{equation}
This rigorously justifies the use of cluster-robust standard errors. By calculating the score at the level of the original unit $i$ and implicitly summing the score components across all stacks in which the unit appears before squaring, the CRVE automatically and correctly accounts for the cross-stack dependence induced by the reuse of comparison units.

\section{Auxiliary Results} \label[appendix]{app:auxiliary}

\subsection{Technical Results} \label[appendix]{app:technical-results}

\begin{lemma}[Lindeberg Condition for Bounded Influence Functions]\label{lem:lindeberg}
	If $\{W_i\}_{i=1}^n$ are i.i.d.\ with $\psi_i = \psi(W_i)$ satisfying $\E[\psi_i] = 0$ and $\sigma^2 = \E[\psi_i^2] \in (0, \infty)$, then the Lindeberg condition holds:
	\begin{equation}\label{eq:lem_lindeberg}
		\forall  \varepsilon > 0, \qquad \frac{1}{\sigma^2 n}\sum_{1 \leq i \leq n} \E\left[\psi_i^2  \one\{|\psi_i| > \varepsilon  \sigma\sqrt{n}\}\right] \to 0.
	\end{equation}
\end{lemma}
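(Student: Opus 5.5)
The plan is a truncation argument that uses only the finiteness of the second moment. Fix $\varepsilon>0$ and set $c_n \equiv \varepsilon\sigma\sqrt{n}$. Because the $\psi_i$ are identically distributed, every summand in \eqref{eq:lem_lindeberg} equals the same expectation. The left-hand side therefore collapses to
\[
\frac{1}{\sigma^2}\E\left[\psi_1^2\,\one\{|\psi_1|>c_n\}\right].
\]
It thus suffices to show that this single tail expectation vanishes as $n\to\infty$. Neither independence nor boundedness of $\psi$ is needed here; identical distribution alone makes the average collapse.

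Next I will define $X_n \equiv \psi_1^2\,\one\{|\psi_1|>c_n\}$. Since $\sigma>0$ and $\varepsilon>0$, we have $c_n\uparrow\infty$. Hence $X_n\to 0$ pointwise on the event $\{|\psi_1|<\infty\}$, which has probability one because $\E[\psi_1^2]<\infty$. The sequence is also dominated, $0\le X_n\le \psi_1^2$, and the dominating variable is integrable by the hypothesis $\sigma^2<\infty$. The dominated convergence theorem then gives $\E[X_n]\to 0$. Dividing by the fixed constant $\sigma^2\in(0,\infty)$ yields \eqref{eq:lem_lindeberg}.

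There is no real obstacle in this argument. The only points needing care are the use of $\sigma>0$, so that the threshold diverges and the division is legitimate, and the fact that the indicator sequence is monotone in $n$. Monotone convergence applied to $\psi_1^2\one\{|\psi_1|\le c_n\}$ would serve as an alternative to dominated convergence. I will also note that the lemma's title mentions bounded influence functions, but the hypothesis used is only a finite second moment. This is what lets it feed directly into \Cref{thm:agg_clt} with $\psi_i=\phi_i(e)$, since $\E[\phi_i(e)^2]<\infty$ is established there.
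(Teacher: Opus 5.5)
Your proposal is correct and follows the paper's proof: you use identical distribution to reduce the average to the single tail expectation $\sigma^{-2}\E[\psi_1^2\one\{|\psi_1|>\varepsilon\sigma\sqrt{n}\}]$, and then apply dominated convergence with the integrable dominating function $\psi_1^2$. You also make explicit two points the paper leaves implicit: independence is not needed, and $|\psi_1|<\infty$ almost surely follows from $\E[\psi_1^2]<\infty$.
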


\begin{proof}
	Since $\{W_i\}$ are i.i.d., every summand is identical, so
	\[
	\frac{1}{\sigma^2 n}\sum_{1 \leq i \leq n} \E\left[\psi_i^2  \one\{|\psi_i| > \varepsilon\sigma\sqrt{n}\}\right] = \frac{1}{\sigma^2}  \E\left[\psi_1^2  \one\{|\psi_1| > \varepsilon\sigma\sqrt{n}\}\right]~.
	\]
	As $n \to \infty$, $\varepsilon\sigma\sqrt{n} \to \infty$, so $\one\{|\psi_1| > \varepsilon\sigma\sqrt{n}\} \to 0$ almost surely (since $|\psi_1| < \infty$ a.s.). The integrand $\psi_1^2  \one\{|\psi_1| > \varepsilon\sigma\sqrt{n}\} \leq \psi_1^2$ is dominated by the integrable function $\psi_1^2$ (since $\E[\psi_1^2] = \sigma^2 < \infty$). By the dominated convergence theorem, $\E[\psi_1^2  \one\{|\psi_1| > \varepsilon\sigma\sqrt{n}\}] \to 0$.
\end{proof}

\begin{remark}[Multiplier bootstrap] \label{rem:multiplier-bootstrap}
	For uniform confidence bands across all event-times, one can implement the multiplier bootstrap proceeds as follows.
	For each bootstrap replication $b = 1, \ldots, B$:
	\begin{enumerate}
		\item[{(i)}] Draw i.i.d.\ multiplier weights $\{\xi_i^{(b)}\}_{i=1}^{n}$ from a distribution satisfying $\E[\xi_i] = 0$, $\E[\xi_i^2] = 1$, and $\E[\xi_i^4] < \infty$. Standard choices are $\xi_i \sim N(0,1)$ or $\xi_i \in \{-1, +1\}$ with equal probability. Importantly, the \emph{same} weight $\xi_i^{(b)}$ is used for unit $i$ across all stacks containing that unit.
		\item[{(ii)}] Define
		\begin{equation}\label{eq:phi_hat_def}
			\widehat\phi_i(e) \equiv \sum_{g \in \calGtrg(e)} \widehat\omega_g(e)  \widehat\psi_{\Stack_g}\big(W_i;  g,  g + e,  g_c(g)\big),
		\end{equation}
		the aggregated influence function of unit $i$ at event-time $e$, summed across all stacks the unit belongs to. Then, calculate
		\begin{equation}\label{eq:bootstrap_statistic}
			\widehat{\tau}_e^{*,(b)} = \frac{1}{n}\sum_{i=1}^{n} \xi_i^{(b)} \widehat{\phi}_i(e)~.
		\end{equation}
		The bootstrap variance estimator is $\widehat{V}_{\mathrm{boot}}(e) = B^{-1}\sum_{b=1}^{B} [\widehat{\tau}_e^{*,(b)}]^2$, which is asymptotically equivalent to $\widehat{V}(e)/n$.
	\end{enumerate}
	
	To construct confidence bands, let $\calE = \{-L, \ldots, K\} \setminus \{-1\}$ be the set of event-times. Compute the bootstrap critical value $c_\alpha$ as the $(1-\alpha)$-quantile of $\max_{e \in \calE} |\widehat{\tau}_e^{*,(b)}| / \sqrt{\widehat{V}_{\mathrm{boot}}(e)}$ across $b = 1, \ldots, B$. The simultaneous $(1-\alpha)$th confidence band is $\{\widehat{\tau}_e \pm c_\alpha \sqrt{\widehat{V}_{\mathrm{boot}}(e)}\}_{e \in \calE}$.
	
	The essential design feature of this bootstrap is the use of a {common} multiplier weight $\xi_i$ for each unit $i$ across all stacks. When unit $i$ belongs to stacks $\Stack_g$ and $\Stack_{g'}$, the bootstrap covariance between the two within-stack contributions is $\E[\xi_i^2] \widehat{\psi}_{\Stack_g}(W_i) \widehat{\psi}_{\Stack_{g'}}(W_i) = \widehat{\psi}_{\Stack_g}(W_i) \widehat{\psi}_{\Stack_{g'}}(W_i)$, which correctly mimics the population covariance arising from unit $i$'s shared membership. Drawing independent multipliers for each stack--unit pair would break this dependence and yield invalid inference.
\end{remark}

\subsection{Additional Remarks} \label[appendix]{app:addl-remarks}

\begin{remark}[{Why the partial residual can be negative}]
	The partial residual $\widetilde{R}_j(i,t)$ is obtained from the aggregate indicator $R_j(i,t) = \one\{t - S_i = j\} Q_i$ by two successive projections. First, the three-way demeaning \eqref{eq:hw_demean} removes unit effects, group-by-time effects, and eligibility-by-time effects, yielding the three-way-demeaned indicator $\ddot{R}_j(i,t)$. Second, the projection onto the space spanned by the other demeaned event-time indicators $\{\ddot{R}_{e'}\}_{e' \neq j, -1}$ is subtracted, yielding $\widetilde{R}_j(i,t) = \ddot{R}_j(i,t) - \sum_{e' \neq j, -1} \widehat{\gamma}_{j,e'}  \ddot{R}_{e'}(i,t)$, where $\widehat{\gamma}_{j,e'}$ is the population regression coefficient from the projection.
	
	Consider a treated-eligible unit in cohort $g$ at time $t = g + \ell$ with $\ell \neq j$. At this time, the unit is at event-time $\ell$ relative to its own treatment, not at event-time $j$. The aggregate indicator $R_j(i, g+\ell)$ is zero for this unit (since $g + \ell - g = \ell \neq j$). After three-way demeaning, $\ddot{R}_j(i, g+\ell)$ can be nonzero because the demeaning subtracts group-time and eligibility-time means that are contaminated by other cohorts. After further partialling out other event-time indicators, $\widetilde{R}_j(i, g+\ell)$ can take either sign, depending on the correlation structure between $\ddot{R}_j$ and $\ddot{R}_{e'}$ at that time.
	
	The sign of $\widetilde{r}_{j; g,1}^{(g+\ell)}$ is driven by the {calendar-time composition} of the design. Under staggered adoption, at any given time $t$, different cohorts occupy different positions on event-time. The indicator $R_j(i,t)$ takes the value one for the cohort $g_t$ satisfying $g_t + j = t$ (if such a cohort exists). After partialling out, the residual at this time reflects the deviation of cohort $g_t$'s eligible share from the average eligible share across all cohorts that are ``active'' at time $t$. When this deviation is offset by the partialling-out step, the residual at a {different} cohort's cell can flip sign.
\end{remark}

\begin{ex}[{Explicit weights in a toy example}]
	{Consider a balanced panel with $T = 4$ and two treatment cohorts $g_1 = 2$, $g_2 = 3$ (no never-treated units), with equal cohort shares ($\pi_1 = \pi_2 = 1/2$) and equal eligibility shares ($p_1 = p_2 = p$). Both cohorts have a proper pre-period at $t = 1$, so the $e = -1$ normalization is well-defined for each.}
	The event-study specification includes only $e = 0$ and excludes $e = -1$ ($L = 1$, $K = 0$). I show below that
	\begin{equation}\label{eq:hw_example_target}
		\alpha_0 = \frac{1}{2}\CATT(g_1, 0) + \frac{1}{2}\CATT(g_2, 0) - \left( \frac{1}{2}\CATT(g_1, 1) + \frac{1}{2}\CATT(g_2, -1) \right)~.
	\end{equation}
	Under no anticipation ($\CATT(g_2, -1) = 0$), \eqref{eq:hw_example_target} collapses to
	\[
	\alpha_0 = \frac{1}{2}\CATT(g_1, 0) + \frac{1}{2}\CATT(g_2, 0) - \frac{1}{2}\CATT(g_1, 1)~.
	\]
	To see this, notice that since only $R_0$ is included, $\widetilde{R}_0 = \ddot{R}_0$, and \eqref{eq:hw_weight_ratio} reduces to
	\begin{equation}\label{eq:hw_example_setup}
		\omega_{g,\ell}^{0,\star} = \frac{p_{g,1}  \ddot{r}_{0; g,1}^{(g+\ell)}}{\sigma_0^{2,\star}}~,
	\end{equation}
	with $p_{g,1} = \Prob(S_i = g, Q_i = 1) = p/2$ under the parameters stated earlier.
	The indicator $R_0(i,t)$ equals $Q_i$ for cohort $g_1$ at $t = 2$ and for cohort $g_2$ at $t = 3$, and is zero otherwise. The relevant marginal shares of $R_0$ needed to form the three-way-demeaned residual are: the unit-level mean is $1/4$ for treated-eligible cells in either cohort and $0$ elsewhere; the group-by-time mean $\bar R_{0,g,t}$ equals $p$ at $(g_1, 2)$ and $(g_2, 3)$ and zero elsewhere; the eligibility-by-time mean $\bar R_{0,Q=1,t}$ equals $1/2$ at $t \in \{2, 3\}$ and zero elsewhere; the group all-time mean is $p/4$ for each group; the eligibility all-time mean is $1/4$ for $Q = 1$; the time-only mean is $p/2$ at $t \in \{2, 3\}$; and the total mean is $p/4$.
	Computing the three-way-demeaned residual for a treated-eligible unit in cohort $g_1$ at its own event-time $t = g_1 = 2$,
	\begin{align}
		\ddot{R}_0(g_1, 1, 2) &= \underbracket{1}_{R_0} - \underbracket{1/4}_{\overline{R}_{0,(g_1,1),\cdot}} - \underbracket{p}_{\overline{R}_{0,g_1,2}} - \underbracket{1/2}_{\overline{R}_{0,Q=1,2}} + \underbracket{p/4}_{\overline{R}_{0,g_1,\cdot}} + \underbracket{1/4}_{\overline{R}_{0,Q=1,\cdot}} + \underbracket{p/2}_{\overline{R}_{0,\cdot,2}} - \underbracket{p/4}_{\overline{R}_{0,\cdot\cdot}} = \frac{1-p}{2}~, \label{eq:hw_example_residual}
	\end{align}
	and by symmetry $\ddot{r}_{0; g_2,1}^{(3)} = (1-p)/2$ at cohort $g_2$'s own event-time. The own-period weights are therefore equal across cohorts, and property~{(i)} of Proposition \ref{prop:hw_es_decomp} forces each to equal $1/2$. Now turn to the cross-period cell: at $t = 3$, cohort $g_1$ is at event-time $\ell = 1$ (one period after its treatment), and the analogous calculation gives
	\begin{equation}\label{eq:hw_example_cross}
		\ddot{r}_{0; g_1,1}^{(3)} = {0 - \frac{1}{4} - 0 - \frac{1}{2} + \frac{p}{4} + \frac{1}{4} + \frac{p}{2} - \frac{p}{4}} = -\frac{1-p}{2}~,
	\end{equation}
	exactly the negative of the own-period residual.
	{An identical calculation at $t = 2$ for cohort $g_2$ (its pre-period $\ell = -1$) yields $\ddot{r}_{0; g_2,1}^{(2)} = -(1-p)/2$. The partial-residual variance is $\sigma_0^{2,\star} = p(1-p)/2$ (summing over the four nonzero cells in each of the eligible and ineligible strata), so \eqref{eq:hw_example_setup} gives $\omega_{g_1,1}^{0,\star} = \omega_{g_2,-1}^{0,\star} = -1/2$, which yields \eqref{eq:hw_example_target}.}
	
	The cross-period weights have the same magnitude as the own-period weights but opposite sign. Under dynamic effects with $\CATT(g_1, 1) > 0$ and no anticipation, $\alpha_0$ is biased {downward} relative to the average of the two own-period CATTs, with the bias growing in $\CATT(g_1,1)$; only under event-time homogeneity $\CATT(g,\ell) = \ATT_\ell$ does the contamination vanish, as formalized in Proposition \ref{prop:hw_es_homo} below.
\end{ex}

\end{document}